\theoremstyle{plain}
\newtheorem{thm}{Theorem}
\newtheorem{lem}{Lemma}
\newtheorem{cor}{Corollary}
\newtheorem{pro}{Proposition}
\DeclareMathOperator{\tr}{tr}
\newcommand{\hil}{\mathcal{H}}
\newcommand{\linear}{\mathcal{L}}
\newcommand{\id}{\mathcal{I}}
\begin{document}
\title{Upper bounds on probabilities in channel measurements on qubit channels and their applications}
\author[1]{Taihei Kimoto\footnote{kimoto.taihei.22e@st.kyoto-u.ac.jp}}
\author[2]{Takayuki Miyadera\footnote{miyadera@mi.meijigakuin.ac.jp}}
\affil[1]{Department of Nuclear Engineering, Kyoto University, Kyoto daigaku-katsura\\
Nishikyo-ku, Kyoto 615-8540,
Japan}
\affil[2]{Institute for Mathematical Informatics, Meiji Gakuin University, Kanagawa 244-8539, Japan}
\date{}

\maketitle
\begin{abstract}
One of the fundamental tasks in quantum information processing is to measure the quantum channels.
Similar to measurements of quantum states, measurements of quantum channels are inherently stochastic, that is, quantum theory provides a formula to calculate the probability of obtaining an outcome.
The upper bound on each probability associated with the measurement outcome of the quantum channels is a fundamental and important quantity.
In this study, we derived the upper bounds of the probability in a channel measurement for specific classes of quantum channels.
We also present two applications for the upper bounds.
The first is the notion of convertibility considered by Alberti and Uhlmann and the second is the detection problem of a quantum channel.
These applications demonstrate the significance of the obtained upper bounds.
\end{abstract}

\section{Introduction}
A characteristic property of quantum theory is that it provides a formula for the probabilities in measurements.
In other words, the probability of obtaining an outcome can be calculated using the Born rule.
While quantum states are often regarded as the basic targets to be measured, quantum channels can also be measured.
Measurement of quantum channels is a fundamental task because channels are the key components of important protocols or algorithms in quantum information processing\cite{nielsen_chuang_2010}.
In Ref. \citen{PhysRevA.77.062112}, a framework for measuring quantum channels is formulated, and a formula for calculating the probabilities associated with such measurements is derived.
Obviously, an upper bound on probability is a fundamental and important quantity, as it limits the certainty of an outcome occurring.

Let us consider a game between Alice and Bob to illustrate the significance of an upper bound on a probability in a channel measurement.
Alice has a device for measuring channels and acts as a referee, while Bob has ability to prepare a channel and acts as a challenger.
The game proceeds as follows:
\begin{enumerate}
\item Alice chooses a measurement outcome and communicates to Bob the device that she has and the selected outcome.
\item Based on the information from Alice, Bob sends a channel to Alice.
\item Alice measures the received channel using the device.
\end{enumerate}
Bob wins the game if the measurement outcome matches the one specified by Alice in the first step.
Clearly, Bob's probability of winning is determined by the probability of obtaining the correct outcome, and the upper bound on this probability is the limit of his winning probability.
Furthermore, various classes of quantum channels exist\cite{Heinosaari_Ziman_2011}.
Because Bob may be able to prepare channels in a certain class, it is logical to consider such an upper bound for each class.
This study aims to derive an upper bound on the probability of channel measurement for a typical subset of quantum channels.
One of the related quantities is already known as the fully entangled fraction (FEF)\cite{PhysRevA.54.3824}.
Although the FEF is often used in the context of bipartite systems, through the so-called Choi--Jamiołkowski isomorphism\cite{CHOI1975285,JAMIOLKOWSKI1972275}, it can be utilized in situations where quantum channels are discussed.
In Ref. \citen{https://doi.org/10.48550/arxiv.2205.10108}, we derived an uncertainty relation for measurements of random unitary channels using the FEF.

As an application, we relate the derived upper bounds to the notion of convertibility discussed by Alberti and Uhlmann\cite{ALBERTI1980163}, examining this in the context of various classes of quantum channels.
Additionally, we address the problem of detecting the properties of quantum channels.
More specifically, we provide criteria verifying a given unital channel is not entanglement breaking.
Consequently, the upper bounds on the probabilities of channel measurements are not only fundamental quantities but also practical tools.

This paper is organized as follows.
Section \ref{sec:pre}, summarizes the preliminaries.
Section \ref{sec:est}, the main section, derives upper bounds for several classes of quantum channels.
Section \ref{sec:jc} and Section \ref{sec:id} present the applications of these upper bounds, indicating their significance.
Section \ref{sec:conc} provides the conclusions drawn from the study.
Appendix provides the proof of the lemma used in the main text.

\section{Preliminaries} \label{sec:pre}
In this section, we summarize the preliminary knowledge used in this study.

A quantum system is represented by a Hilbert space $\hil$.
In this study, all quantum systems are qubits unless otherwise specified.
The set of all linear operators on $\hil$ is denoted by $\linear(\hil)$.
A quantum state is expressed using a density operator, $\rho$ which is a linear operator that satisfies the conditions $\rho\geq0$ and $\tr\rho=1$.
Here, the first condition indicates that $\rho$ is a positive operator, and $\tr$ in the second condition denotes the trace.
An effect $E$ is a positive operator that satisfies the condition $E\leq I$, where $I$ denotes the identity operator.
Let us consider a measurement of quantum states with a finite outcome set $\Omega$.
This measurement is expressed by a positive-operator-valued measure (POVM), which is a family of effects $\{E_{m}\}_{m\in\Omega}$ normalized such that $\sum_{m\in\Omega}E_{m}=I$.
For a quantum state $\rho$, the probability of obtaining an outcome $m$ is given by the Born rule
\begin{equation}
p_{m}(\rho)=\tr[E_{m}\rho]. \label{eq:born}
\end{equation}
A quantum channel is described as a linear, completely positive, and trace preserving map $\Psi:\linear(\hil)\rightarrow\linear(\hil)$.
Owing to the Choi--Jamiołkowski isomorphism, the completely positive condition of $\Psi$ is equivalent to
\begin{equation}
J_{\Psi}\geq 0. \label{eq:cj_op}
\end{equation}
Here, the left-hand side is given by
\begin{equation}
J_{\Psi}=(\id\otimes\Psi)P'_{+},
\end{equation}
where $\id$ represents the identity map, and $P'_{+}$ denotes the unnormalized maximally entangled state defined as
\begin{equation}
P'_{+}=\ket{\psi'_{+}}\bra{\psi'_{+}}=\sum^{1}_{i,j=0}\ket{i}\bra{j}\otimes\ket{i}\bra{j}.
\end{equation}

Next, we present a method for measuring quantum channels, and therefore, we employ the framework formulated in Ref. \citen{PhysRevA.77.062112}.
Let us consider a quantum channel $\Psi$ on $\linear(\hil)$.
We can observe how $\Psi$ transforms a quantum state.
Hence, a simple approach to measure $\Psi$ is inputting a known state $\rho$ and conducting a known POVM measurement $\{E_{m}\}_{m\in\Omega}$ on the output state.
In general, $\rho$ can be chosen as a state correlated with an additional system $\hil_{anc}$, termed an ancilla.
In this case, $\{E_{m}\}_{m\in\Omega}$ is applied to the system composed of the output and the ancilla.
While $\hil$ represents a qubit, we assume the ancilla to be an arbitrary finite dimensional system.
Thus, a measurement of quantum channels is defined by the input state $\rho$ on $\hil_{anc}\otimes\hil$ and the POVM $\{E_{m}\}_{m\in\Omega}$ on $\hil_{anc}\otimes\hil$.
For simplicity, we denote this measurement as the couple $\mathcal{T}=(\rho, \{E_{m}\}_{m\in\Omega})$.
From (\ref{eq:born}), the probability of obtaining an outcome $m$ is given by
\begin{equation}
p_{m}(\Psi)=\tr\left[E_{m}(\id\otimes\Psi)\rho\right].
\end{equation}
In Ref. \citen{PhysRevA.77.062112}, it was demonstrated that the right-hand side can be rewritten as
\begin{equation}
p_{m}(\Psi)
=\tr\left[S_{m}J_{\Psi}\right].
\end{equation}
Here, $S_{m}$ represents a positive operator determined by $\mathcal{T}$, and termed the process-channel effect.
More specifically, there exists a completely positive map $\Upsilon_{\rho}$ such that $\rho=(\Upsilon_{\rho}\otimes\id)(P'_{+})$, and $S_{m}$ is defined as $S_{m}=(\Upsilon^{*}_{\rho}\otimes\id)(E_{m})$, where $\Upsilon^{*}_{\rho}$ denotes the dual map of $\Upsilon_{\rho}$ with respect to the Hilbert--Schmidt inner product.
The family $\{S_{m}\}_{m\in\Omega}$ is referred to as process POVM (PPOVM).
Unlike POVMs, $\{S_{m}\}_{m\in\Omega}$ satisfies the normalization condition
\begin{equation}
\sum_{m\in\Omega}S_{m}=(\tr_{anc}\rho)^{T}\otimes I.
\end{equation}

Let us illustrate two typical examples of PPOVMs.
The first pertains to the measurement of quantum channels without involving ancilla systems.
This measurement type is termed ancilla-free and is defined by $(\rho,\{E_{m}\}_{m\in\Omega})$, where $\rho$ is a density operator on $\hil$ and $\{E_{m}\}_{m\in\Omega}$ is a POVM on $\hil$.
In this case, the PPOVM is given by
\begin{equation}
\{\rho^{T}\otimes E_{m}\}_{m\in\Omega}.
\end{equation}

The second example corresponds to a measurement utilizing the maximally entangled state $P_{+}=P'_{+}/2$ as the input state, specified by $(P_{+},\{E_{m}\}_{m\in\Omega})$, where $\{E_{m}\}_{m\in\Omega}$ is a POVM on $\hil_{anc}\otimes\hil$.
The PPOVM is given by
\begin{equation}
\left\{\frac{1}{2}E_{m}\right\}_{m\in\Omega}.
\end{equation}

We frequently employ these PPOVMs as examples in this study.

\section{Estimation of probabilities in channel measurements} \label{sec:est}
Let us consider a nonzero process-channel effect $S$ such that $S\leq\rho^{T}_{0}\otimes I$ where $\rho_{0}$ is a density operator.
The probability associated with $S$ is expressed as
\begin{equation}
p(\Psi)=\tr\left[SJ_{\Psi}\right], \label{eq:ch_born}
\end{equation}
where $\Psi$ belongs to a quantum channel class $\mathcal{X}$.
As mentioned in the introduction, the primary objective of this study is to estimate this value.
By defining $\tau_{S}=S/\left(\tr{S}\right)$, we can simplify (\ref{eq:ch_born}) as
\begin{equation}
p(\Psi)=\left(\tr{S}\right)\tr\left[\tau_{S}J_{\Psi}\right].
\end{equation}
Hence, for an arbitrary density operator $\tau$ on $\hil\otimes\hil$, let us examine the quantity
\begin{equation}
\tr[\tau J_{\Psi}],
\end{equation}
and its maximum
\begin{equation}
C(\tau,\mathcal{X})=\max_{\Psi'\in\mathcal{X}}\tr[\tau J_{\Psi'}]. \label{eq:def_ub}
\end{equation}
It is evident that the inequality
\begin{equation}
p(\Psi)\leq\left(\tr{S}\right)C(\tau_{S},\mathcal{X})
\end{equation}
holds for any $\Psi\in\mathcal{X}$; therefore, we need to compute or estimate (\ref{eq:def_ub}).
In the following subsections, we undertake this task for typical classes of quantum channels.

\subsection{Depolarizing channels}
A depolarizing channel is characterized by a quantum state $\rho$ on $\hil$, and it is defined by
\begin{equation}
\Psi_{\rho}(A)=(\tr A)\rho,
\end{equation}
for a linear operator $A$ in $\linear(\hil)$.
We denote the set of depolarizing channels by
\begin{equation}
\mathcal{D}=\Set{\Psi_{\rho}|\text{$\rho$ is a quantum state on $\hil$.}},
\end{equation}
and we derived the following proposition.
\begin{pro} \label{pro:gdc}
In the case of $\mathcal{D}$, (\ref{eq:def_ub}) becomes
\begin{equation}
C(\tau,\mathcal{D})=\frac{1}{2}\left(1+\|\bm{b}(\tau_{2})\|\right), \label{eq:bound_gd}
\end{equation}
where $\tau_{2}$ is defined as $\tau_{2}=\tr_{1}\tau$.
The vector $\bm{b}(\cdot)$ denotes the Bloch vector, an element of the Euclidean space $\mathbb{R}^{3}$ whose entry is given by
\begin{equation}
b(\cdot)_{i}=\tr[\sigma_{i}(\cdot)], \label{eq:def_bloch}
\end{equation}
where $\sigma_{i}$ is the $i$th Pauli matrix.
The norm $\|\cdot\|$ represents the Euclidean norm, defined as
\begin{equation}
\|\bm{x}\|=\sqrt{x^{2}_{1}+x^{2}_{2}+x^{2}_{3}} \label{eq:def_euc_norm}
\end{equation}
for all $\bm{x}\in\mathbb{R}^{3}$. 
\end{pro}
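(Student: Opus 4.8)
The plan is to compute the Choi--Jamio\l kowski operator of a depolarizing channel explicitly, reduce the maximization in (\ref{eq:def_ub}) to a single eigenvalue problem, and then evaluate that eigenvalue using the Bloch parametrization of a qubit state.

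First I would compute $J_{\Psi_{\rho}}$ directly from the definition. Since $\Psi_{\rho}(\ket{i}\bra{j})=(\tr\ket{i}\bra{j})\rho=\delta_{ij}\rho$, I obtain
\[
J_{\Psi_{\rho}}=(\id\otimes\Psi_{\rho})\left(\sum_{i,j}\ket{i}\bra{j}\otimes\ket{i}\bra{j}\right)=\sum_{i}\ket{i}\bra{i}\otimes\rho=I\otimes\rho .
\]
This is the key structural fact: the Choi operator of a depolarizing channel factorizes, with the state $\rho$ residing only on the second tensor factor. As $\rho$ ranges over all qubit states, $\Psi_{\rho}$ ranges over all of $\mathcal{D}$, so maximizing over $\mathcal{D}$ is the same as maximizing over $\rho$.

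Next I would substitute this into the objective and trace out the first system. Using the partial-trace identity $\tr[\tau(I\otimes\rho)]=\tr[(\tr_{1}\tau)\rho]=\tr[\tau_{2}\rho]$, the optimization becomes
\[
C(\tau,\mathcal{D})=\max_{\rho}\tr[\tau_{2}\rho],
\]
where $\tau_{2}=\tr_{1}\tau$ is itself a qubit density operator and the maximum is over all qubit states $\rho$. This is a linear functional of $\rho$ on the convex state space, so the optimum is attained at a pure state and equals the largest eigenvalue $\lambda_{\max}(\tau_{2})$, the maximizer being the projector onto a top eigenvector of $\tau_{2}$.

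Finally I would evaluate $\lambda_{\max}(\tau_{2})$ through the Bloch representation $\tau_{2}=\tfrac{1}{2}(I+\bm{b}(\tau_{2})\cdot\bm{\sigma})$, whose eigenvalues are $\tfrac{1}{2}(1\pm\|\bm{b}(\tau_{2})\|)$, yielding $C(\tau,\mathcal{D})=\tfrac{1}{2}(1+\|\bm{b}(\tau_{2})\|)$ as claimed. The individual computations are routine; the only point deserving a little care is justifying that the supremum of $\rho\mapsto\tr[\tau_{2}\rho]$ over the state space equals the top eigenvalue rather than merely being bounded by it, which follows from the spectral decomposition of the Hermitian operator $\tau_{2}$ together with the fact that every spectral projector is a valid (unnormalized) state.
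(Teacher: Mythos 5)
Your proposal is correct and follows essentially the same route as the paper: both compute $J_{\Psi_{\rho}}=I\otimes\rho$, reduce the objective to $\max_{\rho}\tr[\tau_{2}\rho]$ via the partial trace, and then evaluate this maximum in the Bloch picture. The only (inessential) difference is the last step, where you identify the maximum with $\lambda_{\max}(\tau_{2})=\tfrac{1}{2}\left(1+\|\bm{b}(\tau_{2})\|\right)$ by spectral decomposition, while the paper expands $\rho$ in Bloch form and applies Cauchy--Schwarz to $\left<\bm{b}(\rho),\bm{b}(\tau_{2})\right>$ --- two equivalent one-line evaluations yielding the same bound and the same maximizing state.
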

\begin{proof}
For a depolarizing channel $\Psi_{\rho}$, it is simple to verify that
\begin{align}
J_{\Psi_{\rho}}
=I\otimes\rho.
\end{align}
Hence, the maximum is computed as
\begin{align}
C(\tau,\mathcal{D})
&=\max_{\Psi\in\mathcal{D}}\tr[\tau J_{\Psi}] \\
&=\max_{\rho}\tr[\tau(I\otimes\rho)] \\
&=\max_{\rho}\tr[\tau_{2}\rho] \\
&=\max_{\rho}\tr\left[\tau_{2}\frac{1}{2}\left(I+\sum^{3}_{i=1}b(\rho)_{i}\sigma_{i}\right)\right] \\
&=\frac{1}{2}\left(1+\max_{\rho}\left<\bm{b}(\rho),\bm{b}(\tau_{2})\right>\right) \\
&=\frac{1}{2}\left(1+\|\bm{b}(\tau_{2})\|\right),
\end{align}
where $b(\cdot)_{i}$ represents the $i$th element of the Bloch vector and $\left<\cdot,\cdot\right>$ denotes the Euclidean inner product.
The last expression is obtained by applying the Cauchy--Schwarz inequality.
Thus, the maximum is attained if the state is
\begin{equation}
\rho=\frac{1}{2}\left(I+\sum^{3}_{i=1}\frac{b(\tau_{2})_{i}}{\|\bm{b}(\tau_{2})\|}\sigma_{i}\right),
\end{equation}
which is normalized $\tau_{2}$ in terms of the length of the Bloch vector.
\end{proof}

\subsection{Unital channels}
Although some parts of this subsection were previously discussed in Ref. \citen{https://doi.org/10.48550/arxiv.2205.10108}, we include them here for completeness.
A quantum channel $\Psi$ is termed unital if it satisfies the condition
\begin{equation}
\Psi(I)=I.
\end{equation}
We denote the set of all unital channels by $\mathcal{B}$, as they are often referred to as bistochastic.
Unitary channels are typical examples of unital channels.
Additionally, it is evident that a mixture of unitary channel is unital.
Mixtures of unitary channels are known as random unitary channels, and we denote the set of all random unitary channels as
\begin{equation}
\mathcal{R}=\Set{\sum_{x}p_{x}U_{x}(\cdot)U^{\dagger}_{x}},
\end{equation}
where $\{p_{x}\}$ is a probability distribution, and $\{U_{x}\}$ is a family of unitary operators.
Interestingly, for qubits, the equality $\mathcal{B}=\mathcal{R}$ holds\cite{LANDAU1993107}.
Thus, we can identify unital channels with random unitary channels.
Let us consider an arbitrary random unitary channel $\Psi$ expressed as
\begin{equation}
\Psi(A)=\sum_{x}p_{x}U_{x}AU^{\dagger}_{x}.
\end{equation}
It is evident that
\begin{align}
J_{\Psi}
=\sum_{x}p_{x}J_{\Psi_{x}},
\end{align}
where $J_{\Psi_{x}}$ is defined as $J_{\Psi_{x}}=(I\otimes U_{x})P'_{+}(I\otimes U^{\dagger}_{x})$.
Consequently, the maximum is achieved by a unitary channel because the inequality
\begin{align}
\tr[\tau J_{\Psi}]
&=\sum_{x}p_{x}\tr[J_{\Psi_{x}}] \\
&\leq\sum_{x}p_{x}\max_{\Phi\in\mathcal{U}}\tr[J_{\Phi}] \\
&=\max_{\Phi\in\mathcal{U}}\tr[J_{\Phi}]
\end{align}
holds.
Here, we represent the set of unitary channels by $\mathcal{U}$.
The maximum can be rewritten as
\begin{align}
C(\tau,\mathcal{R})
&=\max_{\Psi\in\mathcal{R}}\tr[\tau J_{\Psi}] \\
&=\max_{\Phi\in\mathcal{U}}\tr[\tau J_{\Phi}] \\
&=2\max_{U}\bra{\psi_{+}}(I\otimes U^{\dagger})\tau(I\otimes U)\ket{\psi_{+}}, \label{eq:c_r_fef}
\end{align}
where the maximization in the third equality is over all unitary operators, and $\ket{\psi_{+}}$ is the normalized maximally entangled state expressed as $\ket{\psi_{+}}=\ket{\psi'_{+}}/\sqrt{2}$.
The last expression is known as the FEF\cite{PhysRevA.54.3824} of $\tau$ defined by
\begin{equation}
f(\tau)=\max_{U}\bra{\psi_{+}}(I\otimes U^{\dagger})\tau(I\otimes U)\ket{\psi_{+}}.
\end{equation}
Fortunately, the analytical solution for the FEF of bipartite qubit states is well-established in several forms.
Especially, the following expression derived in Ref. \citen{PhysRevA.78.032332} provides an explicit formula.
\begin{thm}
Let $\sigma_{1}$, $\sigma_{2}$, and $\sigma_{3}$ be the Pauli matrices.
The FEF of $\tau$ is provided by
\begin{equation}
f(\tau)=\frac{1}{4}\left(1+\left\|N(\tau)^{T}N(P_{+})\right\|_{KF}\right), \label{eq:fef_sol}
\end{equation}
where $N(\cdot)$ is the so-called correlation matrix expressed as
\begin{equation}
N(\cdot)_{ij}=\tr[(\cdot)(\sigma_{i}\otimes\sigma_{j})]. \label{eq:def_corr_mat}
\end{equation}
The symbol $^{T}$ represents the transpose.
$\|\cdot\|_{KF}$ is the Ky Fan norm defined by
\begin{equation}
\|X\|_{KF}=\tr\sqrt{XX^{T}} \label{eq:def_kf_norm}
\end{equation}
for an arbitrary real matrix $X$.
\end{thm}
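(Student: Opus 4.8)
The plan is to reduce the optimization over unitaries that defines $f(\tau)$ to a linear optimization over rotation matrices, and then to evaluate that optimization by a singular-value (orthogonal Procrustes) argument. First I would expand both $\tau$ and the reference state in the two-qubit Pauli basis $\{\sigma_\mu\otimes\sigma_\nu\}_{\mu,\nu=0}^{3}$ (with $\sigma_0=I$), writing $\tau=\frac14\sum_{\mu\nu}\tr[\tau\,\sigma_\mu\otimes\sigma_\nu]\,\sigma_\mu\otimes\sigma_\nu$, so that the block of coefficients with $\mu,\nu\ge 1$ is exactly the correlation matrix $N(\tau)$ of (\ref{eq:def_corr_mat}). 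The goal is to show that all the overlap structure collapses onto this block.

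The two computational identities I would use are $(A\otimes I)\ket{\psi_+}=(I\otimes A^{T})\ket{\psi_+}$ and $\braket{\psi_+|(I\otimes M)|\psi_+}=\tfrac12\tr M$. The first, applied with $A=\sigma_i$, gives $\braket{\psi_+|\sigma_i\otimes\sigma_j|\psi_+}=\tfrac12\tr[\sigma_i^{T}\sigma_j]$, which I would evaluate to obtain $N(P_+)=\mathrm{diag}(1,-1,1)$; in particular every term of $\tau$ carrying exactly one identity factor drops out of the overlap. Parametrizing the conjugation by $U^{\dagger}\sigma_\nu U=\sum_k R_{k\nu}\sigma_k$, where $R=R_U$ is the associated rotation, I expect the integrand to collapse to
\[
\braket{\psi_+|(I\otimes U^{\dagger})\tau(I\otimes U)|\psi_+}=\tfrac14\bigl(1+\tr[R^{T}N(P_+)^{T}N(\tau)]\bigr),
\]
the constant $1$ being the $\mu=\nu=0$ contribution, since $\tr\tau=1$. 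Because the map $U\mapsto R_U$ is surjective onto $SO(3)$, maximizing over $U$ becomes $f(\tau)=\frac14\bigl(1+\max_{R\in SO(3)}\tr[R\,N(P_+)^{T}N(\tau)]\bigr)$ after the harmless relabeling $R\to R^{T}$.

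It then remains to establish $\max_{R\in SO(3)}\tr[R\,M]=\|M\|_{KF}$ for $M=N(P_+)^{T}N(\tau)$, whose Ky Fan norm equals that of $N(\tau)^{T}N(P_+)$ by transpose-invariance of (\ref{eq:def_kf_norm}). The natural tool is the singular value decomposition $M=W\Sigma V^{T}$: then $\tr[RM]=\tr[\Sigma\,V^{T}RW]$ with $V^{T}RW$ orthogonal, and the maximum of $\tr[\Sigma O]$ over orthogonal $O$ is attained by aligning $O$ with the identity, giving $\sum_i s_i(M)=\|M\|_{KF}$.

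The step I expect to be the main obstacle is precisely the orientation constraint hidden in this last maximization. The admissible matrix $O=V^{T}RW$ is forced to satisfy $\det O=\det(VW)$, so if only proper rotations are available the unrestricted optimum $\|M\|_{KF}$ need not be attainable, and one would instead obtain $s_1+s_2+\mathrm{sgn}(\det M)\,s_3$. Hence the crux is to argue that restricting to $SO(3)$ does not lower the value; this must exploit the orientation reversal carried by $N(P_+)$ (note $\det N(P_+)=-1$, so $\det M=-\det N(\tau)$) together with the positivity constraints obeyed by a genuine two-qubit correlation matrix. Concretely, I would absorb the fixed factor $N(P_+)$ into $R$ so as to trade the proper-rotation problem for one over orientation-reversing orthogonal maps, and then track the sign of $\det N(\tau)$ to verify that the smallest singular value enters favorably; carrying out this sign analysis rigorously — rather than the Pauli-basis bookkeeping of the earlier steps — is where the real work lies, and where I would concentrate the argument. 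If that sign condition cannot be closed in full generality, the stated identity still survives as the upper bound $f(\tau)\le\frac14(1+\|N(\tau)^{T}N(P_+)\|_{KF})$, which is exactly the inequality needed for the probability estimates in (\ref{eq:def_ub}).
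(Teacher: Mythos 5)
You should know at the outset that the paper contains no proof of this theorem: it is imported verbatim, with a citation to Li, Fei, and Wang (Ref.~11 of the bibliography), so there is no internal argument to compare yours against. Your reduction is the standard derivation and is carried out correctly: the identity $\braket{\psi_{+}|A\otimes B|\psi_{+}}=\frac{1}{2}\tr[AB^{T}]$ kills every Pauli term carrying exactly one identity factor, gives $N(P_{+})=\mathrm{diag}(1,-1,1)$, and, since $U\mapsto R_{U}$ is onto $SO(3)$, collapses the FEF to $f(\tau)=\frac{1}{4}\bigl(1+\max_{R\in SO(3)}\tr[RM]\bigr)$ with $M=N(P_{+})^{T}N(\tau)$, exactly as you wrote.

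The orientation obstruction you isolated as ``where the real work lies'' is not a removable technicality: it is decisive, and your fallback to an inequality is the correct end point, because the equality as stated is false. Correlation matrices of maximally entangled two-qubit states are precisely the orthogonal matrices of determinant $-1$, so the Procrustes maximization gives $f(\tau)=\frac{1}{4}\left(1+s_{1}+s_{2}-\operatorname{sgn}(\det N(\tau))\,s_{3}\right)$, where $s_{1}\geq s_{2}\geq s_{3}$ are the singular values of $N(\tau)$; equality with $\frac{1}{4}(1+\|N(\tau)^{T}N(P_{+})\|_{KF})=\frac{1}{4}(1+s_{1}+s_{2}+s_{3})$ holds if and only if $\det N(\tau)\leq0$. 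The positivity constraint you hoped would force the favorable sign fails: $\det N>0$ occurs inside the Bell tetrahedron. Concretely, $\tau_{0}=\frac{1}{3}\left(P_{\Phi^{+}}+P_{\Phi^{-}}+P_{\Psi^{+}}\right)=\frac{1}{3}\left(I-P_{\Psi^{-}}\right)$ (Bell-state projectors) is a legitimate state with $N(\tau_{0})=\frac{1}{3}I$, so the right-hand side of (\ref{eq:fef_sol}) equals $\frac{1}{2}$, while $\braket{\Psi|\tau_{0}|\Psi}=\frac{1}{3}\left(1-\left|\braket{\Psi|\Psi^{-}}\right|^{2}\right)\leq\frac{1}{3}$ for every maximally entangled $\ket{\Psi}$, whence $f(\tau_{0})=\frac{1}{3}$. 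So the theorem is true only as the upper bound $f(\tau)\leq\frac{1}{4}\left(1+\|N(\tau)^{T}N(P_{+})\|_{KF}\right)$ --- which is in fact what the cited work establishes --- and the paper mis-states it as an equality. As you anticipated, the inequality suffices for all the upper-bound estimates built on (\ref{eq:def_ub}); but note that the equality claims derived from it inherit the restriction: (\ref{eq:bound_unital}) and Proposition \ref{pro:bound_c_sp} fail for $\tau_{0}$ (there $(\tau_{0})_{2}=I/2$ yet $C(\tau_{0},\mathcal{C})=\frac{2}{3}<1$), while the paper's actual applications in Sections \ref{sec:jc} and \ref{sec:id} happen to live in the safe regime $\det N(\tau)\leq0$ or $s_{3}=0$, where your sign analysis closes and equality genuinely holds.
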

While (\ref{eq:fef_sol}) provides us a precise formula, it can be simplified further.
To do this, it is useful to express $P'_{+}$ as
\begin{align}
P'_{+}
=\frac{1}{2}\left(\sigma_{0}\otimes\sigma_{0}+\sigma_{1}\otimes\sigma_{1}-\sigma_{2}\otimes\sigma_{2}+\sigma_{3}\otimes\sigma_{3}\right). \label{eq:P'_dec}
\end{align}
From this equality, $P_{+}$ is expanded as
\begin{equation}
P_{+}=\frac{1}{4}\left(\sigma_{0}\otimes\sigma_{0}+\sigma_{1}\otimes\sigma_{1}-\sigma_{2}\otimes\sigma_{2}+\sigma_{3}\otimes\sigma_{3}\right),
\end{equation}
and its correlation matrix is computed as
\begin{equation}
N(P_{+})=
\begin{pmatrix}
1 & 0 & 0 \\
0 & -1 & 0 \\
0 & 0 & 1 \\
\end{pmatrix}
. \label{eq:P+_mat}
\end{equation}
Consequently, $N(P_{+})N(P_{+})^{T}=I$, and the second term of (\ref{eq:fef_sol}) becomes
\begin{align}
\left\|N(\tau)^{T}N(P_{+})\right\|_{KF}
=\left\|N(\tau)\right\|_{KF}.
\end{align}
From this equality, we can rewrite (\ref{eq:fef_sol}) as
\begin{equation}
f(\tau)=\frac{1}{4}\left(1+\left\|N(\tau)\right\|_{KF}\right). \label{eq:fef_sol_v2}
\end{equation}
By substituting (\ref{eq:fef_sol_v2}) into (\ref{eq:c_r_fef}), we obtain the following proposition.
\begin{pro} \label{eq:pro_unital}
In the case of $\mathcal{R}$, (\ref{eq:def_ub}) becomes
\begin{align}
C(\tau,\mathcal{R})
&=\frac{1}{2}\left(1+\left\|N(\tau)\right\|_{KF}\right), \label{eq:bound_unital}
\end{align}
where $N(\cdot)$ is the correlation matrix defined by (\ref{eq:def_corr_mat}), and $\left\|\cdot\right\|_{KF}$ is the Ky Fan norm defined by (\ref{eq:def_kf_norm}).
\end{pro}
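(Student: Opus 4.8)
The plan is to combine two ingredients already assembled in the text: the reduction of $C(\tau,\mathcal{R})$ to the fully entangled fraction in (\ref{eq:c_r_fef}), and the closed-form expression for that fraction in (\ref{eq:fef_sol_v2}). Given both, the proposition follows by a single substitution $C(\tau,\mathcal{R})=2f(\tau)=\tfrac12(1+\|N(\tau)\|_{KF})$, so the real work lies in securing those two inputs.

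For the first ingredient, I would observe that the objective $\Psi\mapsto\tr[\tau J_\Psi]$ is linear in $\Psi$ because $\Psi\mapsto J_\Psi$ is linear, and that $\mathcal{R}$ is the convex hull of the unitary channels $\mathcal{U}$. Hence for $\Psi=\sum_x p_x U_x(\cdot)U_x^\dagger$ one has $\tr[\tau J_\Psi]=\sum_x p_x\tr[\tau J_{\Psi_x}]\le\max_{\Phi\in\mathcal{U}}\tr[\tau J_\Phi]$, so the maximum over $\mathcal{R}$ in (\ref{eq:def_ub}) is already attained on $\mathcal{U}$. Writing $J_\Phi=(I\otimes U)P'_+(I\otimes U^\dagger)$ and using $P'_+=2\ket{\psi_+}\bra{\psi_+}$ together with cyclicity of the trace turns $\tr[\tau J_\Phi]$ into $2\bra{\psi_+}(I\otimes U^\dagger)\tau(I\otimes U)\ket{\psi_+}$, and maximizing over $U$ reproduces $2f(\tau)$ by the definition of the fully entangled fraction, which is exactly (\ref{eq:c_r_fef}).

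For the second ingredient, I would take the cited formula $f(\tau)=\tfrac14(1+\|N(\tau)^T N(P_+)\|_{KF})$ and simplify the norm. The decisive point is that $N(P_+)$, read off from the Pauli expansion of $P_+$ as in (\ref{eq:P+_mat}), equals $\mathrm{diag}(1,-1,1)$ and is therefore orthogonal, so $N(P_+)N(P_+)^T=I$. Then $XX^T=N(\tau)^T N(\tau)$ for $X=N(\tau)^T N(P_+)$, and since $\tr\sqrt{XX^T}$ depends only on the singular values---which are unchanged by right multiplication by an orthogonal matrix and by transposition---one gets $\|N(\tau)^T N(P_+)\|_{KF}=\|N(\tau)\|_{KF}$, i.e.\ (\ref{eq:fef_sol_v2}). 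Substituting into $C(\tau,\mathcal{R})=2f(\tau)$ closes the argument.

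I expect no serious obstacle: the only genuinely nontrivial input, the analytic FEF formula, is imported from the cited theorem, while the remaining steps are the convexity reduction and the singular-value invariance of the Ky Fan norm. The single subtle check is that right multiplication by the orthogonal matrix $N(P_+)$ and transposition both leave $\tr\sqrt{\cdot}$ invariant, which I would verify directly from the definition (\ref{eq:def_kf_norm}).
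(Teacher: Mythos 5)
Your proposal is correct and follows essentially the same route as the paper: the convexity reduction from $\mathcal{R}$ to $\mathcal{U}$, the identification of the resulting maximum with $2f(\tau)$ as in (\ref{eq:c_r_fef}), and the simplification of the cited FEF formula (\ref{eq:fef_sol}) to (\ref{eq:fef_sol_v2}) using the orthogonality of $N(P_{+})$. Your justification of $\left\|N(\tau)^{T}N(P_{+})\right\|_{KF}=\left\|N(\tau)\right\|_{KF}$ via singular-value invariance matches the paper's computation based on $N(P_{+})N(P_{+})^{T}=I$.
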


\subsection{Unital entanglement breaking channels}
A quantum channel $\Psi$ is termed entanglement breaking channel if $(\id\otimes\Psi)(\rho)$ becomes separable for an arbitrary quantum state $\rho$ on $\hil\otimes\hil$\cite{doi:10.1142/S0129055X03001709}.
In this subsection, we focus on the set of all unital entanglement breaking channels, which we denote as $\mathcal{UE}$.
In Ref. \citen{doi:10.1142/S0129055X03001710}, the entanglement breaking channels on a qubit are analyzed in detail, and a necessary and sufficient condition for unital entanglement breaking channels is derived.
To introduce this condition, suppose $\Psi$ is unital.
The set $\{I,\sigma_{1},\sigma_{2},\sigma_{3}\}$ comprises a base of the vector space $\linear(\hil)$.
Thus, $\Psi$ is represented by a matrix $T$ whose entry is given by
\begin{equation}
T_{\mu\nu}=\frac{1}{2}\tr[\sigma_{\mu}\Psi(\sigma_{\nu})],
\end{equation}
where $\mu,\nu\in\{0,1,2,3\}$, and $\sigma_{0}=I$.
In Ref. \citen{904522}, it was demonstrated that $\Psi$ can be decomposed as
\begin{equation}
\Psi=\Psi_{V}\circ\Lambda\circ\Psi_{W}, \label{eq:ueb_dec}
\end{equation}
where $\Psi_{V}$ and $\Psi_{W}$ are unitary channels implemented by unitary operators $V$ and $W$, respectively, and $\Lambda$ is a quantum channel with the following matrix representation:
\begin{equation}
\begin{pmatrix}
1 & 0 & 0 & 0 \\
0 & \lambda_{1} & 0 & 0 \\
0 & 0 & \lambda_{2} & 0 \\
0 & 0 & 0 & \lambda_{3} \\
\end{pmatrix}
.
\end{equation}
The necessary and sufficient condition that was derived in Ref. \citen{doi:10.1142/S0129055X03001710} is as follows.
\begin{thm} \label{thm:ue_iff}
A unital channel $\Psi$ is entanglement breaking if and only if
\begin{equation}
\sum^{3}_{i=1}|\lambda_{i}|\leq1.
\end{equation}
\end{thm}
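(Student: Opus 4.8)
The plan is to strip the two unitary channels off via local-unitary invariance of separability, reduce everything to the Choi matrix of $\Lambda$, and then read off the entanglement-breaking condition from the partial-transpose criterion.

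First I would exploit the decomposition (\ref{eq:ueb_dec}) together with the standard fact that a channel is entanglement breaking if and only if its Choi matrix is separable. Writing $J_{\Psi}=(\id\otimes\Psi)(P'_{+})$ and pushing the two unitary channels through the construction, the identity $(I\otimes W)\ket{\psi'_{+}}=(W^{T}\otimes I)\ket{\psi'_{+}}$ lets me move $\Psi_{W}$ onto the first tensor factor. Since $W^{T}\otimes I$ commutes with $\id\otimes\Lambda$ and $\Psi_{V}$ contributes $I\otimes V$, the net effect is $J_{\Psi}=(W^{T}\otimes V)J_{\Lambda}(W^{T}\otimes V)^{\dagger}$, a purely local-unitary conjugate of $J_{\Lambda}$. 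Because separability is invariant under local unitaries, $\Psi$ is entanglement breaking if and only if $\Lambda$ is, which removes $V$ and $W$ from the problem and leaves only $\lambda_{1},\lambda_{2},\lambda_{3}$.

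Next I would compute $J_{\Lambda}$ explicitly. Applying $\id\otimes\Lambda$ to the Pauli expansion (\ref{eq:P'_dec}) of $P'_{+}$ and using $\Lambda(\sigma_{i})=\lambda_{i}\sigma_{i}$ (with $\lambda_{0}=1$) produces a Bell-diagonal operator whose normalized form $\rho_{\Lambda}=J_{\Lambda}/2$ has correlation matrix $\mathrm{diag}(\lambda_{1},-\lambda_{2},\lambda_{3})$. As $\rho_{\Lambda}$ is a two-qubit state, the Peres--Horodecki criterion applies: $\Lambda$ is entanglement breaking (equivalently $\rho_{\Lambda}$ is separable) if and only if the partial transpose $\rho_{\Lambda}^{T_{2}}$ is positive. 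The partial transpose sends $\sigma_{2}\mapsto-\sigma_{2}$ on the second factor, so the correlation vector becomes $(\lambda_{1},\lambda_{2},\lambda_{3})$, and the four eigenvalues of a Bell-diagonal state are $\tfrac14(1+s_{1}\lambda_{1}+s_{2}\lambda_{2}+s_{3}\lambda_{3})$ with an odd number of minus signs among the $s_{i}\in\{+1,-1\}$.

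Finally I would combine these four positivity inequalities with the four coming from the complete positivity of $\Lambda$ itself. Since $\Lambda$ is a genuine channel, $J_{\Lambda}\geq0$ already holds, and its eigenvalue inequalities are exactly the $\tfrac14(1+s_{1}\lambda_{1}+s_{2}\lambda_{2}+s_{3}\lambda_{3})\geq0$ with an even number of minus signs. Together the CP and PPT inequalities therefore range over all eight sign patterns $s\in\{+1,-1\}^{3}$, so the entanglement-breaking condition is equivalent to $1+s_{1}\lambda_{1}+s_{2}\lambda_{2}+s_{3}\lambda_{3}\geq0$ for every $s$. Minimizing the left-hand side by taking $s_{i}=-\mathrm{sign}(\lambda_{i})$ gives $1-(|\lambda_{1}|+|\lambda_{2}|+|\lambda_{3}|)\geq0$, i.e. $\sum_{i=1}^{3}|\lambda_{i}|\leq1$, as claimed. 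I expect the main obstacle to be the bookkeeping in the first step: correctly tracking the transposes and conjugations so that the two unitary channels genuinely collapse into a single local-unitary conjugation of $J_{\Lambda}$, and invoking the right form of the equivalence ``entanglement breaking $\Leftrightarrow$ separable Choi matrix.'' The remaining steps are the routine eigenvalue computation for Bell-diagonal states together with the Peres--Horodecki theorem, both standard once the reduction is in place.
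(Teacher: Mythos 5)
Your proposal is correct, but there is nothing in the paper to compare it against: the paper does not prove this theorem, it imports it verbatim from Ruskai's analysis of qubit entanglement breaking channels (the reference cited immediately before the statement). Your blind reconstruction is essentially the standard argument from that literature, and every step checks out. The reduction is sound: the identity $J_{\Psi}=(W^{T}\otimes V)J_{\Lambda}(W^{T}\otimes V)^{\dagger}$ following from (\ref{eq:ueb_dec}) is exactly the relation the paper itself uses later in the proof of Proposition \ref{pro:ueb}, and separability of the Choi state is local-unitary invariant, so $V$ and $W$ drop out. The bookkeeping is also right: by (\ref{eq:P'_dec}) the normalized Choi state $\rho_{\Lambda}=J_{\Lambda}/2$ is Bell diagonal with correlation matrix $\mathrm{diag}(\lambda_{1},-\lambda_{2},\lambda_{3})$; its positivity (automatic, since $\Lambda$ is a channel) yields the four inequalities $1+s_{1}\lambda_{1}+s_{2}\lambda_{2}+s_{3}\lambda_{3}\geq0$ with even-parity sign patterns, while positivity of the partial transpose (which flips the sign of the $\sigma_{2}$ component) yields the four odd-parity patterns, and the eight together are equivalent to $\sum_{i}|\lambda_{i}|\leq1$ by choosing $s_{i}=-\mathrm{sign}(\lambda_{i})$. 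Both directions of the equivalence are genuinely covered because you invoke the two correct black boxes: the Horodecki--Shor--Ruskai theorem identifying entanglement breaking with separability of the Choi matrix (the nontrivial direction being separable Choi $\Rightarrow$ entanglement breaking), and the Peres--Horodecki criterion, which is necessary \emph{and sufficient} for separability precisely in the $2\times2$ setting at hand. What your route buys is a self-contained, fully elementary proof of a result the paper treats as external input; the cost is only the reliance on those two standard theorems, which the paper cites anyway.
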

We use this condition to derive the following proposition:
\begin{pro} \label{pro:ueb}
In the case of $\mathcal{UE}$, (\ref{eq:def_ub}) is bounded from above as
\begin{equation}
C(\tau,\mathcal{UE})\leq\frac{1}{2}\left(1+\|N(\tau)\|\right). \label{eq:bound_ue}
\end{equation}
The norm $\|\cdot\|$ signifies the matrix norm induced by the Euclidean norm on $\mathbb{R}^{3}$; i.e., it is defined by
\begin{equation}
\|A\|=\max_{\bm{x},\|\bm{x}\|=1}\|A\bm{x}\| \label{eq:def_mat_norm}
\end{equation}
for all matrices $A$ on $\mathbb{R}^{3}$.
\end{pro}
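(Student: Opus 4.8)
The plan is to reduce the quantity $\tr[\tau J_{\Psi}]$ to the trace of a real $3\times 3$ matrix and then invoke Theorem \ref{thm:ue_iff}. First I would expand everything in the Pauli basis. Writing the unital channel $\Psi$ through its matrix $T$, the unital and trace-preserving conditions force the first row and column of $T$ to be $(1,0,0,0)$, so that $\Psi(\sigma_{0})=\sigma_{0}$ and $\Psi(\sigma_{j})=\sum_{i=1}^{3}\tilde{T}_{ij}\sigma_{i}$ for $j\in\{1,2,3\}$, where $\tilde{T}$ is the lower-right $3\times 3$ block of $T$. Using the decomposition (\ref{eq:P'_dec}) of $P'_{+}$ I get $J_{\Psi}=(\id\otimes\Psi)(P'_{+})=\frac{1}{2}\sum_{\mu}\eta_{\mu}\,\sigma_{\mu}\otimes\Psi(\sigma_{\mu})$ with $(\eta_{0},\eta_{1},\eta_{2},\eta_{3})=(1,1,-1,1)$. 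Expanding $\tau$ in the Pauli basis and using $\tr[(\sigma_{\mu}\otimes\sigma_{\nu})(\sigma_{\mu'}\otimes\sigma_{\nu'})]=4\delta_{\mu\mu'}\delta_{\nu\nu'}$, orthogonality collapses the double sum and yields the identity $\tr[\tau J_{\Psi}]=\frac{1}{2}\big(1+\tr[N(P_{+})N(\tau)\tilde{T}]\big)$, where I recognize the sign pattern $(\eta_{1},\eta_{2},\eta_{3})=(1,-1,1)$ as exactly the diagonal matrix $N(P_{+})$ from (\ref{eq:P+_mat}).

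Next I would insert the canonical form (\ref{eq:ueb_dec}). Because $V$ and $W$ are qubit unitaries, the channels $\Psi_{V}$ and $\Psi_{W}$ act on the Bloch ball as rotations $O_{V},O_{W}\in SO(3)$, so $\tilde{T}=O_{V}\,\Lambda_{3}\,O_{W}$ with $\Lambda_{3}=\mathrm{diag}(\lambda_{1},\lambda_{2},\lambda_{3})$. Using cyclicity of the trace I set $M=O_{W}N(P_{+})N(\tau)O_{V}$, so that $\tr[\tau J_{\Psi}]=\frac{1}{2}\big(1+\tr[M\Lambda_{3}]\big)$. Since $O_{W}$, $N(P_{+})$ and $O_{V}$ are all orthogonal and the norm in (\ref{eq:def_mat_norm}) is invariant under multiplication by orthogonal matrices, I have $\|M\|=\|N(\tau)\|$.

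Finally I would close the estimate with the entanglement-breaking constraint. The trace is $\tr[M\Lambda_{3}]=\sum_{i=1}^{3}M_{ii}\lambda_{i}$, and Theorem \ref{thm:ue_iff} tells me that $\Psi\in\mathcal{UE}$ forces $\sum_{i}|\lambda_{i}|\leq 1$; maximizing the linear functional $\sum_{i}M_{ii}\lambda_{i}$ over this $\ell^{1}$-ball gives $\max_{i}|M_{ii}|$. Each diagonal entry obeys $|M_{ii}|=|\langle e_{i},M e_{i}\rangle|\leq\|M e_{i}\|\leq\|M\|$ by Cauchy--Schwarz and the definition (\ref{eq:def_mat_norm}), so $\tr[M\Lambda_{3}]\leq\|M\|=\|N(\tau)\|$. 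Taking the maximum over $\Psi\in\mathcal{UE}$ then yields (\ref{eq:bound_ue}).

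The computation is essentially routine; the one place demanding care is the first paragraph, where the bookkeeping of the sign vector $(\eta_{\mu})$ from (\ref{eq:P'_dec}) and its identification with $N(P_{+})$ must be handled correctly so that the cross terms assemble into $\tr[N(P_{+})N(\tau)\tilde{T}]$ and not some transposed or sign-flipped variant. The remaining ingredients---orthogonal invariance of $\|\cdot\|$, the $\ell^{1}$/$\ell^{\infty}$ duality eliminating $\Lambda_{3}$, and the diagonal-entry bound---are each short. I would also note in passing that the bound is actually saturated: taking $\Lambda_{3}=\mathrm{diag}(1,0,0)$, which lies in the octahedron $\sum_{i}|\lambda_{i}|\leq 1$, and aligning $O_{V},O_{W}$ with the leading singular vectors of $N(P_{+})N(\tau)$ attains $\|N(\tau)\|$, so (\ref{eq:bound_ue}) in fact holds with equality; only the upper bound, however, is required for the applications.
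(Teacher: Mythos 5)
Your proof is correct, and its skeleton is the same as the paper's: both rest on the King--Ruskai decomposition (\ref{eq:ueb_dec}) and close the estimate with the octahedron condition $\sum_{i}|\lambda_{i}|\leq 1$ of Theorem \ref{thm:ue_iff}. Where you differ is the middle of the argument, and your version is cleaner: the paper expands $J_{\Lambda}$ in the Pauli basis and runs a chain of inequalities culminating in the term-by-term bound $\max_{i}\left|\left<\mathbf{o}(W^{T})_{i},N(\tau)\mathbf{o}(V)_{i}\right>\right|\leq\|N(\tau)\|$, whereas you first establish the exact identity $\tr[\tau J_{\Psi}]=\frac{1}{2}\left(1+\tr[N(P_{+})N(\tau)\tilde{T}]\right)$ at the level of the $3\times 3$ transfer matrix (I checked the sign bookkeeping: with $N(P_{+})=\mathrm{diag}(1,-1,1)$ from (\ref{eq:P+_mat}) the contraction $\sum_{i,j}\eta_{j}N(\tau)_{ji}\tilde{T}_{ij}$ is indeed $\tr[N(P_{+})N(\tau)\tilde{T}]$), then use $\tilde{T}=O_{V}\Lambda_{3}O_{W}$, cyclicity, orthogonal invariance of $\|\cdot\|$, and $\ell^{1}$/$\ell^{\infty}$ duality over the diagonal. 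This buys you two things the paper's route does not make visible: the pre-optimization quantity is computed exactly rather than bounded, and the duality step isolates precisely where slack could enter. Your closing saturation remark is also correct and is a genuine strengthening not stated in Proposition \ref{pro:ueb}: $\Lambda_{3}=\mathrm{diag}(1,0,0)$ satisfies $\sum_{i}|\lambda_{i}|=1$ (so the channel is unital entanglement breaking by Theorem \ref{thm:ue_iff}), and since $SU(2)\rightarrow SO(3)$ is surjective one may prescribe the first row of $O_{W}$ and first column of $O_{V}$ to be the top singular pair of $N(P_{+})N(\tau)$, giving $M_{11}=\|N(\tau)\|$; hence (\ref{eq:bound_ue}) holds with equality, which is consistent with (and explains) the equality asserted later in (\ref{eq:ueb_jc}) of Section \ref{sec:jc}.
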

\begin{proof}
Let $\Psi$ be an arbitrary unital entanglement breaking channel.
Based on (\ref{eq:ueb_dec}), it follows that
\begin{align}
J_{\Psi}
=(W^{T}\otimes V)J_{\Lambda}(W^{T}\otimes V)^{\dagger}.
\end{align}
Thus, the function to be maximized becomes
\begin{align}
\tr\left[\tau J_{\Psi}\right]
=\tr\left[\tau'J_{\Lambda}\right],
\end{align}
where $\tau'=(W^{T}\otimes V)^{\dagger}\tau(W^{T}\otimes V)$.
From (\ref{eq:P'_dec}), it holds that
\begin{align}
J_{\Lambda}
=\frac{1}{2}\left(\sigma_{0}\otimes\sigma_{0}+\sum^{3}_{i=1}(-1)^{i+1}\lambda_{i}\sigma_{i}\otimes\sigma_{i}\right).
\end{align}
Using this expression, we can estimate the function as
\begin{align}
\tr[\tau'J_{\Lambda}]
&=\frac{1}{2}\left(1+\sum^{3}_{i=1}(-1)^{i+1}\lambda_{i}\tr[\tau'(\sigma_{i}\otimes\sigma_{i})]\right) \nonumber \\
&=\left|\frac{1}{2}\left(1+\sum^{3}_{i=1}(-1)^{i+1}\lambda_{i}\tr[\tau'(\sigma_{i}\otimes\sigma_{i})]\right)\right| \nonumber \\
&\leq\frac{1}{2}\left(1+\sum^{3}_{i=1}|\lambda_{i}|\left|\tr[\tau'(\sigma_{i}\otimes\sigma_{i})]\right|\right) \nonumber \\
&\leq\frac{1}{2}\left(1+\sum^{3}_{i=1}|\lambda_{i}|\max_{i}\left|\tr[\tau'(\sigma_{i}\otimes\sigma_{i})]\right|\right) \nonumber \\
&\leq\frac{1}{2}\left(1+\left(\sum^{3}_{i=1}|\lambda_{i}|\right)\left(\max_{i}\left|\tr[\tau'(\sigma_{i}\otimes\sigma_{i})]\right|\right)\right) \nonumber \\
&\leq\frac{1}{2}\left(1+\max_{i}\left|\tr[\tau'(\sigma_{i}\otimes\sigma_{i})]\right|\right), \label{eq:ueb_est}
\end{align}
where the final inequality follows from the necessary and sufficient condition.
Based on the definition of $\tau'$, we can compute the second term of the last expression as
\begin{align}
\max_{i}\left|\tr[\tau'(\sigma_{i}\otimes\sigma_{i})]\right|
=\max_{i}\left|\tr[\tau(W^{T}\sigma_{i}(W^{T})^{\dagger}\otimes V\sigma_{i}V^{\dagger})]\right|. \label{eq:sec_term}
\end{align}
For further analysis, we focus on $W^{T}\sigma_{i}(W^{T})^{\dagger}$ and $V\sigma_{i}V^{\dagger}$.
From $\tr[V\sigma_{i}V^{\dagger})]=0$, we can expand $V\sigma_{i}V^{\dagger}$ as
\begin{align}
V\sigma_{i}V^{\dagger}
=\sum^{3}_{l=1}O(V)_{li}\sigma_{l}, \label{eq:V_exp}
\end{align}
where $O(V)_{li}=(1/2)\tr[\sigma_{l}V\sigma_{i}V^{\dagger}]$.
From
\begin{align}
(O(V)^{T}O(V))_{ij}
=\delta_{ij}, \label{eq:cof_is_ortho}
\end{align}
it is evident that $O(V)$ is an orthonormal matrix.
Likewise, $W^{T}\sigma_{i}(W^{T})^{\dagger}$ is expanded as
\begin{align}
W^{T}\sigma_{i}(W^{T})^{\dagger}=\sum^{3}_{k=1}O(W^{T})_{ki}\sigma_{k} \label{eq:W_exp}
\end{align}
where $O(W^{T})_{ki}=(1/2)\tr[\sigma_{k}W^{T}\sigma_{i}(W^{T})^{\dagger}]$ , and $O(W^{T})_{ki}$ is also an orthonormal matrix.
By substituting (\ref{eq:V_exp}) and (\ref{eq:W_exp}) into the right-hand side of (\ref{eq:sec_term}), we derive the equality
\begin{align}
\max_{i}\left|\tr[\tau'(\sigma_{i}\otimes\sigma_{i})]\right|
=\max_{i}\left|\sum^{3}_{k,l=1}O(W^{T})^{T}_{ik}\tr[\tau(\sigma_{k}\otimes\sigma_{l})]O(V)_{li}\right|. \label{eq:est_by_orth}
\end{align}
Note that $(O(W^{T})^{T}_{ik})^{3}_{k=1}$ is the transposed $i$th column vector of $O(W^{T})$, and $(O(V)_{li})^{3}_{l=1}$ is the $i$th column vector of $O(V)$.
Thus, by rewriting the last expression of (\ref{eq:est_by_orth}), we obtain the equality
\begin{align}
\max_{i}\left|\tr[\tau'(\sigma_{i}\otimes\sigma_{i})]\right|
=\max_{i}\left|\left<\mathbf{o}(W^{T})_{i},N(\tau)\mathbf{o}(V)_{i}\right>\right|, \label{eq:est_by_col}
\end{align}
where $\mathbf{o}(\cdot)_{i}$ denotes the $i$th column vector of $O(\cdot)$, and $\left<\cdot,\cdot\right>$ represents the Euclidean inner product defined as $\left<\mathbf{a},\mathbf{b}\right>=\sum^{3}_{k=1}a_{k}b_{k}$ for arbitrary vectors $\mathbf{a}$ and $\mathbf{b}$ in $\mathbb{R}^{3}$.
From the last expression of (\ref{eq:est_by_col}), we obtain the estimation
\begin{align}
\max_{i}\left|\tr[\tau'(\sigma_{i}\otimes\sigma_{i})]\right|
&=\max_{i}\left|\left<\mathbf{o}(W^{T})_{i},N(\tau)\mathbf{o}(V)_{i}\right>\right| \nonumber \\
&\leq\max_{i}\left\|N(\tau)\mathbf{o}(V)_{i}\right\| \nonumber \\
&\leq\left\|N(\tau)\right\|, \label{eq:sec_term_last_est}
\end{align}
where we applied the Cauchy–Schwarz inequality to derive the first inequality, used the definition of the matrix norm $\|N(\tau)\|=\max_{\mathbf{x},\|\mathbf{x}\|=1}\|N(\tau)\mathbf{x}\|$ to derive the second inequality, and utilized the fact that the Euclidean norm of the column vectors of an orthonormal matrix is unity.
By evaluating (\ref{eq:ueb_est}) with (\ref{eq:sec_term_last_est}), we obtain the inequality
\begin{align}
\tr[\tau'J_{\Lambda}]
\leq\frac{1}{2}\left(1+\left\|N(\tau)\right\|\right).
\end{align}
As $\Psi$ is arbitrary, it holds that
\begin{align}
C(\tau,\mathcal{UE})
&=\tr[\tau J_{\Phi_{0}}] \\
&\leq\frac{1}{2}\left(1+\left\|N(\tau)\right\|\right),
\end{align}
where $\Phi_{0}$ is defined as the unital entanglement breaking channel which achieves the maximum.
\end{proof}

\subsection{General entanglement breaking channels}
In this subsection, we address general entanglement breaking channels.
We denote the set of these channels as $\mathcal{GE}$.
To estimate $C(\tau,\mathcal{GE})$, it is sufficient to consider the extreme points of $\mathcal{GE}$ as the function to be maximized is affine regarding quantum channels.
For quantum channels on a qubit, the extreme points of $\mathcal{GE}$ are extreme classical-quantum (CQ) channels \cite{doi:10.1142/S0129055X03001709,doi:10.1142/S0129055X03001710}.
For a linear operator $A$ on $\hil$, an arbitrary extreme CQ channel $\Psi$ can be expressed as
\begin{equation}
\Psi(A)=\ket{\psi}\bra{x_{0}}A\ket{x_{0}}\bra{\psi}+\ket{\phi}\bra{x_{1}}A\ket{x_{1}}\bra{\phi},
\end{equation}
where $\ket{\psi}$ and $\ket{\phi}$ are quantum states, and $\{\ket{x_{0}},\ket{x_{1}}\}$ is an orthonormal base.
Based on these facts, we present the following proposition:
\begin{pro} \label{pro:geb}
In the case of $\mathcal{GE}$, (\ref{eq:def_ub}) is bounded from above as
\begin{equation}
C(\tau,\mathcal{GE})\leq\frac{1}{2}\left(1+\sqrt{\|\bm{b}(\tau_{2})\|^{2}+\|N(\tau)\|^{2}}\right), \label{eq:bound_ge}
\end{equation}
where $\bm{b}(\cdot)$ is the Bloch vector defined in (\ref{eq:def_bloch}), and $\|\bm{b}(\cdot)\|$ is the Euclidean norm defined in (\ref{eq:def_euc_norm}).
The matrix $N(\cdot)$ is the correlation matrix defined in (\ref{eq:def_corr_mat}), and $\|N(\cdot)\|$ is the matrix norm defined in (\ref{eq:def_mat_norm}).
\end{pro}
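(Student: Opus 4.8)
The plan is to use the affinity of $\Psi\mapsto\tr[\tau J_{\Psi}]$ to reduce the maximization to the extreme points, which by hypothesis are the extreme CQ channels of the stated form, and then to parametrize everything by Bloch vectors. First I would compute the Choi matrix of an extreme CQ channel. Substituting $\Psi(A)=\ket{\psi}\bra{x_{0}}A\ket{x_{0}}\bra{\psi}+\ket{\phi}\bra{x_{1}}A\ket{x_{1}}\bra{\phi}$ into $J_{\Psi}=\sum_{i,j}\ket{i}\bra{j}\otimes\Psi(\ket{i}\bra{j})$ gives
\[
J_{\Psi}=\ket{\bar{x}_{0}}\bra{\bar{x}_{0}}\otimes\ket{\psi}\bra{\psi}+\ket{\bar{x}_{1}}\bra{\bar{x}_{1}}\otimes\ket{\phi}\bra{\phi},
\]
where $\ket{\bar{x}_{k}}$ is the complex conjugate of $\ket{x_{k}}$ in the computational basis. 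The essential feature is that $\{\ket{\bar{x}_{0}},\ket{\bar{x}_{1}}\}$ is again an orthonormal basis, so the two first (ancilla-side) factors are rank-one projectors with antipodal Bloch vectors $\bm{a}$ and $-\bm{a}$.

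Next I would expand $\tr[\tau J_{\Psi}]$ in the Pauli basis. Writing $\ket{a}\bra{a}=\frac{1}{2}(I+\bm{a}\cdot\bm{\sigma})$ and similarly for the output states $\ket{\psi},\ket{\phi}$ with unit Bloch vectors $\bm{p},\bm{q}$, and using $\tr[\sigma_{\mu}\sigma_{\nu}]=2\delta_{\mu\nu}$, a direct computation yields
\[
\tr[\tau J_{\Psi}]=\frac{1}{2}+\frac{1}{4}\left(\left<\bm{b}(\tau_{2}),\bm{p}+\bm{q}\right>+\left<\bm{a},N(\tau)(\bm{p}-\bm{q})\right>\right).
\]
The key simplification is that the terms involving the reduced state $\tau_{1}=\tr_{2}\tau$ cancel, precisely because the ancilla-side projectors carry opposite Bloch vectors $\pm\bm{a}$; this is what makes $\bm{b}(\tau_{2})$ and $N(\tau)$ the only data that survive. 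Maximizing over the unit vector $\bm{a}$ by the Cauchy--Schwarz inequality replaces $\left<\bm{a},N(\tau)(\bm{p}-\bm{q})\right>$ with $\|N(\tau)(\bm{p}-\bm{q})\|$.

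It then remains to bound $G(\bm{p},\bm{q})=\left<\bm{b}(\tau_{2}),\bm{p}+\bm{q}\right>+\|N(\tau)(\bm{p}-\bm{q})\|$ over unit vectors $\bm{p},\bm{q}$. I would substitute $\bm{s}=\bm{p}+\bm{q}$ and $\bm{d}=\bm{p}-\bm{q}$, which satisfy $\|\bm{s}\|^{2}+\|\bm{d}\|^{2}=4$. Applying Cauchy--Schwarz to the first term, the operator-norm estimate $\|N(\tau)\bm{d}\|\leq\|N(\tau)\|\,\|\bm{d}\|$ to the second, and finally a Cauchy--Schwarz step in the plane $(\|\bm{s}\|,\|\bm{d}\|)$ subject to $\|\bm{s}\|^{2}+\|\bm{d}\|^{2}=4$ gives $G\leq 2\sqrt{\|\bm{b}(\tau_{2})\|^{2}+\|N(\tau)\|^{2}}$, which yields exactly (\ref{eq:bound_ge}).

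The main obstacle I anticipate is that the genuine constraint linking $\bm{s}$ and $\bm{d}$ is not merely $\|\bm{s}\|^{2}+\|\bm{d}\|^{2}=4$ but also the orthogonality $\bm{s}\perp\bm{d}$ (forced by $\|\bm{p}\|=\|\bm{q}\|=1$). The final Cauchy--Schwarz step discards this orthogonality, and the optimal $\bm{a}$ aligned with $N(\tau)(\bm{p}-\bm{q})$ need not be compatible with the separate optimizers of the two terms; this is exactly why the result is an inequality rather than an equality. A secondary point needing care is the correct identification of the conjugated basis $\{\ket{\bar{x}_{0}},\ket{\bar{x}_{1}}\}$ and the verification that it is orthonormal, since the cancellation of the $\tau_{1}$ terms rests on the antipodality of its Bloch vectors.
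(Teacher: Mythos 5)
Your proposal is correct, and it takes a genuinely different route from the paper's. Both arguments begin identically --- affinity of $\Psi\mapsto\tr[\tau J_{\Psi}]$ reduces the maximization over $\mathcal{GE}$ to the extreme CQ channels --- but from there the paper invokes the King--Ruskai canonical decomposition $\Psi=\Psi_{V}\circ\Lambda\circ\Psi_{W}$, computes $J_{\Lambda}=\frac{1}{2}\left(\sigma_{0}\otimes\sigma_{0}+\left|\braket{\psi|\phi}\right|\sigma_{0}\otimes\sigma_{3}+\sqrt{1-\left|\braket{\psi|\phi}\right|^{2}}\,\sigma_{1}\otimes\sigma_{1}\right)$, and carries the two unitaries along as rotation matrices $O(V)$ and $O(W^{T})$ whose column vectors appear in the final Cauchy--Schwarz estimates, closing with the trigonometric bound $\sin{u}\,\|\bm{b}(\tau_{2})\|+\cos{u}\,\|N(\tau)\|\leq\sqrt{\|\bm{b}(\tau_{2})\|^{2}+\|N(\tau)\\|^{2}}$ where $\sin{u}=\left|\braket{\psi|\phi}\right|$. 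You instead compute the Choi operator of the CQ channel directly, $J_{\Psi}=\ket{\bar{x}_{0}}\bra{\bar{x}_{0}}\otimes\ket{\psi}\bra{\psi}+\ket{\bar{x}_{1}}\bra{\bar{x}_{1}}\otimes\ket{\phi}\bra{\phi}$, which is correct under the paper's convention $J_{\Psi}=\sum_{i,j}\ket{i}\bra{j}\otimes\Psi(\ket{i}\bra{j})$; since conjugation preserves orthonormality, the ancilla-side projectors indeed carry antipodal Bloch vectors $\pm\bm{a}$, the $\tau_{1}$-dependent terms cancel exactly as you claim, and your Pauli expansion $\tr[\tau J_{\Psi}]=\frac{1}{2}+\frac{1}{4}\left(\left<\bm{b}(\tau_{2}),\bm{p}+\bm{q}\right>+\left<\bm{a},N(\tau)(\bm{p}-\bm{q})\right>\right)$ checks out (it reproduces the paper's $J_{\Lambda}$ when specialized to $\ket{x_{\pm}}=\ket{\pm}$). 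The final optimization via the parallelogram identity $\|\bm{s}\|^{2}+\|\bm{d}\|^{2}=4$ and a last Cauchy--Schwarz step in the $(\|\bm{s}\|,\|\bm{d}\|)$ plane delivers exactly (\ref{eq:bound_ge}). Your route is more elementary and shorter: it avoids the canonical-form machinery and the unitary bookkeeping entirely, and it handles the degenerate cases $\left|\braket{\psi|\phi}\right|\in\{0,1\}$ uniformly, whereas the paper excludes them and refers back to the unital and depolarizing propositions. What the paper's approach buys is uniformity across the article --- the same $\Psi_{V}\circ\Lambda\circ\Psi_{W}$ decomposition and $O(V),O(W^{T})$ formalism is reused for the unital entanglement breaking and general channel bounds --- plus an explicit one-parameter picture of where the extreme channel interpolates between the unital and depolarizing regimes. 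Your closing diagnosis is also accurate: discarding the orthogonality $\bm{s}\perp\bm{d}$ and decoupling the optimizer $\bm{a}$ from $(\bm{p},\bm{q})$ are precisely why the result is only an inequality, which mirrors the slack in the paper's own chain of estimates.
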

\begin{proof}
An entanglement breaking channel is expressed as
\begin{equation}
\Psi(A)=\ket{\psi}\bra{x_{0}}A\ket{x_{0}}\bra{\psi}+\ket{\phi}\bra{x_{1}}A\ket{x_{1}}\bra{\phi}
\end{equation}
for a linear operator $A$.
Here, we assume $0<\left|\braket{\psi|\phi}\right|<1$ because if $\left|\braket{\psi|\phi}\right|$ equals zero $\Psi$ becomes unital, and if $\left|\braket{\psi|\phi}\right|=1$, $\Psi$ is a depolarizing channel.
Both cases have been already considered in the previous subsections.
To represent $\Psi$ as a matrix, we define the unitary operator
\begin{align}
V=\ket{\psi_{+}}\bra{0}+\ket{\psi_{-}}\bra{1},
\end{align}
where $\ket{\psi_{\pm}}$ is defined by
\begin{equation}
\ket{\psi_{\pm}}=\frac{1}{\sqrt{2\left(1\pm\left|\braket{\psi|\phi}\right|\right)}}\left(\ket{\psi}\pm\frac{\braket{\phi|\psi}}{\left|\braket{\psi|\phi}\right|}\ket{\phi}\right).
\end{equation}
From the equalities
\begin{align}
\braket{\psi|\psi_{\pm}}
=\sqrt{\frac{1\pm\left|\braket{\psi|\phi}\right|}{2}}
\end{align}
and
\begin{align}
\braket{\phi|\psi_{\pm}}
=\pm\frac{\braket{\phi|\psi}}{\left|\braket{\psi|\phi}\right|}\sqrt{\frac{1\pm\left|\braket{\psi|\phi}\right|}{2}},
\end{align}
it holds that
\begin{align}
V^{\dagger}\ket{\psi}
=\sqrt{\frac{1+\left|\braket{\psi|\phi}\right|}{2}}\ket{0}+\sqrt{\frac{1-\left|\braket{\psi|\phi}\right|}{2}}\ket{1}
\end{align}
and
\begin{align}
V^{\dagger}\ket{\phi}
=\frac{\braket{\psi|\phi}}{\left|\braket{\psi|\phi}\right|}\left(\sqrt{\frac{1+\left|\braket{\psi|\phi}\right|}{2}}\ket{0}-\sqrt{\frac{1-\left|\braket{\psi|\phi}\right|}{2}}\ket{1}\right).
\end{align}
By defining $\ket{\psi'}=V^{\dagger}\ket{\psi}$ and
$\ket{\phi'}=V^{\dagger}\ket{\phi}$, we obtain
$\ket{\psi}=V\ket{\psi'}$ and $\ket{\phi}=V\ket{\phi'}$.
Moreover, we define the unitary operator
\begin{align}
W=\ket{0}\bra{x_{+}}+\ket{1}\bra{x_{-}},
\end{align}
where $\ket{x_{\pm}}$ is defined by
\begin{align}
\ket{x_{\pm}}=\frac{1}{\sqrt{2}}\left(\ket{x_{0}}\pm\ket{x_{1}}\right).
\end{align}
It is evident that
\begin{align}
W\ket{x_{0}}
=\ket{+}
\end{align}
and
\begin{align}
W\ket{x_{1}}
=\ket{-}
\end{align}
hold, where $\ket{\pm}=1/\sqrt{2}(\ket{0}\pm\ket{1})$.
Hence, $\ket{x_{0}}=W^{\dagger}\ket{+}$ and $\ket{x_{1}}=W^{\dagger}\ket{-}$ follow.
By employing $V$ and $W$, $\Psi$ can be rewritten as
\begin{align}
\Psi(A)
=V\Lambda(WAW^{\dagger})V^{\dagger},
\end{align}
where $\Lambda$ is defined by
\begin{align}
\Lambda(A)
=\ket{\psi'}\bra{+}A\ket{+}\bra{\psi'}+\ket{\phi'}\bra{-}A\ket{-}\bra{\phi'}.
\end{align}
From this expression, it holds that
\begin{align}
J_{\Psi}
=(W^{T}\otimes V)J_{\Lambda}((W^{T})^{\dagger}\otimes V^{\dagger}).
\end{align}
To calculate $J_{\Lambda}$, we need to derive the matrix representation of $\Lambda$.
The Pauli matrices are mapped by $\Lambda$ as follows:
\begin{align}
\Lambda(\sigma_{0})&=\ket{\psi'}\bra{\psi'}+\ket{\phi'}\bra{\phi'}, \\
\Lambda(\sigma_{1})&=\ket{\psi'}\bra{\psi'}-\ket{\phi'}\bra{\phi'}, \\
\Lambda(\sigma_{2})&=\Lambda(\sigma_{3})=0.
\end{align}
From the definition of $\ket{\psi'}$ and $\ket{\phi'}$, we derive the equalities
\begin{align}
\ket{\psi'}\bra{\psi'}+\ket{\phi'}\bra{\phi'}
=I+\left|\braket{\psi|\phi}\right|\sigma_{3}
\end{align}
and
\begin{align}
\ket{\psi'}\bra{\psi'}-\ket{\phi'}\bra{\phi'}
=\sqrt{1-\left|\braket{\psi|\phi}\right|^{2}}\sigma_{1}.
\end{align}
Thus, the matrix representation of $\Lambda$ is given by
\begin{equation}
\begin{pmatrix}
1 & 0 & 0 & 0 \\
0 & \sqrt{1-\left|\braket{\psi|\phi}\right|^{2}} & 0 & 0 \\
0 & 0 & 0 & 0 \\
\left|\braket{\psi|\phi}\right| & 0 & 0 & 0 \\
\end{pmatrix}
.
\end{equation}
From (\ref{eq:P'_dec}), it holds that
\begin{align}
J_{\Lambda}
=\frac{1}{2}\left(\sigma_{0}\otimes\sigma_{0}+\left|\braket{\psi|\phi}\right|\sigma_{0}\otimes\sigma_{3}+\sqrt{1-\left|\braket{\psi|\phi}\right|^{2}}\sigma_{1}\otimes\sigma_{1}\right).
\end{align}
Therefore, $J_{\Psi}$ is computed as
\begin{align}
J_{\Psi}
&=\frac{1}{2}\Bigl(\sigma_{0}\otimes\sigma_{0}+\left|\braket{\psi|\phi}\right|\sum^{3}_{j=1}O(V)_{j3}\sigma_{0}\otimes\sigma_{j} \\
&+\sqrt{1-\left|\braket{\psi|\phi}\right|^{2}}\sum^{3}_{i,j=1}O(W^{T})_{i1}O(V)_{j1}\sigma_{i}\otimes \sigma_{j}\Bigr),
\end{align}
where $O(W^{T})$ and $O(V)$ are defined by $O(W^{T})_{ik}=(1/2)\tr[\sigma_{i}W^{T}\sigma_{k}(W^{T})^{\dagger}]$ and $O(V)_{jl}=(1/2)\tr[\sigma_{j}W^{T}\sigma_{l}(W^{T})^{\dagger}]$, respectively.
It is evident that $O(W^{T})$ and $O(V)$ are orthonormal matrices.
Using this expression, we can calculate the function to be maximized as
\begin{align}
\tr[\tau J_{\Psi}]
=\frac{1}{2}\Bigl(1+\sin{u}\left<\bm{o}(V)_{3},\bm{b}(\tau_{2})\right>
+\cos{u}\left<\bm{o}(W^{T})_{1},N(\tau)\bm{o}(V)_{1}\right>\Bigr),
\end{align}
where $\mathbf{o}(\cdot)_{i}$ denotes the $i$th column vector of $O(\cdot)$ and $\sin{u}$ is defined by $\sin{u}=\left|\braket{\psi|\phi}\right|$ with $0\leq u\leq\pi/2$.
Consequently, $\tr[\tau J_{\Psi}]$ can be estimated from above as
\begin{align}
\tr[\tau J_{\Psi}]
&=\frac{1}{2}\Bigl(1+\sin{u}\left<\bm{o}(V)_{3},\bm{b}(\tau_{2})\right>
+\cos{u}\left<\bm{o}(W^{T})_{1},N(\tau)\bm{o}(V)_{1}\right>\Bigr) \\
&\leq\frac{1}{2}\Bigl(1+\sin{u}\left|\left<\bm{o}(V)_{3},\bm{b}(\tau_{2})\right>\right| \label{eq:gebc_est_mid}
+\cos{u}\left|\left<\bm{o}(W^{T})_{1},N(\tau)\bm{o}(V)_{1}\right>\right|\Bigr) \\
&\leq\frac{1}{2}\Bigl(1+\sin{u}\left\|\bm{b}(\tau_{2})\right\|
+\cos{u}\left\|N(\tau)\right\|\Bigr) \\
&\leq\frac{1}{2}\Bigl(1+\sqrt{\left\|\bm{b}(\tau_{2})\right\|^{2}
+\left\|N(\tau)\right\|^{2}}\Bigr)
\end{align}
where we applied the Cauchy–Schwarz inequality to estimate the second term of (\ref{eq:gebc_est_mid}) and we utilized the fact that the Euclidean norm of the column vectors of an orthonormal matrix is unity.
The estimation of the third term of (\ref{eq:gebc_est_mid}) employs a method that is similar to the one used in (\ref{eq:sec_term_last_est}).
\end{proof}

\subsection{General quantum channels}
In this subsection, we describe the treatment of general quantum channels.
First, we designate the set of quantum channels as $\mathcal{C}$.
In Ref. \cite{BETHRUSKAI2002159}, it is demonstrated that an arbitrary quantum channel can be represented by a convex combination of two elements in the so-called generalized extreme points.
Here, an arbitrary quantum channel in the generalized extreme points is expressed as
\begin{align}
\Psi=\Psi_{V}\circ\Lambda\circ\Psi_{W}, \label{eq:gep_dec}
\end{align}
where $\Psi_{V}$ and $\Psi_{W}$ represent the unitary channels implemented by unitary operators $V$ and $W$, respectively.
The matrix representation of $\Lambda$ is
\begin{equation}
T=
\begin{pmatrix}
1 & 0 & 0 & 0\\
0 & \cos{u_{1}} & 0 & 0\\
0 & 0 & \cos{u_{2}} & 0\\
\sin{u_{1}}\sin{u_{2}} & 0 & 0 & \cos{u_{1}}\cos{u_{2}}\\
\end{pmatrix}
, \label{eq:gep_mat}
\end{equation}
with $u_{1}\in[0,2\pi)$ and $u_{2}\in[0,\pi)$.
Similar to entanglement breaking channels, it is adequate to consider the generalized extreme points for evaluating $C(\tau,\mathcal{C})$.
Consequently, the following proposition is presented.
\begin{pro} \label{pro:gc}
In the case of $\mathcal{C}$, (\ref{eq:def_ub}) is bounded from above as
\begin{equation}
C(\tau,\mathcal{C})\leq\frac{1}{2}\left(1+\sqrt{\|\bm{b}(\tau_{2})\|^{2}+\|N(\tau)\|^{2}_{KF}}\right), \label{eq:bound_ch}
\end{equation}
where $\bm{b}(\cdot)$ is the Bloch vector defined in (\ref{eq:def_bloch}), and $\|\bm{b}(\cdot)\|$ is the Euclidean norm defined in (\ref{eq:def_euc_norm}).
The matrix $N(\cdot)$ is the correlation matrix defined in (\ref{eq:def_corr_mat}), and $\|N(\cdot)\|_{KF}$ is the Ky Fan norm defined in (\ref{eq:def_kf_norm}).
\end{pro}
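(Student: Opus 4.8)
The plan is to follow the template of the proofs of Proposition~\ref{pro:ueb} and Proposition~\ref{pro:geb}, now applied to the decomposition~(\ref{eq:gep_dec}) of a generalized extreme point. Since $\Psi\mapsto J_{\Psi}$ is linear and $\tr[\tau J_{\Psi}]$ is therefore affine in $\Psi$, and since by Ref.~\citen{BETHRUSKAI2002159} every channel in $\mathcal{C}$ is a convex combination of two generalized extreme points, the maximum in~(\ref{eq:def_ub}) is attained at a generalized extreme point; thus it suffices to bound $\tr[\tau J_{\Psi}]$ for $\Psi=\Psi_{V}\circ\Lambda\circ\Psi_{W}$ with $\Lambda$ given by~(\ref{eq:gep_mat}). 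First I would read $\Lambda(\sigma_{\nu})$ off the columns of $T$ and, using~(\ref{eq:P'_dec}), obtain $J_{\Lambda}=\frac{1}{2}\big(\sigma_{0}\otimes\sigma_{0}+\sin u_{1}\sin u_{2}\,\sigma_{0}\otimes\sigma_{3}+\cos u_{1}\,\sigma_{1}\otimes\sigma_{1}-\cos u_{2}\,\sigma_{2}\otimes\sigma_{2}+\cos u_{1}\cos u_{2}\,\sigma_{3}\otimes\sigma_{3}\big)$, and then conjugate by $W^{T}\otimes V$ exactly as before, so that $J_{\Psi}=(W^{T}\otimes V)J_{\Lambda}(W^{T}\otimes V)^{\dagger}$.

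Expanding the conjugation through the orthonormal matrices $O(W^{T})$ and $O(V)$ and using $\tr[\tau(\sigma_{0}\otimes\sigma_{l})]=b(\tau_{2})_{l}$ together with $\tr[\tau(\sigma_{k}\otimes\sigma_{l})]=N(\tau)_{kl}$, I expect $\tr[\tau J_{\Psi}]=\frac{1}{2}(1+f)$ with $f=\sin u_{1}\sin u_{2}\left<\bm{o}(V)_{3},\bm{b}(\tau_{2})\right>+\cos u_{1}\left<\bm{o}(W^{T})_{1},N(\tau)\bm{o}(V)_{1}\right>-\cos u_{2}\left<\bm{o}(W^{T})_{2},N(\tau)\bm{o}(V)_{2}\right>+\cos u_{1}\cos u_{2}\left<\bm{o}(W^{T})_{3},N(\tau)\bm{o}(V)_{3}\right>$. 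The whole problem then reduces to proving $f\leq\sqrt{\|\bm{b}(\tau_{2})\|^{2}+\|N(\tau)\|^{2}_{KF}}$, after which~(\ref{eq:bound_ch}) follows.

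For this inequality I would proceed in three steps. The Bloch term is handled by Cauchy--Schwarz, $\left|\left<\bm{o}(V)_{3},\bm{b}(\tau_{2})\right>\right|\leq\|\bm{b}(\tau_{2})\|$, since $\bm{o}(V)_{3}$ is a unit vector. For the three correlation terms the key ingredient is a von Neumann--type inequality, $\sum^{3}_{i=1}\left|\left<\bm{o}(W^{T})_{i},N(\tau)\bm{o}(V)_{i}\right>\right|\leq\|N(\tau)\|_{KF}$: writing the $i$th summand as $(R^{T}N(\tau)S)_{ii}$, where $R$ and $S$ are the orthogonal matrices with columns $\bm{o}(W^{T})_{i}$ and $\bm{o}(V)_{i}$, the claim follows from $\max_{Q}\tr[N(\tau)Q^{T}]=\|N(\tau)\|_{KF}$ over orthogonal $Q$. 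Bounding the coefficients by $|\cos u_{1}\cos u_{2}|\leq\max(|\cos u_{1}|,|\cos u_{2}|)$ then yields $f\leq|\sin u_{1}\sin u_{2}|\,\|\bm{b}(\tau_{2})\|+\max(|\cos u_{1}|,|\cos u_{2}|)\,\|N(\tau)\|_{KF}$, and a final Cauchy--Schwarz in $\mathbb{R}^{2}$ combined with the elementary bound $\sin^{2}u_{1}\sin^{2}u_{2}+\max(\cos^{2}u_{1},\cos^{2}u_{2})\leq1$ gives the claim.

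I expect the correlation estimate to be the main obstacle. In Proposition~\ref{pro:ueb} only a single term appeared, so the operator norm $\|N(\tau)\|$ and the argument of~(\ref{eq:sec_term_last_est}) were enough; here all three diagonal directions contribute, their absolute values must be summed, and it is exactly this sum that produces the Ky Fan norm rather than the operator norm. Establishing $\sum_{i}|(R^{T}N(\tau)S)_{ii}|\leq\|N(\tau)\|_{KF}$ is the crux, and I suspect this is the lemma deferred to the Appendix. A second, more delicate point is the coefficient bookkeeping: one must resist bounding the Bloch and correlation contributions separately (which only gives the weaker sum $\|\bm{b}(\tau_{2})\|+\|N(\tau)\|_{KF}$) and instead retain the trigonometric constraint linking $\sin u_{1}\sin u_{2}$ to $\max(|\cos u_{1}|,|\cos u_{2}|)$, so that the Pythagorean form $\sqrt{\|\bm{b}(\tau_{2})\|^{2}+\|N(\tau)\|^{2}_{KF}}$ is preserved.
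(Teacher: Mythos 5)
Your proposal is correct and takes essentially the same route as the paper: reduction to the generalized extreme points via convexity, the same expression for $J_{\Lambda}$ and the same expansion through $O(W^{T})$ and $O(V)$, and the same closing step combining $\left|\cos{u_{1}}\cos{u_{2}}\right|\leq\max\{\left|\cos{u_{1}}\right|,\left|\cos{u_{2}}\right|\}$ with a Cauchy--Schwarz in $\mathbb{R}^{2}$; your inequality $\sum_{i}\left|(R^{T}N(\tau)S)_{ii}\right|\leq\left\|N(\tau)\right\|_{KF}$ is just a repackaging of the paper's application of von Neumann's trace inequality to $\tr[AN(\tau)]$, where the singular values of $A$ are $\{\left|\cos{u_{1}}\right|,\left|\cos{u_{2}}\right|,\left|\cos{u_{1}}\cos{u_{2}}\right|\}$. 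One small misattribution: that trace inequality is not the lemma deferred to the Appendix (which concerns the invariance of the singular values of $N(\rho)$ under local unitaries and is used only for the joint-convertibility calculations); the paper instead cites Mirsky's elementary proof of von Neumann's inequality.
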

\begin{proof}
Let $\Psi$ be a quantum channel in the generalized extreme points expressed by (\ref{eq:gep_dec}) and (\ref{eq:gep_mat}).
Similar to entanglement breaking channels, it holds that
\begin{align}
J_{\Psi}=(W^{T}\otimes V)J_{\Lambda}((W^{T})^{\dagger}\otimes V^{\dagger}).
\end{align}
Here, from (\ref{eq:gep_mat}), $J_{\Lambda}$ is expressed as
\begin{align}
J_{\Lambda}
&=\frac{1}{2}(\sigma_{0}\otimes\sigma_{0}+\sin{u_{1}}\sin{u_{2}}\sigma_{0}\otimes\sigma_{3}+\cos{u_{1}}\sigma_{1}\otimes\sigma_{1} \nonumber \\
&\quad-\cos{u_{2}}\sigma_{2}\otimes\sigma_{2}+\cos{u_{1}}\cos{u_{2}}\sigma_{3}\otimes\sigma_{3}).
\end{align}
Thus, $J_{\Psi}$ can be rewritten as
\begin{align}
J_{\Psi}
&=\frac{1}{2}(\sigma_{0}\otimes\sigma_{0}+\sin{u_{1}}\sin{u_{2}}\sum^{3}_{j=1}O(V)_{j3}\sigma_{0}\otimes\sigma_{j} \nonumber \\
&\quad+\cos{u_{1}}\sum^{3}_{i,j=1}O(W^{T})_{i1}O(V)_{j1}\sigma_{i}\otimes\sigma_{j} \nonumber \\
&\quad-\cos{u_{2}}\sum^{3}_{i,j=1}O(W^{T})_{i2}O(V)_{j2}\sigma_{i}\otimes\sigma_{j} \nonumber \\
&\quad+\cos{u_{1}}\cos{u_{2}}\sum^{3}_{i,j=1}O(W^{T})_{i3}O(V)_{j3}\sigma_{i}\otimes\sigma_{j},
\end{align}
where $O(W^{T})_{ik}$ and $O(V)_{jl}$ are defined by $O(W^{T})_{ik}=(1/2)\tr[\sigma_{i}W^{T}\sigma_{k}(W^{T})^{\dagger}]$ and $O(V)_{jl}=(1/2)\tr[\sigma_{j}V\sigma_{l}V^{\dagger}]$, respectively.
It is evident that $O(W^{T})_{ik}$ and $O(V)_{jl}$ are orthonormal matrices.
From this expression, we can obtain the following equality:
\begin{align}
\tr\left[\tau J_{\Psi}\right]
=\frac{1}{2}\left(1+\sin{u_{1}}\sin{u_{2}}\left<\bm{o}(V)_{3},\bm{b}(\tau_{2})\right>+\tr[AN(\tau)]\right),
\end{align}
where $\bm{o}(\cdot)_{i}$ is the $i$th column vectors of $O(\cdot)$ and $A$ is defined by
\begin{align}
A
&=\cos{u_{1}}\bm{o}(V)_{1}\bm{o}(W^{T})^{T}_{1} \nonumber \\
&-\cos{u_{2}}\bm{o}(V)_{2}\bm{o}(W^{T})^{T}_{2} \nonumber \\
&+\cos{u_{1}}\cos{u_{2}}\bm{o}(V)_{3}\bm{o}(W^{T})^{T}_{3}.
\end{align}
As the column vectors of an orthonormal matrix form an orthonormal base, it is obvious that
\begin{align}
&AA^{T} \nonumber \\
&=\cos^{2}{u_{1}}\bm{o}(V)_{1}\bm{o}(V)^{T}_{1}
+\cos^{2}{u_{2}}\bm{o}(V)_{2}\bm{o}(V)^{T}_{2}
+\cos^{2}{u_{1}}\cos^{2}{u_{2}}\bm{o}(V)_{3}\bm{o}(V)^{T}_{3}.
\end{align}
Hence, the set of singular values of $A$ is $\{\left|\cos{u_{1}}\right|,\left|\cos{u_{2}}\right|,\left|\cos{u_{1}}\cos{u_{2}}\right|\}$.
Consequently, the function to be maximized is bounded from above as
\begin{align}
\tr[\tau J_{\Psi}]
&=\frac{1}{2}\left(1+\sin{u_{1}}\sin{u_{2}}\left<\bm{o}(V)_{3},\bm{b}(\tau_{2})\right>+\tr[AN(\tau)]\right) \\
&\leq\frac{1}{2}\left(1+\left|\sin{u_{1}}\sin{u_{2}}\right|\left|\left<\bm{o}(V)_{3},\bm{b}(\tau_{2})\right>\right|+\left|\tr[AN(\tau)]\right|\right) \\
&\leq\frac{1}{2}\left(1+\left|\sin{u_{1}}\sin{u_{2}}\right|\|\bm{b}(\tau_{2})\|+\left|\tr[AN(\tau)]\right|\right) \\
&\leq\frac{1}{2}\left(1+\left|\sin{u_{1}}\sin{u_{2}}\right|\|\bm{b}(\tau_{2})\|+\sum^{3}_{i=1}s(A)_{i}s(N(\tau))_{i}\right),
\end{align}
where we employed the Cauchy–Schwarz inequality to derive the second inequality and the $s(\cdot)_{i}$ in the last expression represents the $i$th largest singular value of its argument.
The third inequality is a consequence of von Neumann's trace inequality whose elementary proof is available in Ref. \citen{Mirsky1975}.
For further estimation, let $\left|\cos{u_{j}}\right|=\max\{\left|\cos{u_{1}}\right|,\left|\cos{u_{2}}\right|\}$.
By employing this value, the following chain of inequalities holds:
\begin{align}
\tr[\tau J_{\Psi}]
&\leq\frac{1}{2}\left(1+\left|\sin{u_{1}}\sin{u_{2}}\right|\|\bm{b}(\tau_{2})\|+\left|\cos{u_{j}}\right|\|N(\tau)\|_{KF}\right) \\
&\leq\frac{1}{2}\left(1+\left|\sin{u_{j}}\right|\|\bm{b}(\tau_{2})\|+\left|\cos{u_{j}}\right|\|N(\tau)\|_{KF}\right) \\
&\leq\frac{1}{2}\left(1+\sqrt{\|\bm{b}(\tau_{2})\|^{2}+\|N(\tau)\|_{KF}^{2}}\right),
\end{align}
where we utilized the definition of the Ky Fan norm in the first inequality.
Moreover, an arbitrary quantum channel is also bounded from above by the last expression, as it can be expressed as a convex combination of two elements in the generalized extreme points.
\end{proof}

\subsection{Comparison of the upper bounds} \label{sec:comp_ub}
Let us compare the obtained upper bounds.
For simplicity, we denote them as follows:
\begin{align}
\Tilde{C}(\tau,\mathcal{D})&=C(\tau,\mathcal{D})=\frac{1}{2}\left(1+\|\bm{b}(\tau_{2})\|\right), \\
\Tilde{C}(\tau,\mathcal{R})
&=C(\tau,\mathcal{R})
=\frac{1}{2}\left(1+\left\|N(\tau)\right\|_{KF}\right), \\
\Tilde{C}(\tau,\mathcal{UE})&=\frac{1}{2}\left(1+\|N(\tau)\|\right), \\
\Tilde{C}(\tau,\mathcal{GE})&=\frac{1}{2}\left(1+\sqrt{\|\bm{b}(\tau_{2})\|^{2}+\|N(\tau)\|^{2}}\right), \\
\Tilde{C}(\tau,\mathcal{C})&=\frac{1}{2}\left(1+\sqrt{\|\bm{b}(\tau_{2})\|^{2}+\|N(\tau)\|^{2}_{KF}}\right).
\end{align}
It is evident that the relationships between upper bounds are consistent with the subset inclusions.
Namely, the inequalities
\begin{align}
\Tilde{C}(\tau,\mathcal{UE})\leq\Tilde{C}(\tau,\mathcal{R})\leq\Tilde{C}(\tau,\mathcal{C}), \\
\Tilde{C}(\tau,\mathcal{UE})\leq\Tilde{C}(\tau,\mathcal{E})\leq\Tilde{C}(\tau,\mathcal{C}), \\
\Tilde{C}(\tau,\mathcal{D})\leq\Tilde{C}(\tau,\mathcal{E})\leq\Tilde{C}(\tau,\mathcal{C})
\end{align}
reflect the inclusions
\begin{align}
&\mathcal{UE}\subseteq\mathcal{R}\subseteq\mathcal{C}, \\
&\mathcal{UE}\subseteq\mathcal{E}\subseteq\mathcal{C}, \\
&\mathcal{D}\subseteq\mathcal{E}\subseteq\mathcal{C},
\end{align}
respectively.
Therefore, we can consider $\Tilde{C}(\tau,\mathcal{C})$ as a universal upper bound and the others serve as specialized upper bounds.
In general, $\Tilde{C}(\tau,\mathcal{C})$ is an estimate of $C(\tau,\mathcal{C})$; however, in a specific case, it provides the exact value of $C(\tau,\mathcal{C})$.
\begin{pro} \label{pro:bound_c_sp}
If $\tau_{2}=I/2$, the equality
\begin{equation}
C(\tau,\mathcal{C})
=\frac{1}{2}\left(1+\left\|N(\tau)\right\|_{KF}\right) \label{eq:bound_c_sp}
\end{equation}
holds.
The matrix $N(\cdot)$ represents the correlation matrix defined in (\ref{eq:def_corr_mat}), and $\|N(\cdot)\|_{KF}$ is the Ky Fan norm defind in (\ref{eq:def_kf_norm}).
\end{pro}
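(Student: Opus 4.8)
The plan is to establish the claimed equality by a two-sided sandwich: the universal upper bound from Proposition~\ref{pro:gc} supplies the ``$\leq$'' direction, while the exact value for random unitary channels from Proposition~\ref{eq:pro_unital}, combined with the inclusion $\mathcal{R}\subseteq\mathcal{C}$, supplies the ``$\geq$'' direction. The crucial observation is that the hypothesis $\tau_{2}=I/2$ kills the Bloch-vector contribution in the general bound, collapsing it exactly onto the random-unitary value.

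First I would treat the upper bound. Since $\tau_{2}=\tr_{1}\tau=I/2$, every component of its Bloch vector vanishes,
\begin{equation}
b(\tau_{2})_{i}=\tr[\sigma_{i}\tau_{2}]=\tfrac{1}{2}\tr[\sigma_{i}]=0,
\end{equation}
so that $\|\bm{b}(\tau_{2})\|=0$. Substituting this into (\ref{eq:bound_ch}) of Proposition~\ref{pro:gc} gives
\begin{equation}
C(\tau,\mathcal{C})\leq\frac{1}{2}\left(1+\sqrt{0+\|N(\tau)\|^{2}_{KF}}\right)=\frac{1}{2}\left(1+\|N(\tau)\|_{KF}\right),
\end{equation}
which is precisely the right-hand side of (\ref{eq:bound_c_sp}).

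Next I would supply the matching lower bound. Random unitary channels are quantum channels, i.e. $\mathcal{R}\subseteq\mathcal{C}$, and since $C(\tau,\cdot)$ is a maximum of $\tr[\tau J_{\Psi'}]$ over the class in question, enlarging the class cannot decrease the maximum. Hence $C(\tau,\mathcal{R})\leq C(\tau,\mathcal{C})$. By Proposition~\ref{eq:pro_unital}, the left-hand side equals $\frac{1}{2}(1+\|N(\tau)\|_{KF})$ for \emph{every} $\tau$ (this formula requires no assumption on $\tau_{2}$), so
\begin{equation}
\frac{1}{2}\left(1+\|N(\tau)\|_{KF}\right)\leq C(\tau,\mathcal{C}).
\end{equation}
Combining this with the upper bound yields the equality (\ref{eq:bound_c_sp}).

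The argument involves no hard computation; the conceptual point is that the extra term $\|\bm{b}(\tau_{2})\|^{2}$ separating general channels from random unitary channels in the bound is a reduced-state quantity that vanishes exactly when the second marginal is maximally mixed, so no genuinely non-unital channel can outperform the best random unitary one. I would expect no real obstacle. The only subtlety worth checking is that the maximizer realizing $C(\tau,\mathcal{R})$---a unitary channel by Proposition~\ref{eq:pro_unital}---indeed lies in $\mathcal{C}$, which is immediate, confirming that the bound is attained rather than merely approached.
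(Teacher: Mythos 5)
Your proof is correct and follows essentially the same route as the paper: both sandwich $C(\tau,\mathcal{C})$ between $C(\tau,\mathcal{R})$ (via Proposition~\ref{eq:pro_unital} and $\mathcal{R}\subseteq\mathcal{C}$) and $\Tilde{C}(\tau,\mathcal{C})$ (via Proposition~\ref{pro:gc}), then note that $\tau_{2}=I/2$ forces $\|\bm{b}(\tau_{2})\|=0$ so the two sides coincide. Your explicit verification that the Bloch vector vanishes is a minor elaboration the paper leaves implicit.
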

\begin{proof}
From the definitions, it holds that
\begin{equation}
C(\tau,\mathcal{R})
\leq C(\tau,\mathcal{C})
\leq \Tilde{C}(\tau,\mathcal{C}). \label{eq:R_C_chain}
\end{equation}
If $\tau_{2}=I/2$, it is evident that
\begin{equation}
\|\bm{b}(\tau_{2})\|
=\|\bm{b}(I/2)\|
=0.
\end{equation}
Thus, (\ref{eq:R_C_chain}) becomes
\begin{equation}
\frac{1}{2}\left(1+\left\|N(\tau)\right\|_{KF}\right)
\leq C(\tau,\mathcal{C})
\leq\frac{1}{2}\left(1+\left\|N(\tau)\right\|_{KF}\right),
\end{equation}
which indicates that (\ref{eq:bound_c_sp}) holds.
\end{proof}
For example, $C(\tau,\mathcal{C})$ is determined by (\ref{eq:bound_c_sp}), if $\tau$ is a quantum state
\begin{equation}
\sum_{x}p_{x}\ket{\Psi_{x}}\bra{\Psi_{x}},
\end{equation}
where $\{p_{x}\}_{x}$ represents a probability distribution and all $\ket{\Psi_{x}}$ are maximally entangled states.

Any $\Tilde{C}(\tau,\mathcal{X})$ is defined by $\|\bm{b}(\tau_{2})\|$, $\left\|N(\tau)\right\|$, and $\left\|N(\tau)\right\|_{KF}$, thus making the calculation of $\Tilde{C}(\tau,\mathcal{X})$ analytically more feasible than $C(\tau,\mathcal{X})$.

\section{Joint convertibility} \label{sec:jc}
We can utilize the obtained upper bounds to quantify a notion of convertibility.
Let us consider two pairs of pure states $(\ket{\psi},\ket{\phi})$ and $(\ket{e},\ket{f})$, and a specific class of quantum channels $\mathcal{X}$.
We say that $(\ket{\psi},\ket{\phi})$ is jointly convertible into $(\ket{e},\ket{f})$ within $\mathcal{X}$ if there exists a quantum channel $\Psi\in\mathcal{X}$ such that
\begin{equation}
\ket{e}\bra{e}=\Psi(\ket{\psi}\bra{\psi}) \label{eq:jc_cond_1}
\end{equation}
and
\begin{equation}
\ket{f}\bra{f}=\Psi(\ket{\phi}\bra{\phi}). \label{eq:jc_cond_2}
\end{equation}
If $\left|\braket{e|f}\right|<\left|\braket{\psi|\phi}\right|$, there is no quantum channel that satisfies (\ref{eq:jc_cond_1}) and (\ref{eq:jc_cond_2}).
This fact can be readily demonstrated by employing the fidelity, which is defined for arbitrary density operators $\rho$ and $\sigma$ as
\begin{equation}
F(\rho,\sigma)=\tr\sqrt{\sqrt{\sigma}\rho\sqrt{\sigma}}.
\end{equation}
The fidelity has the property
\begin{equation}
F(\ket{\psi}\bra{\psi},\ket{\phi}\bra{\phi})
\leq F(\Psi(\ket{\psi}\bra{\psi}),\Psi(\ket{\phi}\bra{\phi})),
\end{equation}
where $\Psi$ denotes an arbitrary quantum channel \cite{nielsen_chuang_2010}.
Hence, if there exists $\Psi_{0}$ that satisfies (\ref{eq:jc_cond_1}) and (\ref{eq:jc_cond_2}), the property of the fidelity implies that
\begin{equation}
\left|\braket{\psi|\phi}\right|
\leq F(\Psi_{0}(\ket{\psi}\bra{\psi}),\Psi_{0}(\ket{\phi}\bra{\phi}))
=\left|\braket{e|f}\right|,
\end{equation}
which contradicts $\left|\braket{e|f}\right|<\left|\braket{\psi|\phi}\right|$.
Furthermore, we present the following proposition.
\begin{pro}
If $\left|\braket{\psi|\phi}\right|\leq\left|\braket{e|f}\right|$, then there exists a quantum channel $\Psi_{0}$ that satisfies (\ref{eq:jc_cond_1}) and (\ref{eq:jc_cond_2}).
\end{pro}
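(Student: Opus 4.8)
The plan is to build the channel $\Psi_0$ explicitly from Kraus operators, exploiting the rigidity that the requirement ``pure goes to pure'' imposes. First I would record the key observation: a channel $\Psi_0(\cdot)=\sum_i K_i(\cdot)K_i^{\dagger}$ sends $\ket{\psi}\bra{\psi}$ to the pure state $\ket{e}\bra{e}$ if and only if every Kraus operator satisfies $K_i\ket{\psi}=\lambda_i\ket{e}$ for some scalar $\lambda_i$, because $\Psi_0(\ket{\psi}\bra{\psi})=\sum_i K_i\ket{\psi}\bra{\psi}K_i^{\dagger}$ is a sum of positive rank-one terms and is rank one exactly when all the vectors $K_i\ket{\psi}$ are parallel. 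Imposing the analogous condition $K_i\ket{\phi}=\mu_i\ket{f}$ and assuming $\left|\braket{\psi|\phi}\right|<1$ (so that $\ket{\psi},\ket{\phi}$ are linearly independent; the borderline case $\left|\braket{\psi|\phi}\right|=1$ forces $\left|\braket{e|f}\right|=1$ and is settled by a constant preparation channel $A\mapsto(\tr A)\ket{e}\bra{e}$), each admissible Kraus operator is pinned down to the form $K_i=\lambda_i\ket{e}\bra{\tilde{\psi}}+\mu_i\ket{f}\bra{\tilde{\phi}}$, where $\{\bra{\tilde{\psi}},\bra{\tilde{\phi}}\}$ is the dual basis of $\{\ket{\psi},\ket{\phi}\}$, characterized by $\braket{\tilde{\psi}|\psi}=\braket{\tilde{\phi}|\phi}=1$ and $\braket{\tilde{\psi}|\phi}=\braket{\tilde{\phi}|\psi}=0$.

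Complete positivity is then automatic, so the only remaining constraint is trace preservation, $\sum_i K_i^{\dagger}K_i=I$. Expanding the left-hand side in the dyadic frame $\{\ket{\tilde{\psi}}\bra{\tilde{\psi}},\ket{\tilde{\phi}}\bra{\tilde{\phi}},\ket{\tilde{\psi}}\bra{\tilde{\phi}},\ket{\tilde{\phi}}\bra{\tilde{\psi}}\}$, I would collect its coefficients into $A=\sum_i|\lambda_i|^2$, $B=\sum_i|\mu_i|^2$, and $C=\braket{e|f}\sum_i\overline{\lambda_i}\mu_i$, and then match these against the expansion of $I$ in the same non-orthonormal frame. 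Those latter coefficients I would compute by sandwiching the identity between $\ket{\psi}$ and $\ket{\phi}$: evaluating $\braket{\psi|I|\psi}$, $\braket{\phi|I|\phi}$, and $\braket{\psi|I|\phi}$ and using the defining relations of the dual basis shows that the two diagonal coefficients equal $1$ while the off-diagonal coefficient equals $\braket{\psi|\phi}$. Hence trace preservation is equivalent to the system $A=B=1$ together with $\sum_i\overline{\lambda_i}\mu_i=\braket{\psi|\phi}/\braket{e|f}$ (the subcase $\braket{e|f}=0$, which by hypothesis forces $\braket{\psi|\phi}=0$ and reduces to mapping one orthonormal basis to another by a unitary, I would dispatch separately).

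At this point the problem is purely a question about vectors in $\mathbb{C}^n$: do there exist $\vec{\lambda},\vec{\mu}$ with $\|\vec{\lambda}\|=\|\vec{\mu}\|=1$ and prescribed inner product $\sum_i\overline{\lambda_i}\mu_i=\braket{\psi|\phi}/\braket{e|f}$? By the Cauchy--Schwarz inequality such vectors exist if and only if $\left|\braket{\psi|\phi}/\braket{e|f}\right|\leq 1$, that is, precisely when $\left|\braket{\psi|\phi}\right|\leq\left|\braket{e|f}\right|$, which is exactly the hypothesis. For sufficiency I would exhibit an explicit two-term solution, for instance $\vec{\lambda}=(1,0)$ and $\vec{\mu}=(\braket{\psi|\phi}/\braket{e|f},\sqrt{1-\left|\braket{\psi|\phi}/\braket{e|f}\right|^2})$, and read off $\Psi_0(\cdot)=K_1(\cdot)K_1^{\dagger}+K_2(\cdot)K_2^{\dagger}$ as the desired channel.

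I expect the main obstacle to be the middle step: correctly identifying the expansion of $I$ in the non-orthonormal dual frame and tracking the complex conjugates and the phase of $\braket{e|f}$, since a slip there would corrupt the clean reduction to $\left|\braket{\psi|\phi}\right|\leq\left|\braket{e|f}\right|$. Once the coefficients $1,1,\braket{\psi|\phi}$ are in hand, the remaining ingredients---the rank-one characterization of purity-preservation, automatic complete positivity of the Kraus form, and the Cauchy--Schwarz feasibility criterion---are routine.
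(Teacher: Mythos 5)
Your proof is correct, and it takes a genuinely different route from the paper's. The paper proves existence purely by explicit construction in three cases: for $\left|\braket{e|f}\right|=1$ the depolarizing channel onto $\ket{e}\bra{e}$; for $\braket{\psi|\phi}=0$ an explicit CQ entanglement breaking channel; and in the generic case $0<\left|\braket{\psi|\phi}\right|\leq\left|\braket{e|f}\right|<1$ a channel of the generalized-extreme-point form $\Psi_{V}\circ\Lambda\circ\Psi_{U}$ from (\ref{eq:gep_dec})--(\ref{eq:gep_mat}), with $\cos{u_{1}}$, $\cos{u_{2}}$, $\sin{u_{1}}\sin{u_{2}}$ given as functions of the two overlaps and the verification left implicit. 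You instead characterize \emph{all} channels mapping the pure pair to the pure pair: rank-one rigidity pins each Kraus operator to $K_{i}=\lambda_{i}\ket{e}\bra{\tilde{\psi}}+\mu_{i}\ket{f}\bra{\tilde{\phi}}$ (this uses that $\ket{\psi},\ket{\phi}$ span the qubit space when $\left|\braket{\psi|\phi}\right|<1$ --- the same dimension restriction the paper works under), and trace preservation, read off correctly in the dual frame (diagonal coefficients $1,1$, off-diagonal $\braket{\psi|\phi}$), becomes exactly $\|\vec{\lambda}\|=\|\vec{\mu}\|=1$ and $\sum_{i}\overline{\lambda_{i}}\mu_{i}=\braket{\psi|\phi}/\braket{e|f}$, feasible by Cauchy--Schwarz precisely when $\left|\braket{\psi|\phi}\right|\leq\left|\braket{e|f}\right|$; your degenerate cases ($\left|\braket{\psi|\phi}\right|=1$ and $\braket{e|f}=0$) are dispatched correctly. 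What your route buys: a single uniform construction in place of three ad hoc ones, and, since Cauchy--Schwarz is also necessary, a self-contained proof of the full equivalence in Proposition \ref{pro:jc_c_iff} that does not appeal to fidelity monotonicity or to the Alberti--Uhlmann theorem. What the paper's route buys: its maximizer is exhibited directly inside the generalized extreme points of (\ref{eq:gep_mat}), the parametrization reused throughout Section \ref{sec:est}; note that your two-operator solution $K_{1}=\ket{e}\bra{\tilde{\psi}}+c\ket{f}\bra{\tilde{\phi}}$, $K_{2}=\sqrt{1-|c|^{2}}\,\ket{f}\bra{\tilde{\phi}}$ with $c=\braket{\psi|\phi}/\braket{e|f}$ is such a two-Kraus channel in disguise. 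One cosmetic point: ``all the vectors $K_{i}\ket{\psi}$ are parallel'' should read ``all nonzero $K_{i}\ket{\psi}$ are proportional to $\ket{e}$,'' which follows by sandwiching the output $\ket{e}\bra{e}$ with a vector orthogonal to $\ket{e}$.
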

Indeed, if $\left|\braket{e|f}\right|=1$, the depolarizing channel that outputs $\ket{e}\bra{e}$ serves as this quantum channel.
If $\left|\braket{\psi|\phi}\right|=0$, it is implemented by the entanglement breaking channel defined as
\begin{equation}
\ket{e}\braket{\psi|A|\psi}\bra{e}+\ket{f}\braket{\phi|A|\phi}\bra{f}
\end{equation}
for an arbitrary linear operator $A$.
In the remaining case where $0<\left|\braket{\psi|\phi}\right|\leq\left|\braket{e|f}\right|<1$, it is provided by the quantum channel in the generalized extreme points, as defined by
\begin{equation}
\Psi_{V}\circ\Lambda\circ\Psi_{U}.
\end{equation}
Here, $V$ and $U$ are defined as
\begin{align}
V&=\ket{e_{+}}\bra{0}+\ket{e_{-}}\bra{1}, \\
U&=\ket{0}\bra{\psi_{+}}+\ket{1}\bra{\psi_{-}},
\end{align}
where the pure states are defined as
\begin{align}
\ket{e_{\pm}}&=\frac{1}{\sqrt{2\left(1\pm\left|\braket{e|f}\right|\right)}}\left(\ket{e}\pm\frac{\braket{f|e}}{\left|\braket{e|f}\right|}\ket{f}\right), \\
\ket{\psi_{\pm}}&=\frac{1}{\sqrt{2\left(1\pm\left|\braket{\psi|\phi}\right|\right)}}\left(\ket{\psi}\pm\frac{\braket{\phi|\psi}}{\left|\braket{\psi|\phi}\right|}\ket{\phi}\right).
\end{align}
The parameters of $\Lambda$ in (\ref{eq:gep_mat}) are defined as
\begin{align}
\cos{u_{1}}&=\sqrt{\frac{1-\left|\braket{e|f}\right|^{2}}{1-\left|\braket{\psi|\phi}\right|^{2}}}, \\
\cos{u_{2}}&=\frac{\left|\braket{\psi|\phi}\right|}{\left|\braket{e|f}\right|}\sqrt{\frac{1-\left|\braket{e|f}\right|^{2}}{1-\left|\braket{\psi|\phi}\right|^{2}}}, \\
\sin{u_{1}}\sin{u_{2}}&=\frac{\left|\braket{e|f}\right|^{2}-\left|\braket{\psi|\phi}\right|^{2}}{\left(1-\left|\braket{\psi|\phi}\right|^{2}\right)\left|\braket{e|f}\right|}.
\end{align}
Therefore, we can rephrase the joint convertibility within $\mathcal{C}$ using the following proposition, which represents a special case of the theorem formulated in Ref. \citen{ALBERTI1980163}.
\begin{pro} \label{pro:jc_c_iff}
$(\ket{\psi},\ket{\phi})$ is jointly convertible into $(\ket{e},\ket{f})$ within the set of quantum channels $\mathcal{C}$ if and only if $\left|\braket{\psi|\phi}\right|\leq\left|\braket{e|f}\right|$.
\end{pro}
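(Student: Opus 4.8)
The plan is to establish the two implications of the biconditional separately, drawing on facts already assembled in this section. The forward (``only if'') direction follows immediately from the fidelity argument preceding the statement, while the reverse (``if'') direction is precisely the content of the explicit construction given just above; the work of Proposition~\ref{pro:jc_c_iff} is therefore to package these into a single equivalence, since neither half requires new machinery.

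For the ``only if'' direction I would argue through fidelity. Suppose $(\ket{\psi},\ket{\phi})$ is jointly convertible into $(\ket{e},\ket{f})$ within $\mathcal{C}$, so that some $\Psi\in\mathcal{C}$ satisfies (\ref{eq:jc_cond_1}) and (\ref{eq:jc_cond_2}). Monotonicity of fidelity under channels gives $F(\ket{\psi}\bra{\psi},\ket{\phi}\bra{\phi})\leq F(\Psi(\ket{\psi}\bra{\psi}),\Psi(\ket{\phi}\bra{\phi}))$, and substituting the two conversion conditions turns the right-hand side into $F(\ket{e}\bra{e},\ket{f}\bra{f})$. Since for pure states the fidelity reduces to the modulus of the overlap, this reads $\left|\braket{\psi|\phi}\right|\leq\left|\braket{e|f}\right|$, which is exactly the claimed inequality.

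For the ``if'' direction I would invoke the construction immediately preceding the proposition, which produces, for any inputs with $\left|\braket{\psi|\phi}\right|\leq\left|\braket{e|f}\right|$, an explicit channel $\Psi_{0}$ realizing the conversion. I would split into the three cases already identified: when $\left|\braket{e|f}\right|=1$ a depolarizing channel outputting $\ket{e}\bra{e}$ suffices; when $\left|\braket{\psi|\phi}\right|=0$ the displayed entanglement breaking channel works; and in the generic regime $0<\left|\braket{\psi|\phi}\right|\leq\left|\braket{e|f}\right|<1$ the generalized-extreme-point channel $\Psi_{V}\circ\Lambda\circ\Psi_{U}$ with the stated unitaries and parameters does the job. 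The existence of such a $\Psi_{0}\in\mathcal{C}$ is exactly joint convertibility within $\mathcal{C}$, closing the equivalence.

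The main obstacle is not in the proposition itself---which merely assembles the two directions---but in certifying the generic case of the construction. There one must check both that the chosen parameters define an admissible channel, i.e.\ that the values of $\cos u_{1}$, $\cos u_{2}$ and $\sin u_{1}\sin u_{2}$ are mutually consistent and correspond to genuine angles $u_{1}\in[0,2\pi)$, $u_{2}\in[0,\pi)$ filling out the matrix (\ref{eq:gep_mat}), and that the resulting map sends $\ket{\psi}\bra{\psi}\mapsto\ket{e}\bra{e}$ and $\ket{\phi}\bra{\phi}\mapsto\ket{f}\bra{f}$. I would verify the latter by passing through the intermediate frames: $\Psi_{U}$ rotates $\ket{\psi},\ket{\phi}$ to the symmetric and antisymmetric combinations, $\Lambda$ with the tabulated diagonal acts on that frame, and $\Psi_{V}$ carries the result back onto $\ket{e},\ket{f}$, so that the overlaps match through the defining relations for $\cos u_{1}$ and $\cos u_{2}$. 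Since this construction is supplied before the statement, the proof of Proposition~\ref{pro:jc_c_iff} reduces to citing these two established facts.
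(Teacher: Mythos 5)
Your proposal is correct and follows essentially the same route as the paper, which likewise derives the ``only if'' direction from monotonicity of fidelity under channels and the ``if'' direction from the explicit three-case construction stated just before Proposition~\ref{pro:jc_c_iff} (a depolarizing channel when $\left|\braket{e|f}\right|=1$, the displayed entanglement breaking channel when $\left|\braket{\psi|\phi}\right|=0$, and the generalized-extreme-point channel $\Psi_{V}\circ\Lambda\circ\Psi_{U}$ with the stated parameters in the generic regime). Your closing remark about verifying admissibility of the parameters in (\ref{eq:gep_mat}) is a fair point of diligence, but it does not alter the argument, which the paper treats in the same assembled fashion.
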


To quantify the joint convertibility within a subset $\mathcal{X}\subseteq\mathcal{C}$, we can employ the derived upper bounds.
Let us consider two PPOVMs
\begin{align}
\bm{S}
&=
\left\{\ket{\psi}\bra{\psi}^{T}\otimes\ket{e}\bra{e},\ket{\psi}\bra{\psi}^{T}\otimes\left(I-\ket{e}\bra{e}\right)\right\} \\
&=\left\{S_{m_{0}},S_{m_{1}}\right\}
\end{align}
and
\begin{align}
\bm{T}
&=
\left\{\ket{\phi}\bra{\phi}^{T}\otimes\ket{f}\bra{f},\ket{\phi}\bra{\phi}^{T}\otimes\left(I-\ket{f}\bra{f}\right)\right\} \\
&=\left\{T_{n_{0}},T_{n_{1}}\right\}.
\end{align}
The PPOVM $\bm{S}$ corresponds to the ancilla-free measurement that uses $\ket{\psi}$ as the input state and $\left\{\ket{e}\bra{e}, I-\ket{e}\bra{e}\right\}$ as the POVM measurement.
Similarly, $\bm{T}$ is the ancilla-free measurement that uses $\ket{\phi}$ as the input state and $\left\{\ket{f}\bra{f}, I-\ket{f}\bra{f}\right\}$ as the POVM measurement.
Let us assume that a quantum channel $\Psi\in\mathcal{X}$ is measured randomly by $\bm{S}$ and $\bm{T}$.
The probability distributions associated with $\bm{S}$ and $\bm{T}$ are represented by $\left\{p_{m_{0}}(\Psi),p_{m_{1}}(\Psi)\right\}$ and $\left\{q_{n_{0}}(\Psi),q_{n_{1}}(\Psi)\right\}$, respectively.
The probability of obtaining $m_{0}$ or $n_{0}$ is given by
\begin{align}
\frac{1}{2}p_{m_{0}}(\Psi)+\frac{1}{2}q_{n_{0}}(\Psi)
&=\frac{1}{2}\braket{e|\Psi(\ket{\psi}\bra{\psi})|e}+\frac{1}{2}\braket{f|\Psi(\ket{\phi}\bra{\phi})|f} \\
&=\frac{1}{2}F(\ket{e}\bra{e},\Psi(\ket{\psi}\bra{\psi}))^{2}+\frac{1}{2}F(\ket{f}\bra{f},\Psi(\ket{\phi}\bra{\phi}))^{2}.
\end{align}
Evidently, the last equality indicates that this quantity represents the average (squared) fidelity, meaning that the average closeness between the inputs and outputs.
Thus, its maximum
\begin{equation}
\max_{\Psi\in\mathcal{X}}\left\{\frac{1}{2}p_{m_{0}}(\Psi)+\frac{1}{2}q_{n_{0}}(\Psi)\right\} \label{eq:m0_n0_max}
\end{equation}
indicates, within $\mathcal{X}$, how close $\Psi(\ket{\psi}\bra{\psi})$ and $\Psi(\ket{\phi}\bra{\phi})$ can be to $\ket{e}\bra{e}$ and $\ket{f}\bra{f}$ at the same time.
Indeed, it is simple to verify that $(\ket{\psi},\ket{\phi})$ is jointly convertible into $(\ket{e},\ket{f})$ within $\mathcal{X}$ if and only if (\ref{eq:m0_n0_max}) equals unity.
Therefore, the problem of the condition for the joint convertibility is translated into the problem of estimating (\ref{eq:m0_n0_max}).
We can rewrite (\ref{eq:m0_n0_max}) as
\begin{align}
&\max_{\Psi\in\mathcal{X}}\left\{\frac{1}{2}p_{m_{0}}(\Psi)+\frac{1}{2}q_{n_{0}}(\Psi)\right\} \\
&=\max_{\Psi\in\mathcal{X}}\left\{\tr\left[\frac{1}{2}\left(\ket{\psi}\bra{\psi}^{T}\otimes\ket{e}\bra{e}+\ket{\phi}\bra{\phi}^{T}\otimes\ket{f}\bra{f}\right)J_{\Psi}\right]\right\} \\
&=C(\tau,\mathcal{X}),
\end{align}
where we defined the density operator $\tau$ as
\begin{equation}
\tau=\frac{1}{2}\left(\ket{\psi}\bra{\psi}^{T}\otimes\ket{e}\bra{e}+\ket{\phi}\bra{\phi}^{T}\otimes\ket{f}\bra{f}\right).
\end{equation}
Consequently, the upper bounds (\ref{eq:bound_gd}), (\ref{eq:bound_unital}), (\ref{eq:bound_ue}), (\ref{eq:bound_ge}), and (\ref{eq:bound_ch}) define the conditions of the joint convertibility for each subset, and can be considered measures of the joint convertibility.
Note that these bounds are expressed in terms of the singular values of $N(\tau)$ and $\|\bm{b}(\tau_{2})\|$, where $\tau_{2}=\tr_{1}\tau$.
Thus, we only need to calculate these values.
To obtain the singular values of $N(\tau)$, we use the following lemma, with the proof provided in Appendix.
\begin{lem} \label{lem:sing_inv}
Let $\rho$ be an arbitrary quantum state in a two-qubit system.
The singular values of $N(\rho)$ are unchanged under transformations by local unitary channels.
In other words, the set of the singular values of $N(\rho)$ is equal to that of $N((V_{1}\otimes V_{2})\rho(V^{\dagger}_{1}\otimes V^{\dagger}_{2}))$, where $V_{1}$ and $V_{2}$ are unitary operators.
\end{lem}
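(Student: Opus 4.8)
The plan is to show that applying local unitaries to $\rho$ amounts to multiplying the correlation matrix $N(\rho)$ on the left and right by orthogonal matrices, and then to invoke the fact that such multiplication leaves the singular values unchanged. The essential ingredient is already available in the proof of Proposition \ref{pro:ueb}: conjugation of a Pauli matrix by a single-qubit unitary is a rotation on the real span of $\{\sigma_1,\sigma_2,\sigma_3\}$. Concretely, for a unitary $V$ one writes $V\sigma_i V^{\dagger}=\sum_l O(V)_{li}\sigma_l$ with $O(V)_{li}=(1/2)\tr[\sigma_l V\sigma_i V^{\dagger}]$, and by (\ref{eq:cof_is_ortho}) the matrix $O(V)$ is orthogonal.

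First I would use cyclicity of the trace to move both local unitaries off $\rho$ and onto the Pauli matrices. Writing $V_1,V_2$ for the local unitaries, we have
\begin{align}
N((V_1\otimes V_2)\rho(V_1^{\dagger}\otimes V_2^{\dagger}))_{ij}
&=\tr[(V_1\otimes V_2)\rho(V_1^{\dagger}\otimes V_2^{\dagger})(\sigma_i\otimes\sigma_j)] \nonumber \\
&=\tr[\rho(V_1^{\dagger}\sigma_i V_1\otimes V_2^{\dagger}\sigma_j V_2)].
\end{align}
Next I would expand $V_1^{\dagger}\sigma_i V_1=\sum_k O(V_1^{\dagger})_{ki}\sigma_k$ and $V_2^{\dagger}\sigma_j V_2=\sum_l O(V_2^{\dagger})_{lj}\sigma_l$ as above, substitute them into the trace, and recognize the remaining trace as $N(\rho)_{kl}$. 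Collecting the two sums and reading off the transposes gives the matrix identity
\begin{equation}
N((V_1\otimes V_2)\rho(V_1^{\dagger}\otimes V_2^{\dagger}))=O(V_1^{\dagger})^{T}N(\rho)\,O(V_2^{\dagger}),
\end{equation}
where both $O(V_1^{\dagger})^{T}$ and $O(V_2^{\dagger})$ are orthogonal.

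Finally, since the singular values of a real matrix are invariant under left and right multiplication by orthogonal matrices---this follows because, writing $M$ for the left-hand side, $MM^{T}=O(V_1^{\dagger})^{T}N(\rho)N(\rho)^{T}O(V_1^{\dagger})$ is similar to $N(\rho)N(\rho)^{T}$ and hence shares its eigenvalues---the two correlation matrices have the same set of singular values, which is the claim. I do not expect a genuine obstacle here: the only point requiring care is the index bookkeeping and the placement of the transpose when converting the double sum into a matrix product, so that the orthogonal factors land correctly on either side of $N(\rho)$.
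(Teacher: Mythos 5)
Your proposal is correct and follows essentially the same route as the paper's proof in the Appendix: cyclicity of the trace, expansion of $V^{\dagger}\sigma_i V$ in the Pauli basis with orthogonal coefficient matrices, the identity $N(\rho')=O(V_1^{\dagger})^{T}N(\rho)O(V_2^{\dagger})$, and invariance of singular values under orthogonal factors. The only cosmetic difference is the last step, where you argue via similarity of $MM^{T}$ to $N(\rho)N(\rho)^{T}$ while the paper exhibits the eigenvector correspondence for $N(\rho')^{T}N(\rho')$ explicitly; these are the same argument.
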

We can now efficiently calculate the singular values of $N(\tau)$ using this lemma.
\begin{pro}
The set of the singular values of $N(\tau)$ is
\begin{align}
\left\{\left|\braket{\psi|\phi}\right|\left|\braket{e|f}\right|,\sqrt{\left(1-\left|\braket{\psi|\phi}\right|^{2}\right)\left(1-\left|\braket{e|f}\right|^{2}\right)},0\right\}. \label{eq:sv_corr}
\end{align}
\end{pro}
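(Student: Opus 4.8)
The plan is to exploit Lemma \ref{lem:sing_inv}, which allows me to replace $\tau$ by any local-unitary image without altering the singular values of its correlation matrix. First I would note that transposition acts on a pure projector by complex conjugation, $\ket{\psi}\bra{\psi}^{T}=\ket{\psi^{*}}\bra{\psi^{*}}$, and that conjugation preserves overlaps, $\left|\braket{\psi^{*}|\phi^{*}}\right|=\left|\braket{\psi|\phi}\right|$. Hence $\tau$ is a uniform mixture of two product projectors, $\tau=\frac{1}{2}\left(\ket{\psi^{*}}\bra{\psi^{*}}\otimes\ket{e}\bra{e}+\ket{\phi^{*}}\bra{\phi^{*}}\otimes\ket{f}\bra{f}\right)$, whose first-factor states have overlap $\left|\braket{\psi|\phi}\right|$ and whose second-factor states have overlap $\left|\braket{e|f}\right|$.

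Next I would bring each pair into a symmetric canonical form by a local unitary. Setting $\cos\beta=\left|\braket{\psi|\phi}\right|$ and $\cos\gamma=\left|\braket{e|f}\right|$ with $\beta,\gamma\in[0,\pi/2]$, there is a unitary $V_{1}$ on the first factor sending $\ket{\psi^{*}},\ket{\phi^{*}}$ (up to irrelevant global phases) to $\cos(\beta/2)\ket{0}\pm\sin(\beta/2)\ket{1}$, and a unitary $V_{2}$ on the second factor sending $\ket{e},\ket{f}$ to $\cos(\gamma/2)\ket{0}\pm\sin(\gamma/2)\ket{1}$; the freedom to absorb a global phase into each state is exactly what lets me make both real inner products equal the prescribed cosines. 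By Lemma \ref{lem:sing_inv}, the image $\tilde\tau=(V_{1}\otimes V_{2})\tau(V_{1}^{\dagger}\otimes V_{2}^{\dagger})$ has the same correlation-matrix singular spectrum as $\tau$.

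Then I would compute directly. The four canonical states have Bloch vectors $\bm{u}_{\pm}=(\pm\sin\beta,0,\cos\beta)$ and $\bm{v}_{\pm}=(\pm\sin\gamma,0,\cos\gamma)$, and for a product projector the correlation matrix factorizes as the outer product of the two Bloch vectors, $N(\ket{a}\bra{a}\otimes\ket{b}\bra{b})=\bm{b}(a)\,\bm{b}(b)^{T}$. Averaging, the cross terms in the $x$ and $z$ slots cancel and I obtain the diagonal matrix $N(\tilde\tau)=\tfrac{1}{2}(\bm{u}_{+}\bm{v}_{+}^{T}+\bm{u}_{-}\bm{v}_{-}^{T})=\mathrm{diag}(\sin\beta\sin\gamma,\,0,\,\cos\beta\cos\gamma)$. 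Its singular values are $\{\cos\beta\cos\gamma,\sin\beta\sin\gamma,0\}$, which, after substituting $\cos\beta=\left|\braket{\psi|\phi}\right|$ and $\cos\gamma=\left|\braket{e|f}\right|$ and writing $\sin\beta=\sqrt{1-\left|\braket{\psi|\phi}\right|^{2}}$, $\sin\gamma=\sqrt{1-\left|\braket{e|f}\right|^{2}}$, are exactly the three entries of (\ref{eq:sv_corr}).

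I expect the only real subtlety to be the canonicalization step: I must argue that a single unitary on a qubit can place any two pure states symmetrically about $\ket{0}$ in the $\{\ket{0},\ket{1}\}$ plane with a real nonnegative overlap, which follows from the fact that two ordered pairs of unit vectors sharing the same Gram matrix are related by a unitary. I also have to track the complex conjugation introduced by the transpose and confirm that the global phases I discard do not affect the projectors $\ket{\psi^{*}}\bra{\psi^{*}}$ and $\ket{e}\bra{e}$. Once the canonical form is justified, the evaluation of $N(\tilde\tau)$ is a short, routine calculation.
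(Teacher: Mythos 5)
Your proof is correct and takes essentially the same route as the paper's: invoke Lemma \ref{lem:sing_inv}, canonicalize both pairs of states by local unitaries into real, symmetric superpositions of $\ket{0}$ and $\ket{1}$ (the paper's $V_{1}\ket{\psi}=\sqrt{(1+\left|\braket{\psi|\phi}\right|)/2}\,\ket{0}+\sqrt{(1-\left|\braket{\psi|\phi}\right|)/2}\,\ket{1}$ is exactly your $\cos(\beta/2)\ket{0}+\sin(\beta/2)\ket{1}$), and read off the diagonal correlation matrix $\mathrm{diag}(\sin\beta\sin\gamma,0,\cos\beta\cos\gamma)$ from the outer-product structure of $N$ on product projectors. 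Your only departures are cosmetic bookkeeping: you absorb the transpose into complex conjugation at the outset (the paper introduces $\ket{\overline{\psi'}}$ after transforming and then notes the canonical states are real), and your Gram-matrix canonicalization covers the degenerate overlaps $0$ and $1$ uniformly where the paper treats them by explicit case-by-case substitution.
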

\begin{proof}
Based on Lemma \ref{lem:sing_inv}, we can use local unitary channels to $\tau$ to simplify the calculation of the singular values of $N(\tau)$.
Let $V_{1}$ and $V_{2}$ be arbitrary unitary operators.
By defining
\begin{align}
\tau'
=((V^{\dagger}_{1})^{T}\otimes V_{2})\tau(V^{T}_{1}\otimes V^{\dagger}_{2}), \label{eq:def_tau'}
\end{align}
it holds that
\begin{align}
\tau'
&=\frac{1}{2}\left(\left(\ket{\psi'}\bra{\psi'}\right)^{T}\otimes\ket{e'}\bra{e'}+\left(\ket{\phi'}\bra{\phi'}\right)^{T}\otimes\ket{f'}\bra{f'}\right),
\end{align}
where $\ket{\psi'}=V_{1}\ket{\psi}$, $\ket{\phi'}=V_{1}\ket{\phi}$, $\ket{e'}=V_{2}\ket{e}$, and $\ket{f'}=V_{2}\ket{f}$.
Additionally, for an arbitrary pure state $\ket{x}$ in a qubit system, by defining its complex conjugate as
\begin{align}
\ket{\overline{x}}=\braket{x|0}\ket{0}+\braket{x|1}\ket{1},
\end{align}
we obtain
\begin{align}
\tau'
=\frac{1}{2}\left(\ket{\overline{\psi'}}\bra{\overline{\psi'}}\otimes\ket{e'}\bra{e'}+\ket{\overline{\phi'}}\bra{\overline{\phi'}}\otimes\ket{f'}\bra{f'}\right).
\end{align}
From the definition of $N(\cdot)$, it holds that
\begin{align}
N(\tau')_{ij}
&=\frac{1}{2}\left(b(\overline{\psi'})_{i}b(e')_{j}+b(\overline{\phi'})_{i}b(f')_{j}\right), \label{eq:tra_corr}
\end{align}
where $b(\cdot)_{i}$ represents the $i$th element of the Bloch vector defined by $b(x)_{i}=\braket{x|\sigma_{i}|x}$ for an arbitrary state $\ket{x}$.
Let the unitary operators be
\begin{align}
V_{1}&=\ket{0}\bra{\psi_{+}}+\ket{1}\bra{\psi_{-}}, \label{eq:def_v1} \\
V_{2}&=\ket{0}\bra{e_{+}}+\ket{1}\bra{e_{-}}, \label{eq:def_v2}
\end{align}
where the pure states are defined as
\begin{align}
\ket{\psi_{\pm}}&=\frac{1}{\sqrt{2(1\pm\left|\braket{\psi|\phi}\right|)}}\left(\ket{\psi}\pm\frac{\braket{\phi|\psi}}{\left|\braket{\psi|\phi}\right|}\ket{\phi}\right), \label{eq:psi_pm} \\
\ket{e_{\pm}}&=\frac{1}{\sqrt{2(1\pm\left|\braket{e|f}\right|)}}\left(\ket{e}\pm\frac{\braket{f|e}}{\left|\braket{e|f}\right|}\ket{f}\right). \label{eq:e_pm}
\end{align}
We assumed that $0<\left|\braket{\psi|\phi}\right|<1$ and $0<\left|\braket{e|f}\right|<1$ because the calculation is more simple when one of the conditions does not hold.
It is evident that
\begin{align}
\braket{\psi_{\pm}|\psi}
&=\sqrt{\frac{1\pm\left|\braket{\psi|\phi}\right|}{2}}, \\
\braket{\psi_{\pm}|\phi}
&=\pm\frac{\braket{\psi|\phi}}{\left|\braket{\psi|\phi}\right|}\sqrt{\frac{1\pm\left|\braket{\psi|\phi}\right|}{2}},
\end{align}
and by a similar calculation, it is also valid that
\begin{align}
\braket{e_{\pm}|e}&=\sqrt{\frac{1\pm\left|\braket{e|f}\right|}{2}}, \\
\braket{e_{\pm}|f}&=\pm\frac{\braket{e|f}}{\left|\braket{e|f}\right|}\sqrt{\frac{1\pm\left|\braket{e|f}\right|}{2}}.
\end{align}
Hence, $V_{1}$ acts $\ket{\psi}$ and $\ket{\phi}$ as
\begin{align}
V_{1}\ket{\psi}
&=\sqrt{\frac{1+\left|\braket{\psi|\phi}\right|}{2}}\ket{0}+\sqrt{\frac{1-\left|\braket{\psi|\phi}\right|}{2}}\ket{1}, \\
V_{1}\ket{\phi}
&=\frac{\braket{\psi|\phi}}{\left|\braket{\psi|\phi}\right|}\left(\sqrt{\frac{1+\left|\braket{\psi|\phi}\right|}{2}}\ket{0}-\sqrt{\frac{1-\left|\braket{\psi|\phi}\right|}{2}}\ket{1}\right),
\end{align}
and $V_{2}$ also acts $\ket{e}$ and $\ket{f}$ as
\begin{align}
V_{2}\ket{e}
&=\sqrt{\frac{1+\left|\braket{e|f}\right|}{2}}\ket{0}+\sqrt{\frac{1-\left|\braket{e|f}\right|}{2}}\ket{1}, \\
V_{2}\ket{f}
&=\frac{\braket{e|f}}{\left|\braket{e|f}\right|}\left(\sqrt{\frac{1+\left|\braket{e|f}\right|}{2}}\ket{0}-\sqrt{\frac{1-\left|\braket{e|f}\right|}{2}}\ket{1}\right).
\end{align}
Evidently, it holds that $\ket{\overline{\psi'}}\bra{\overline{\psi'}}=\ket{\psi'}\bra{\psi'}$ and $\ket{\overline{\phi'}}\bra{\overline{\phi'}}=\ket{\phi'}\bra{\phi'}$.
Therefore, we can obtain the Bloch vectors as
\begin{align}
\bm{b}(\psi')&=(\sqrt{1-\left|\braket{\psi|\phi}\right|^{2}},0,\left|\braket{\psi|\phi}\right|), \\
\bm{b}(\phi')&=(-\sqrt{1-\left|\braket{\psi|\phi}\right|^{2}},0,\left|\braket{\psi|\phi}\right|), \\
\bm{b}(e')&=(\sqrt{1-\left|\braket{e|f}\right|^{2}},0,\left|\braket{e|f}\right|), \\
\bm{b}(f')&=(-\sqrt{1-\left|\braket{e|f}\right|^{2}},0,\left|\braket{e|f}\right|).
\end{align}
Hence, $N(\tau')$ is expressed as
\begin{equation}
N(\tau')=
\begin{pmatrix}
\sqrt{1-\left|\braket{\psi|\phi}\right|^{2}}\sqrt{1-\left|\braket{e|f}\right|^{2}} & 0 & 0 \\
0 & 0 & 0 \\
0 & 0 & \left|\braket{\psi|\phi}\right|\left|\braket{e|f}\right| \\
\end{pmatrix}
,
\end{equation}
and it is evident that the set of the singular values of $N(\tau')$ is
\begin{align}
\left\{\left|\braket{\psi|\phi}\right|\left|\braket{e|f}\right|,\sqrt{\left(1-\left|\braket{\psi|\phi}\right|^{2}\right)\left(1-\left|\braket{e|f}\right|^{2}\right)},0\right\}.
\end{align}
Lemma \ref{lem:sing_inv} guarantees that the set of the singular values of $N(\tau)$ remains unchanged under the application of the unitary channel given by $V^{T}_{1}\otimes V^{\dagger}_{2}$.
While we assumed that $0<\left|\braket{\psi|\phi}\right|<1$ and $0<\left|\braket{e|f}\right|<1$, the result is also applicable if one of the conditions does not hold.
In fact, the same calculation can be performed by replacing $\ket{\psi_{\pm}}$ in (\ref{eq:psi_pm}) with
\begin{equation}
\ket{\psi_{\pm}}=\frac{1}{\sqrt{2}}\left(\ket{\psi}\pm\ket{\phi}\right)
\end{equation}
if $\left|\braket{\psi|\phi}\right|=0$, and
\begin{align}
\ket{\psi_{+}}&=\ket{\psi} \\
\ket{\psi_{-}}&=\ket{\psi^{\perp}}
\end{align}
if $\left|\braket{\psi|\phi}\right|=1$.
Here, $\ket{\psi^{\perp}}$ is a unit vector orthogonal to $\ket{\psi}$.
Likewise, one simply needs to replace $\ket{e_{\pm}}$ in (\ref{eq:e_pm}) with
\begin{equation}
\ket{e_{\pm}}=\frac{1}{\sqrt{2}}\left(\ket{e}\pm\ket{f}\right)
\end{equation}
if $\left|\braket{e|f}\right|=0$, and
\begin{align}
\ket{e_{+}}&=\ket{e} \\
\ket{e_{-}}&=\ket{e^{\perp}}
\end{align}
if $\left|\braket{e|f}\right|=1$.
Here, $\ket{e^{\perp}}$ is a unit vector orthogonal to $\ket{e}$.
\end{proof}
It is evident that the norm of the Bloch vector of
\begin{align}
\tau_{2}
&=\tr_{1}\tau \\
&=\frac{1}{2}\left(\ket{e}\bra{e}+\ket{f}\bra{f}\right)
\end{align}
is given by
\begin{equation}
\left\|\bm{b}(\tau_{2})\right\|=\left|\braket{e|f}\right|. \label{eq:bloch_norm}
\end{equation}
Using (\ref{eq:sv_corr}) and (\ref{eq:bloch_norm}), we can examine the joint convertibility for the specific classes discussed in the previous section.

\subsection{Depolarizing channels}
Based on Proposition \ref{pro:gdc} and (\ref{eq:bloch_norm}), the following corollary is established.
\begin{cor}
It holds that
\begin{equation}
\max_{\Psi\in\mathcal{D}}\left\{\frac{1}{2}p_{m_{0}}(\Psi)+\frac{1}{2}q_{n_{0}}(\Psi)\right\}
=\frac{1}{2}\left(1+\left|\braket{e|f}\right|\right), \label{eq:gd_jc}
\end{equation}
The expression on the right-hand side of (\ref{eq:gd_jc}) becomes unity if and only if $\left|\braket{e|f}\right|=1$.
Therefore, $(\ket{\psi}, \ket{\phi})$ is jointly convertible into $(\ket{e},\ket{f})$ within $\mathcal{D}$ if and only if $\left|\braket{e|f}\right|=1$.
\end{cor}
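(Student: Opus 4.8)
The plan is to assemble three ingredients that are already in place in this section, so that the corollary reduces to a direct substitution. First I would invoke the general identity established just before this subsection, namely that for any class $\mathcal{X}$ one has
\begin{equation}
\max_{\Psi\in\mathcal{X}}\left\{\frac{1}{2}p_{m_{0}}(\Psi)+\frac{1}{2}q_{n_{0}}(\Psi)\right\}=C(\tau,\mathcal{X}),
\end{equation}
where $\tau=\frac{1}{2}\left(\ket{\psi}\bra{\psi}^{T}\otimes\ket{e}\bra{e}+\ket{\phi}\bra{\phi}^{T}\otimes\ket{f}\bra{f}\right)$. Specializing to $\mathcal{X}=\mathcal{D}$ turns the left-hand side of the claimed equality into $C(\tau,\mathcal{D})$, so the whole task becomes evaluating this single quantity for the particular $\tau$ above.

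Next I would apply Proposition \ref{pro:gdc}, which gives $C(\tau,\mathcal{D})=\frac{1}{2}\left(1+\|\bm{b}(\tau_{2})\|\right)$ with $\tau_{2}=\tr_{1}\tau$. The final ingredient is the value already computed in (\ref{eq:bloch_norm}): since $\tau_{2}=\frac{1}{2}\left(\ket{e}\bra{e}+\ket{f}\bra{f}\right)$, its Bloch vector has norm $\|\bm{b}(\tau_{2})\|=\left|\braket{e|f}\right|$. Chaining these two facts yields $C(\tau,\mathcal{D})=\frac{1}{2}\left(1+\left|\braket{e|f}\right|\right)$, which is exactly (\ref{eq:gd_jc}).

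For the convertibility statement I would use the equivalence, noted earlier in this section, that $(\ket{\psi},\ket{\phi})$ is jointly convertible into $(\ket{e},\ket{f})$ within $\mathcal{X}$ if and only if the above maximum equals unity. Because $\frac{1}{2}\left(1+\left|\braket{e|f}\right|\right)=1$ holds precisely when $\left|\braket{e|f}\right|=1$, and $\left|\braket{e|f}\right|\leq 1$ always, the convertibility within $\mathcal{D}$ is equivalent to $\left|\braket{e|f}\right|=1$, completing the argument.

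I do not expect a genuine obstacle here: the substantive work was done in Proposition \ref{pro:gdc} and in the derivation of (\ref{eq:bloch_norm}), and this corollary is a matter of combining them with the identification of the maximum as $C(\tau,\mathcal{D})$. The only point worth verifying is that (\ref{eq:bloch_norm}) was derived for exactly the reduced state $\tau_{2}$ that arises from the PPOVMs $\bm{S}$ and $\bm{T}$ used here, which it was, so no additional computation of the Bloch vector is required.
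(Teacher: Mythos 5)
Your proposal is correct and matches the paper's own (implicit) argument exactly: the paper derives this corollary precisely by combining the identity $\max_{\Psi\in\mathcal{X}}\{\frac{1}{2}p_{m_{0}}(\Psi)+\frac{1}{2}q_{n_{0}}(\Psi)\}=C(\tau,\mathcal{X})$ with Proposition \ref{pro:gdc} and the Bloch-norm computation (\ref{eq:bloch_norm}). Nothing further is needed.
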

It is easy to verify that the maximum in (\ref{eq:gd_jc}) is achieved by the depolarizing channel which produces
\begin{equation}
\ket{e_{+}}=\frac{1}{\sqrt{2(1+\left|\braket{e|f}\right|)}}\left(\ket{e}+\frac{\braket{f|e}}{\left|\braket{e|f}\right|}\ket{f}\right).
\end{equation}
if $\left|\braket{e|f}\right|\neq0$.
If $\left|\braket{e|f}\right|=0$, it is achieved by an arbitrary depolarizing channel.

\subsection{Unital channels}
Based on Proposition \ref{eq:pro_unital} and (\ref{eq:sv_corr}), we obtain the following corollary.
\begin{cor}
It holds that
\begin{align}
&\max_{\Psi\in\mathcal{R}}\left\{\frac{1}{2}p_{m_{0}}(\Psi)+\frac{1}{2}q_{n_{0}}(\Psi)\right\} \nonumber \\
&=\frac{1}{2}\left(1+\left|\braket{\psi|\phi}\right|\left|\braket{e|f}\right|+\sqrt{\left(1-\left|\braket{\psi|\phi}\right|^{2}\right)\left(1-\left|\braket{e|f}\right|^{2}\right)}\right). \label{eq:unital_jc}
\end{align}
The expression on the right-hand side becomes unity if and only if $\left|\braket{\psi|\phi}\right|=\left|\braket{e|f}\right|$.
Therefore, $(\ket{\psi}, \ket{\phi})$ is jointly convertible into $(\ket{e},\ket{f})$ within $\mathcal{R}$ if and only if $\left|\braket{\psi|\phi}\right|=\left|\braket{e|f}\right|$.
\end{cor}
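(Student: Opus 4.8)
The plan is to assemble three facts already established earlier in the paper. First, recall from the joint-convertibility setup that the maximum in question equals $C(\tau,\mathcal{R})$ for the density operator $\tau=\frac{1}{2}(\ket{\psi}\bra{\psi}^{T}\otimes\ket{e}\bra{e}+\ket{\phi}\bra{\phi}^{T}\otimes\ket{f}\bra{f})$. Next, Proposition \ref{eq:pro_unital} gives $C(\tau,\mathcal{R})=\frac{1}{2}(1+\|N(\tau)\|_{KF})$. Since the Ky Fan norm is by its definition $\|X\|_{KF}=\tr\sqrt{XX^{T}}$ nothing but the sum of the singular values of $N(\tau)$, I would simply substitute the three singular values listed in (\ref{eq:sv_corr}). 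Adding them (the third one vanishing) produces exactly the right-hand side of (\ref{eq:unital_jc}), which settles the formula.

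For the characterization of when this value equals unity, I would abbreviate $a=|\braket{\psi|\phi}|$ and $b=|\braket{e|f}|$, both in $[0,1]$, and reduce the target equation to $ab+\sqrt{(1-a^{2})(1-b^{2})}=1$. The cleanest route is the substitution $a=\cos\alpha$, $b=\cos\beta$ with $\alpha,\beta\in[0,\pi/2]$, under which the left-hand side becomes $\cos\alpha\cos\beta+\sin\alpha\sin\beta=\cos(\alpha-\beta)$, equal to unity exactly when $\alpha=\beta$, i.e. $a=b$. Equivalently, one can argue algebraically: because $1-ab\geq0$ on this domain, squaring $\sqrt{(1-a^{2})(1-b^{2})}=1-ab$ collapses to $(a-b)^{2}=0$, again forcing $a=b$.

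Since the substantive computations—identifying the maximum as $C(\tau,\mathcal{R})$, the explicit formula for $C(\tau,\mathcal{R})$, and the singular values of $N(\tau)$—are all in hand, I do not expect a genuine obstacle here. The only points requiring a moment of care are recognizing that the Ky Fan norm coincides with the sum of singular values, and checking the sign condition $1-ab\geq0$ that legitimizes squaring in the algebraic version. The closing assertion, that the equality $|\braket{\psi|\phi}|=|\braket{e|f}|$ is equivalent to joint convertibility within $\mathcal{R}$, then follows immediately from the earlier observation that joint convertibility within a class $\mathcal{X}$ holds precisely when the corresponding maximum attains unity.
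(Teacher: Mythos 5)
Your proposal is correct and follows essentially the same route as the paper: the formula is obtained exactly as there, by identifying the maximum with $C(\tau,\mathcal{R})$, invoking Proposition \ref{eq:pro_unital}, and summing the singular values in (\ref{eq:sv_corr}). The only deviation is the equality characterization, where you use the substitution $a=\cos\alpha$, $b=\cos\beta$ (so the left-hand side is $\cos(\alpha-\beta)$), or equivalently squaring with the check $1-ab\geq 0$, while the paper applies the arithmetic--geometric mean inequality with its equality condition $a^{2}=b^{2}$ and $1-a^{2}=1-b^{2}$; both are valid one-line verifications of the same fact.
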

\begin{proof}
The necessary and sufficient condition can be shown using the arithmetic-geometric mean inequality.
Indeed, the inequality
\begin{align}
&\left|\braket{\psi|\phi}\right|\left|\braket{e|f}\right|+\sqrt{\left(1-\left|\braket{\psi|\phi}\right|^{2}\right)\left(1-\left|\braket{e|f}\right|^{2}\right)} \nonumber \\
&\leq\frac{\left|\braket{\psi|\phi}\right|^{2}+\left|\braket{e|f}\right|^{2}}{2}+\frac{1-\left|\braket{\psi|\phi}\right|^{2}+1-\left|\braket{e|f}\right|^{2}}{2} \nonumber \\
&=1
\end{align}
becomes equality if and only if $\left|\braket{\psi|\phi}\right|^{2}=\left|\braket{e|f}\right|^{2}$ and $1-\left|\braket{\psi|\phi}\right|^{2}=1-\left|\braket{e|f}\right|^{2}$
hold.
In other words, $\left|\braket{\psi|\phi}\right|=\left|\braket{e|f}\right|$.
\end{proof}
The maximum of (\ref{eq:unital_jc}) is reached by the unitary channel characterized by the unitary operator
\begin{equation}
\ket{e_{+}}\bra{\psi_{+}}+\ket{e_{-}}\bra{\psi_{-}}. \label{eq:unital_max}
\end{equation}
Here $\ket{\psi_{+}}$ and $\ket{\psi_{-}}$ are defined as
\begin{equation}
\ket{\psi_{\pm}}=\frac{1}{\sqrt{2(1\pm\left|\braket{\psi|\phi}\right|)}}\left(\ket{\psi}\pm\frac{\braket{\phi|\psi}}{\left|\braket{\psi|\phi}\right|}\ket{\phi}\right) \label{eq:psi_pm_def}
\end{equation}
for $0<\left|\braket{\psi|\phi}\right|<1$, and
\begin{equation}
\ket{\psi_{\pm}}=\frac{1}{\sqrt{2}}\left(\ket{\psi}\pm\ket{\phi}\right)
\end{equation}
for $\left|\braket{\psi|\phi}\right|=0$, and for $\left|\braket{\psi|\phi}\right|=1$
\begin{align}
\ket{\psi_{+}}&=\ket{\psi}, \\
\ket{\psi_{-}}&=\ket{\psi^{\perp}},
\end{align}
where $\ket{\psi^{\perp}}$ is a unit vector orthogonal to $\ket{\psi}$.
Likewise, $\ket{e_{+}}$ and $\ket{e_{-}}$ are defined as
\begin{equation}
\ket{e_{\pm}}=\frac{1}{\sqrt{2(1\pm\left|\braket{e|f}\right|)}}\left(\ket{e}\pm\frac{\braket{e|f}}{\left|\braket{e|f}\right|}\ket{f}\right)
\end{equation}
for $0<\left|\braket{e|f}\right|<1$, and
\begin{equation}
\ket{e_{\pm}}=\frac{1}{\sqrt{2}}\left(\ket{e}\pm\ket{f}\right)
\end{equation}
for $\left|\braket{e|f}\right|=0$, and for $\left|\braket{e|f}\right|=1$
\begin{align}
\ket{e_{+}}&=\ket{e}, \\
\ket{e_{-}}&=\ket{e^{\perp}}, \label{eq:e_-_def}
\end{align}
where $\ket{e^{\perp}}$ is a unit vector orthogonal to $\ket{e}$.

\subsection{Unital entanglement breaking channels}
Although Proposition \ref{pro:ueb} is stated in terms of an inequality, we derive the following corollary.
\begin{cor}
It holds that
\begin{align}
&\max_{\Psi\in\mathcal{UE}}\left\{\frac{1}{2}p_{m_{0}}(\Psi)+\frac{1}{2}q_{n_{0}}(\Psi)\right\} \nonumber \\
&=\frac{1}{2}\left(1+\max\left\{\left|\braket{\psi|\phi}\right|\left|\braket{e|f}\right|,\sqrt{\left(1-\left|\braket{\psi|\phi}\right|^{2}\right)\left(1-\left|\braket{e|f}\right|^{2}\right)}\right\}\right). \label{eq:ueb_jc}
\end{align}
The expression on the right-hand side becomes unity if and only if $\left|\braket{\psi|\phi}\right|=\left|\braket{e|f}\right|=1$ or $\left|\braket{\psi|\phi}\right|=\left|\braket{e|f}\right|=0$.
Therefore, $(\ket{\psi}, \ket{\phi})$ is jointly convertible into $(\ket{e},\ket{f})$ within $\mathcal{UE}$ if and only if $\left|\braket{\psi|\phi}\right|=\left|\braket{e|f}\right|=1$ or $\left|\braket{\psi|\phi}\right|=\left|\braket{e|f}\right|=0$.
\end{cor}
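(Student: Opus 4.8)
The plan is to upgrade the inequality of Proposition \ref{pro:ueb} to an equality by exhibiting a unital entanglement breaking channel that saturates it. First I would note that the left-hand side of (\ref{eq:ueb_jc}) is exactly $C(\tau,\mathcal{UE})$ for the density operator $\tau=\tfrac12(\ket{\psi}\bra{\psi}^{T}\otimes\ket{e}\bra{e}+\ket{\phi}\bra{\phi}^{T}\otimes\ket{f}\bra{f})$, so Proposition \ref{pro:ueb} gives
\[
\max_{\Psi\in\mathcal{UE}}\left\{\tfrac12 p_{m_{0}}(\Psi)+\tfrac12 q_{n_{0}}(\Psi)\right\}=C(\tau,\mathcal{UE})\leq\tfrac12\left(1+\|N(\tau)\|\right).
\]
Since the induced matrix norm $\|N(\tau)\|$ is the largest singular value of $N(\tau)$, the list (\ref{eq:sv_corr}) identifies it with $\max\{|\braket{\psi|\phi}||\braket{e|f}|,\sqrt{(1-|\braket{\psi|\phi}|^{2})(1-|\braket{e|f}|^{2})}\}$. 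This settles the ``$\leq$'' direction; the substance of the corollary is the reverse inequality.

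For achievability I would reuse the explicit local-unitary frame already constructed for (\ref{eq:sv_corr}). With the unitaries $V_{1},V_{2}$ of (\ref{eq:def_v1}) and (\ref{eq:def_v2}), the transformed state $\tau'$ has diagonal correlation matrix $N(\tau')=\mathrm{diag}(\sqrt{(1-|\braket{\psi|\phi}|^{2})(1-|\braket{e|f}|^{2})},0,|\braket{\psi|\phi}||\braket{e|f}|)$. Matching conventions, one checks that taking $W=V_{1}$ and $V=V_{2}^{\dagger}$ in the decomposition $\Psi=\Psi_{V}\circ\Lambda\circ\Psi_{W}$ reproduces precisely this $\tau'$ in the identity $\tr[\tau J_{\Psi}]=\tr[\tau'J_{\Lambda}]$ from the proof of Proposition \ref{pro:ueb}. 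I would then choose $\Lambda$ to be the Pauli-diagonal map with a single nonzero parameter equal to $+1$ on the axis carrying the larger singular value: $\lambda_{1}=1$ when the $\sigma_{1}\otimes\sigma_{1}$ entry dominates, and $\lambda_{3}=1$ otherwise, with the remaining $\lambda_{i}=0$.

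The crucial point is that importing the paper's explicit $V_{1},V_{2}$ avoids any $SO(3)$-versus-$O(3)$ obstruction: the diagonalizing rotations are realized by honest unitaries, so the resulting $\Psi$ is a genuine channel. With the above choice, $J_{\Lambda}=\tfrac12(\sigma_{0}\otimes\sigma_{0}+\sigma_{i}\otimes\sigma_{i})$ for the selected $i\in\{1,3\}$, whose coefficient $(-1)^{i+1}$ is $+1$; hence $\tr[\tau'J_{\Lambda}]=\tfrac12(1+N(\tau')_{ii})=\tfrac12(1+\|N(\tau)\|)$, with no sign mismatch because both nonzero diagonal entries of $N(\tau')$ are nonnegative. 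Finally $\Lambda$ is unital with $\sum_{i}|\lambda_{i}|=1$, so Theorem \ref{thm:ue_iff} makes it entanglement breaking (indeed its Choi operator $\tfrac12(I+\sigma_{i}\otimes\sigma_{i})\geq0$), and composing with $\Psi_{V},\Psi_{W}$ preserves both unitality and the entanglement-breaking property; thus $\Psi\in\mathcal{UE}$ attains the bound, giving equality in (\ref{eq:ueb_jc}).

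For the convertibility criterion I would invoke the equivalence established in the surrounding text: joint convertibility within $\mathcal{UE}$ holds iff (\ref{eq:m0_n0_max}) equals unity. Setting $a=|\braket{\psi|\phi}|$ and $b=|\braket{e|f}|$ with $a,b\in[0,1]$, the right-hand side of (\ref{eq:ueb_jc}) equals $1$ iff $\max\{ab,\sqrt{(1-a^{2})(1-b^{2})}\}=1$; since $ab=1$ forces $a=b=1$ while $(1-a^{2})(1-b^{2})=1$ forces $a=b=0$, this happens exactly when $a=b=1$ or $a=b=0$. I expect the main obstacle to be the achievability step, and within it the verification that the single-parameter saturating map is aligned with the dominant singular direction with the correct sign and genuinely lies in $\mathcal{UE}$; routing through the paper's explicit local unitaries reduces this to the elementary positivity and sign checks indicated above.
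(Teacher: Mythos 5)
Your proof is correct and takes essentially the same approach as the paper: the upper bound is Proposition \ref{pro:ueb} evaluated with the singular values (\ref{eq:sv_corr}), and your saturating channel $\Psi_{V}\circ\Lambda\circ\Psi_{W}$ with the single nonzero parameter $\lambda_{3}=1$ (resp.\ $\lambda_{1}=1$) is precisely the paper's measure-and-prepare channel $\Psi_{0}$ in the basis $\{\ket{\psi_{\pm}}\}$ with outputs $\{\ket{e_{\pm}}\}$ (resp.\ $\Psi_{1}$ in the conjugate bases $\{\ket{\psi'_{\pm}}\},\{\ket{e'_{\pm}}\}$), merely written in the decomposed form of (\ref{eq:ueb_dec}). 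The paper exhibits these two channels explicitly and splits on which of $\left|\braket{\psi|\phi}\right|\left|\braket{e|f}\right|$ and $\sqrt{\left(1-\left|\braket{\psi|\phi}\right|^{2}\right)\left(1-\left|\braket{e|f}\right|^{2}\right)}$ dominates, exactly as you do via the $W=V_{1}$, $V=V_{2}^{\dagger}$ identification.
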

Indeed, if $\left|\braket{\psi|\phi}\right|\left|\braket{e|f}\right|\geq\sqrt{\left(1-\left|\braket{\psi|\phi}\right|^{2}\right)\left(1-\left|\braket{e|f}\right|^{2}\right)}$, then the right-hand side of (\ref{eq:ueb_jc}) is achieved by the unital entanglement breaking channel defined as
\begin{equation}
\Psi_{0}(A)
=\ket{e_{+}}\braket{\psi_{+}|A|\psi_{+}}\bra{e_{+}}+\ket{e_{-}}\braket{\psi_{-}|A|\psi_{-}}\bra{e_{-}},
\end{equation}
where $A$ represents an arbitrary linear operator, and $\ket{\psi_{\pm}}$ and $\ket{e_{\pm}}$ are the pure states defined from (\ref{eq:psi_pm_def}) to (\ref{eq:e_-_def}).
If $\left|\braket{\psi|\phi}\right|\left|\braket{e|f}\right|<\sqrt{\left(1-\left|\braket{\psi|\phi}\right|^{2}\right)\left(1-\left|\braket{e|f}\right|^{2}\right)}$, it is achieved by
\begin{equation}
\Psi_{1}(A)
=\ket{e'_{+}}\braket{\psi'_{+}|A|\psi'_{+}}\bra{e'_{+}}+\ket{e'_{-}}\braket{\psi'_{-}|A|\psi'_{-}}\bra{e'_{-}}, \label{eq:ueb_max}
\end{equation}
where $A$ represents an arbitrary linear operator, and the pure states are defined as
\begin{align}
\ket{\psi'_{\pm}}&=\frac{1}{\sqrt{2}}\left(\ket{\psi_{+}}\pm\ket{\psi_{-}}\right), \\
\ket{e'_{\pm}}&=\frac{1}{\sqrt{2}}\left(\ket{e_{+}}\pm\ket{e_{-}}\right).
\end{align}
Furthermore, in the case that $(\left|\braket{\psi|\phi}\right|,\left|\braket{e|f}\right|)=(1,0)$ or $(\left|\braket{\psi|\phi}\right|,\left|\braket{e|f}\right|)=(0,1)$, an arbitrary unital entanglement breaking channel attains the right-hand side of (\ref{eq:ueb_jc}).

\subsection{General entanglement breaking channels}
From Proposition \ref{pro:geb}, we obtain the following corollary.
\begin{cor}
It holds that
\begin{align}
&\max_{\Psi\in\mathcal{GE}}\left\{\frac{1}{2}p_{m_{0}}(\Psi)+\frac{1}{2}q_{n_{0}}(\Psi)\right\} \nonumber \\
&\leq\frac{1}{2}\left(1+\sqrt{\left|\braket{e|f}\right|^{2}+\max\left\{\left|\braket{\psi|\phi}\right|\left|\braket{e|f}\right|,\sqrt{\left(1-\left|\braket{\psi|\phi}\right|^{2}\right)\left(1-\left|\braket{e|f}\right|^{2}\right)}\right\}^{2}}\right). \label{eq:geb_jc}
\end{align}
\end{cor}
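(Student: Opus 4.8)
The plan is to obtain this corollary as a direct combination of the general estimate in Proposition~\ref{pro:geb} with the two explicit quantities already computed for the particular $\tau$ attached to the joint-convertibility problem. Recall that the preceding discussion established the identity $\max_{\Psi\in\mathcal{GE}}\{\frac{1}{2}p_{m_{0}}(\Psi)+\frac{1}{2}q_{n_{0}}(\Psi)\}=C(\tau,\mathcal{GE})$ for $\tau=\frac{1}{2}(\ket{\psi}\bra{\psi}^{T}\otimes\ket{e}\bra{e}+\ket{\phi}\bra{\phi}^{T}\otimes\ket{f}\bra{f})$. Hence it suffices to estimate $C(\tau,\mathcal{GE})$, and Proposition~\ref{pro:geb} already supplies the bound
\begin{equation}
C(\tau,\mathcal{GE})\leq\frac{1}{2}\left(1+\sqrt{\|\bm{b}(\tau_{2})\|^{2}+\|N(\tau)\|^{2}}\right).
\end{equation}
The whole argument therefore reduces to inserting the correct values of $\|\bm{b}(\tau_{2})\|$ and $\|N(\tau)\|$ into this inequality.

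First I would substitute $\|\bm{b}(\tau_{2})\|=\left|\braket{e|f}\right|$, which is exactly (\ref{eq:bloch_norm}), obtained from $\tau_{2}=\frac{1}{2}(\ket{e}\bra{e}+\ket{f}\bra{f})$. Next I would determine $\|N(\tau)\|$. Since $\|\cdot\|$ here is the matrix norm induced by the Euclidean norm, as defined in (\ref{eq:def_mat_norm}), the quantity $\|N(\tau)\|$ equals the largest singular value of $N(\tau)$. The set of singular values of $N(\tau)$ was computed in (\ref{eq:sv_corr}) to be $\{\left|\braket{\psi|\phi}\right|\left|\braket{e|f}\right|,\sqrt{(1-\left|\braket{\psi|\phi}\right|^{2})(1-\left|\braket{e|f}\right|^{2})},0\}$, so the largest among them is precisely $\max\{\left|\braket{\psi|\phi}\right|\left|\braket{e|f}\right|,\sqrt{(1-\left|\braket{\psi|\phi}\right|^{2})(1-\left|\braket{e|f}\right|^{2})}\}$, the vanishing third value never contributing. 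Feeding these two expressions into the displayed bound then yields the claimed inequality (\ref{eq:geb_jc}) immediately.

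I expect no genuine analytic obstacle here: the statement is a corollary in the literal sense, assembling one previously proved inequality with two previously computed quantities. The only step demanding a little care is the identification of the induced matrix norm $\|N(\tau)\|$ with the maximal singular value, together with the observation that, because the final bound is written with an explicit $\max$ of the two nonzero singular values, one need not decide which of $\left|\braket{\psi|\phi}\right|\left|\braket{e|f}\right|$ and $\sqrt{(1-\left|\braket{\psi|\phi}\right|^{2})(1-\left|\braket{e|f}\right|^{2})}$ is larger. I would also remark that the inequality (rather than equality) is inherited directly from the inequality in Proposition~\ref{pro:geb}, so unlike the unital and random-unitary cases no separate achievability construction is attempted here.
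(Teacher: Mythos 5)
Your proposal is correct and follows exactly the route the paper intends: the corollary is stated with ``From Proposition~\ref{pro:geb}, we obtain\ldots'' and is indeed the immediate substitution of $\|\bm{b}(\tau_{2})\|=\left|\braket{e|f}\right|$ from (\ref{eq:bloch_norm}) and of $\|N(\tau)\|$, identified as the largest singular value in (\ref{eq:sv_corr}), into the bound (\ref{eq:bound_ge}). Your closing remarks — that the induced norm equals the maximal singular value, and that the inequality (with no achievability claim) is inherited directly from Proposition~\ref{pro:geb} — match the paper's treatment precisely.
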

The right-hand side of (\ref{eq:geb_jc}) is generally different from the left-hand side.
For example, when $\left|\braket{\psi|\phi}\right|=1$, the right-hand side of (\ref{eq:geb_jc}) becomes
\begin{equation}
\frac{1}{2}\left(1+\sqrt{2}\left|\braket{e|f}\right|\right),
\end{equation}
which is greater than or equal to unity if $\left|\braket{e|f}\right|\geq1/\sqrt{2}$.
Therefore, (\ref{eq:geb_jc}) is valid only when the right-hand side is strictly less than unity.
In other words, we can state that $(\ket{\psi}, \ket{\phi})$ is not jointly convertible into $(\ket{e},\ket{f})$ within $\mathcal{GE}$ if the right-hand side of (\ref{eq:geb_jc}) is strictly less than unity.

\subsection{General quantum channels}
Based on Proposition \ref{pro:gc}, it is evident that the following corollary holds.
\begin{cor}
It holds that
\begin{align}
&\max_{\Psi\in\mathcal{C}}\left\{\frac{1}{2}p_{m_{0}}(\Psi)+\frac{1}{2}q_{n_{0}}(\Psi)\right\} \nonumber \\
&\leq\frac{1}{2}\left(1+\sqrt{\left|\braket{e|f}\right|^{2}+\left(\left|\braket{\psi|\phi}\right|\left|\braket{e|f}\right|+\sqrt{\left(1-\left|\braket{\psi|\phi}\right|^{2}\right)\left(1-\left|\braket{e|f}\right|^{2}\right)}\right)^{2}}\right). \label{eq:gc_jc}
\end{align}
\end{cor}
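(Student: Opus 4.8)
The plan is to reduce this corollary to a direct application of Proposition \ref{pro:gc} combined with the two explicit computations already carried out in this section. First I would recall that the maximum on the left-hand side has been identified with $C(\tau,\mathcal{C})$, where $\tau=\frac{1}{2}(\ket{\psi}\bra{\psi}^{T}\otimes\ket{e}\bra{e}+\ket{\phi}\bra{\phi}^{T}\otimes\ket{f}\bra{f})$. Proposition \ref{pro:gc} then immediately yields
\begin{equation}
C(\tau,\mathcal{C})\leq\frac{1}{2}\left(1+\sqrt{\|\bm{b}(\tau_{2})\|^{2}+\|N(\tau)\|^{2}_{KF}}\right),
\end{equation}
so the only remaining task is to express $\|\bm{b}(\tau_{2})\|$ and $\|N(\tau)\|_{KF}$ in terms of the inner products $|\braket{\psi|\phi}|$ and $|\braket{e|f}|$.

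For the Bloch-vector term, equation (\ref{eq:bloch_norm}) gives $\|\bm{b}(\tau_{2})\|=|\braket{e|f}|$, contributing $|\braket{e|f}|^{2}$ under the square root. For the Ky Fan norm, I would use the fact that $\|\cdot\|_{KF}$ defined in (\ref{eq:def_kf_norm}) equals the sum of the singular values, and substitute the explicit singular values of $N(\tau)$ computed in (\ref{eq:sv_corr}). Since one of these vanishes, summing the remaining two gives
\begin{equation}
\|N(\tau)\|_{KF}=|\braket{\psi|\phi}||\braket{e|f}|+\sqrt{(1-|\braket{\psi|\phi}|^{2})(1-|\braket{e|f}|^{2})}.
\end{equation}
Squaring this and inserting it, together with $\|\bm{b}(\tau_{2})\|^{2}=|\braket{e|f}|^{2}$, into the bound from Proposition \ref{pro:gc} reproduces exactly the right-hand side of (\ref{eq:gc_jc}).

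Since every ingredient has already been established, I do not expect a genuine obstacle: the argument is essentially a substitution. The one point worth verifying is that $\|N(\tau)\|_{KF}$ is the \emph{sum} of the singular values listed in (\ref{eq:sv_corr}), not their maximum; this follows from the trace formula $\|X\|_{KF}=\tr\sqrt{XX^{T}}$, whose eigenvalues are precisely the singular values of $X$. Once this is noted, the remainder is purely algebraic.
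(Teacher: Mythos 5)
Your proposal is correct and is essentially identical to the paper's own (very brief) argument: the paper likewise treats the corollary as an immediate substitution of $\left\|\bm{b}(\tau_{2})\right\|=\left|\braket{e|f}\right|$ from (\ref{eq:bloch_norm}) and the singular values (\ref{eq:sv_corr}) into the bound of Proposition \ref{pro:gc}, with the Ky Fan norm taken as the \emph{sum} of the singular values (one of which vanishes), exactly as you verify. No gap remains.
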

As with (\ref{eq:geb_jc}), if $\left|\braket{\psi|\phi}\right|=1$ the right-hand side of (\ref{eq:gc_jc}) becomes
\begin{equation}
\frac{1}{2}\left(1+\sqrt{2}\left|\braket{e|f}\right|\right).
\end{equation}
Therefore, (\ref{eq:gc_jc}) is considered a necessary condition for the joint convertibility within $\mathcal{C}$.
It is important to note that a necessary and sufficient condition is given in Proposition \ref{pro:jc_c_iff}.
Hence, (\ref{eq:gc_jc}) is valuable in that it quantifies joint convertibility by means of the average (squared) fidelity rather than merely being a necessary condition.

For example, let us consider the case where $\left|\braket{\psi|\phi}\right|=1/\sqrt{2}$.
In this case, the value inside the first square root in (\ref{eq:gc_jc}) becomes
\begin{align}
&\left|\braket{e|f}\right|^{2}+\left(\left|\braket{\psi|\phi}\right|\left|\braket{e|f}\right|+\sqrt{\left(1-\left|\braket{\psi|\phi}\right|^{2}\right)\left(1-\left|\braket{e|f}\right|^{2}\right)}\right)^{2} \nonumber \\
&=1+\frac{1}{\sqrt{2}}\sin{\left(2x+\frac{\pi}{4}\right)},
\end{align}
where $\left|\braket{e|f}\right|=\cos{x}$ and $0\leq x\leq\pi/2$.
It is easy to verify that the right-hand side of (\ref{eq:gc_jc}) is strictly less than unity if $3\pi/8<x\leq\pi/2$.
In contrast, from Proposition \ref{pro:jc_c_iff}, $(\ket{\psi},\ket{\phi})$ is not jointly convertible into $(\ket{e},\ket{f})$ within $\mathcal{C}$ if and only if $\pi/4<x\leq\pi/2$.
Thus, (\ref{eq:gc_jc}) fails to quantify joint convertibility for $\pi/4<x\leq3\pi/8$.
However, it provides a nontrivial upper bound on the average (squared) fidelity for $3\pi/8<x\leq\pi/2$.

\subsection{Comparison of joint convertibility}
The goal of this subsection is to compare the upper bounds.
As in Subsection \ref{sec:comp_ub}, we denote the right-hand sides of (\ref{eq:gd_jc}), (\ref{eq:unital_jc}), (\ref{eq:ueb_jc}), (\ref{eq:geb_jc}), and (\ref{eq:gc_jc}) as $\Tilde{C}(\tau,\mathcal{X})$.
Except for $\Tilde{C}(\tau,\mathcal{GE})$ and $\Tilde{C}(\tau,\mathcal{C})$, every $\Tilde{C}(\tau,\mathcal{X})$ is equal to the left hand side.
Therefore, we can state that the joint convertibility within $\mathcal{X}$ is higher than that of $\mathcal{Y}$ if $\Tilde{C}(\tau,\mathcal{X})>\Tilde{C}(\tau,\mathcal{Y})$ for $\mathcal{X}\in\{\mathcal{D},\mathcal{R},\mathcal{UE}\}$.
If $\mathcal{X}\subseteq\mathcal{Y}$, it is evident that the joint convertibility within $\mathcal{X}$ does not exceed that of $\mathcal{Y}$.
Indeed, it holds that $\Tilde{C}(\tau,\mathcal{X})\leq\Tilde{C}(\tau,\mathcal{Y})$ in this case.
Hence, we consider the pairs of subsets:
\begin{enumerate*}[label=(\arabic*)]
\item $(\mathcal{UE},\mathcal{D})$,
\item $(\mathcal{R},\mathcal{D})$, and
\item $(\mathcal{R},\mathcal{GE})$.
\end{enumerate*}
In the following, we analyze the function $\Tilde{C}(\tau,\mathcal{X})-\Tilde{C}(\tau,\mathcal{Y})$ for each pair using two examples.

As a first example, we consider the case $\left|\braket{\psi|\phi}\right|=1/\sqrt{2}$, meaning that the input states are selected from mutually unbiased bases.
In this case, the upper bounds are defined as follows:
\begin{align}
\Tilde{C}(\tau,\mathcal{D})&=\frac{1}{2}\left(1+y\right), \\
\Tilde{C}(\tau,\mathcal{R})&=\frac{1}{2}\left(1+\frac{1}{\sqrt{2}}\left(y+\sqrt{1-y^{2}}\right)\right), \\
\Tilde{C}(\tau,\mathcal{UE})&=\frac{1}{2}\left(1+\frac{1}{\sqrt{2}}\max\left\{y,\sqrt{1-y^{2}}\right\}\right), \\
\Tilde{C}(\tau,\mathcal{GE})&=\frac{1}{2}\left(1+\sqrt{y^{2}+\frac{1}{2}\max\left\{y,\sqrt{1-y^{2}}\right\}^{2}}\right).
\end{align}
Here, we denoted $\left|\braket{e|f}\right|$ as $y$, i.e., $y=\left|\braket{e|f}\right|\in[0,1]$.

(1) $(\mathcal{UE},\mathcal{D})$.
The difference between $\Tilde{C}(\tau,\mathcal{UE})$ and $\Tilde{C}(\tau,\mathcal{D})$ is given as
\begin{align}
\Tilde{C}(\tau,\mathcal{UE})-\Tilde{C}(\tau,\mathcal{D})
&=
\begin{cases}
\frac{1}{2}\left(\frac{1}{\sqrt{2}}\sqrt{1-y^{2}}-y\right) & (0\leq y\leq\frac{1}{\sqrt{2}}) \\
\frac{1}{2}\left(\frac{1}{\sqrt{2}}-1\right)y & (\frac{1}{\sqrt{2}}<y\leq1)
\end{cases}
.
\end{align}
The right-hand side is a strictly decreasing function and is equal to zero if and only if $y=1/\sqrt{3}=:y_{1}$.
Consequently, the joint convertibility within $\mathcal{UE}$ is higher than that of $\mathcal{D}$ if and only if $0\leq y<y_{1}$.
Conversely, the joint convertibility within $\mathcal{D}$ is higher than that of $\mathcal{UE}$ if and only if $y_{1}<y\leq1$.
Both joint convertibility values coincide if and only if $y=y_{1}$.

(2) $(\mathcal{R},\mathcal{D})$.
The difference between $\Tilde{C}(\tau,\mathcal{R})$ and $\Tilde{C}(\tau,\mathcal{D})$ is represented as
\begin{align}
\Tilde{C}(\tau,\mathcal{R})-\Tilde{C}(\tau,\mathcal{D})
&=\frac{1}{2}\left(\frac{1}{\sqrt{2}}\left(y+\sqrt{1-y^{2}}\right)-y\right).
\end{align}
Similar to case (1), the right-hand side is a strictly decreasing function and is equal to zero if and only if $y=1/\sqrt{4-2\sqrt{2}}=:y_{2}$.
Thus, the joint convertibility within $\mathcal{R}$ is higher than that of $\mathcal{D}$ if and only if $0\leq y<y_{2}$.
Conversely, the joint convertibility within $\mathcal{D}$ is higher than that of $\mathcal{R}$ if and only if $y_{2}<y\leq1$.
Both joint convertibility values coincide if and only if $y=y_{2}$.

(3) $(\mathcal{R},\mathcal{GE})$.
The difference between $\Tilde{C}(\tau,\mathcal{R})$ and $\Tilde{C}(\tau,\mathcal{GE})$ is represented as
\begin{align}
\Tilde{C}(\tau,\mathcal{R})-\Tilde{C}(\tau,\mathcal{GE})
&=
\begin{cases}
\frac{1}{2\sqrt{2}}\left(y+\sqrt{1-y^{2}}-\sqrt{1+y^{2}}\right) & (0\leq y\leq\frac{1}{\sqrt{2}}) \\
\frac{1}{2\sqrt{2}}\left((1-\sqrt{3})y+\sqrt{1-y^{2}}\right) & (\frac{1}{\sqrt{2}}<y\leq1)
\end{cases}
.
\end{align}
The right-hand side is equal to zero if and only if $y=0$ or $y=1/\sqrt{5-2\sqrt{3}}=:y_{3}$, and is positive if $0<y<y_{3}$.
Consequently, the joint convertibility within $\mathcal{R}$ is higher than that of $\mathcal{GE}$ if $0<y<y_{3}$.
It should be noted that $\Tilde{C}(\tau,\mathcal{GE})$ is not the optimal value, but an estimate.
Therefore, no conclusions can be drawn when the right-hand side is less than or equal to zero.

The second example pertains to the case where $\left|\braket{e|f}\right|=1/\sqrt{2}$, meaning that the two measurements on the output are conducted using mutually unbiased bases.
In this case, the upper bounds are represented as follows:
\begin{align}
\Tilde{C}(\tau,\mathcal{D})&=\frac{1}{2}\left(1+\frac{1}{\sqrt{2}}\right), \\
\Tilde{C}(\tau,\mathcal{R})&=\frac{1}{2}\left(1+\frac{1}{\sqrt{2}}\left(x+\sqrt{1-x^{2}}\right)\right), \\
\Tilde{C}(\tau,\mathcal{UE})&=\frac{1}{2}\left(1+\frac{1}{\sqrt{2}}\max\left\{x,\sqrt{1-x^{2}}\right\}\right), \\
\Tilde{C}(\tau,\mathcal{GE})&=\frac{1}{2}\left(1+\frac{1}{\sqrt{2}}\sqrt{1+\max\left\{x,\sqrt{1-x^{2}}\right\}^{2}}\right).
\end{align}
Here, we denoted $\left|\braket{\psi|\phi}\right|$ as $x$; i.e., $x=\left|\braket{\psi|\phi}\right|\in[0,1]$.

(1) $(\mathcal{UE},\mathcal{D})$.
The difference between $\Tilde{C}(\tau,\mathcal{UE})$ and $\Tilde{C}(\tau,\mathcal{D})$ is given by
\begin{align}
\Tilde{C}(\tau,\mathcal{UE})-\Tilde{C}(\tau,\mathcal{D})
&=
\begin{cases}
\frac{1}{2\sqrt{2}}\left(\sqrt{1-x^{2}}-1\right) & (0\leq x\leq1/\sqrt{2}) \\
\frac{1}{2\sqrt{2}}\left(x-1\right) & (1/\sqrt{2}<x\leq1)
\end{cases}
.
\end{align}
The right-hand side is zero if and only if $x\in\{0,1\}$ and negative otherwise.
Hence, the joint convertibility within $\mathcal{D}$ is higher than that of $\mathcal{UE}$ if and only if $x\in(0,1)$; otherwise, both coincide.

(2) $(\mathcal{R},\mathcal{D})$.
The difference between $\Tilde{C}(\tau,\mathcal{R})$ and $\Tilde{C}(\tau,\mathcal{D})$ is represented as
\begin{align}
\Tilde{C}(\tau,\mathcal{R})-\Tilde{C}(\tau,\mathcal{D})
&=\frac{1}{2\sqrt{2}}\left(x+\sqrt{1-x^{2}}-1\right).
\end{align}
The right-hand side is zero if and only if $x\in\{0,1\}$ and positive otherwise.
Consequently, the joint convertibility within $\mathcal{R}$ is higher than that of $\mathcal{D}$ if and only if $x\in(0,1)$; otherwise both coincide.

(3) $(\mathcal{R},\mathcal{GE})$.
The difference between $\Tilde{C}(\tau,\mathcal{R})$ and $\Tilde{C}(\tau,\mathcal{GE})$ is given by
\begin{align}
\Tilde{C}(\tau,\mathcal{R})-\Tilde{C}(\tau,\mathcal{GE})
&=
\begin{cases}
\frac{1}{2\sqrt{2}}\left(x+\sqrt{1-x^{2}}-\sqrt{2-x^{2}}\right) & (0\leq x\leq1/\sqrt{2}) \\
\frac{1}{2\sqrt{2}}\left(x+\sqrt{1-x^{2}}-\sqrt{1+x^{2}}\right) & (1/\sqrt{2}<x\leq1)
\end{cases}
.
\end{align}
The right-hand side is positive if and only if $x\in(1/\sqrt{5},2/\sqrt{5})$.
Consequently, the joint convertibility within $\mathcal{R}$ is higher than that of $\mathcal{GE}$ if $x\in(1/\sqrt{5},2/\sqrt{5})$.
As with the previous example, no conclusions can be drawn when the right-hand side is less than or equal to zero. 

Hence, joint convertibility can be compared analytically using the upper bounds.

While (\ref{eq:geb_jc}) and (\ref{eq:gc_jc}) are estimates, akin to Proposition \ref{pro:bound_c_sp}, they achieve equality under the following special case.
\begin{pro} \label{pro:jc_sp}
When $\left|\braket{e|f}\right|=0$, the equality
\begin{equation}
\max_{\Psi\in\mathcal{X}}\left\{\frac{1}{2}p_{m_{0}}(\Psi)+\frac{1}{2}q_{n_{0}}(\Psi)\right\}
=\frac{1}{2}\left(1+\sqrt{1-\left|\braket{\psi|\phi}\right|^{2}}\right) \label{eq:c_jc_sp}
\end{equation}
holds for all $\mathcal{X}\in\{\mathcal{R},\mathcal{UE},\mathcal{GE},\mathcal{C}\}$
\end{pro}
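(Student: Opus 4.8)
The plan is to exploit the identification $\max_{\Psi\in\mathcal{X}}\left\{\frac{1}{2}p_{m_{0}}(\Psi)+\frac{1}{2}q_{n_{0}}(\Psi)\right\}=C(\tau,\mathcal{X})$ established above, together with the fact that the spectral data of $\tau$ degenerate sharply when $\left|\braket{e|f}\right|=0$. First I would set $\left|\braket{e|f}\right|=0$ in (\ref{eq:sv_corr}) and (\ref{eq:bloch_norm}): the Bloch norm $\|\bm{b}(\tau_{2})\|=\left|\braket{e|f}\right|$ vanishes, and the singular values of $N(\tau)$ collapse to $\left\{0,\sqrt{1-\left|\braket{\psi|\phi}\right|^{2}},0\right\}$, so $N(\tau)$ has the single nonzero singular value $\sqrt{1-\left|\braket{\psi|\phi}\right|^{2}}$. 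Write $V:=\frac{1}{2}\left(1+\sqrt{1-\left|\braket{\psi|\phi}\right|^{2}}\right)$ for the claimed value.

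The cleanest route is a single sandwich that settles all four classes simultaneously. Since a matrix with one nonzero singular value has Ky Fan norm equal to that value, feeding the above data into the universal bound (\ref{eq:bound_ch}) gives $\Tilde{C}(\tau,\mathcal{C})=\frac{1}{2}\left(1+\sqrt{0+\left(1-\left|\braket{\psi|\phi}\right|^{2}\right)}\right)=V$, an upper bound on every class. For the matching lower bound I would note that the corollary derived from Proposition \ref{pro:ueb} is an equality; because $\left|\braket{e|f}\right|=0$ places us in the regime where the second entry of its maximum dominates, it yields $C(\tau,\mathcal{UE})=V$, with the explicit optimal channel $\Psi_{1}\in\mathcal{UE}$ (and at $\left|\braket{\psi|\phi}\right|=1$ any unital entanglement breaking channel). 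The decisive point is that this achiever lies in the smallest class $\mathcal{UE}$, which is contained in each of $\mathcal{R},\mathcal{UE},\mathcal{GE},\mathcal{C}$, each of which is in turn contained in $\mathcal{C}$. Monotonicity of the maximum under set inclusion then gives, for every $\mathcal{X}$ in the list,
\begin{equation}
V=C(\tau,\mathcal{UE})\leq C(\tau,\mathcal{X})\leq C(\tau,\mathcal{C})\leq\Tilde{C}(\tau,\mathcal{C})=V,
\end{equation}
forcing $C(\tau,\mathcal{X})=V$.

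As a consistency check I would substitute the degenerate data into (\ref{eq:bound_unital}), (\ref{eq:bound_ue}), and (\ref{eq:bound_ge}) as well: both the operator norm and the Ky Fan norm of $N(\tau)$ equal $\sqrt{1-\left|\braket{\psi|\phi}\right|^{2}}$ while $\|\bm{b}(\tau_{2})\|=0$, so all three specialized upper bounds also collapse to $V$; hence the sandwich is tight against each individual bound. For $\mathcal{R}$ one may alternatively invoke Proposition \ref{eq:pro_unital} directly, since it is already an exact identity.

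The only nonroutine ingredient is recognizing that the achiever can be taken inside $\mathcal{UE}$, the common subset of all four classes, so that a single channel supplies the lower bound everywhere at once; the four a priori distinct upper bounds then coincide purely because the spectrum of $N(\tau)$ becomes rank one at $\left|\braket{e|f}\right|=0$. I anticipate no real obstacle beyond the boundary case $\left|\braket{\psi|\phi}\right|=1$, where all singular values vanish and both sides equal $\frac{1}{2}$; there the statement is transparent, since $\ket{e}\perp\ket{f}$ forces the average fidelity $\frac{1}{2}\left(\braket{e|\rho_{\mathrm{out}}|e}+\braket{f|\rho_{\mathrm{out}}|f}\right)$ to equal $\frac{1}{2}\tr\rho_{\mathrm{out}}=\frac{1}{2}$ for every output state $\rho_{\mathrm{out}}$.
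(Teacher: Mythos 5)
Your proof is correct and takes essentially the same route as the paper's: the exact equality case of (\ref{eq:ueb_jc}) on the smallest class $\mathcal{UE}$ supplies the lower bound through the inclusions $\mathcal{UE}\subseteq\mathcal{X}\subseteq\mathcal{C}$, while the degenerate data at $\left|\braket{e|f}\right|=0$ (namely $\|\bm{b}(\tau_{2})\|=0$ and $N(\tau)$ having a single nonzero singular value $\sqrt{1-\left|\braket{\psi|\phi}\right|^{2}}$) collapse the upper bounds to the same value, forcing the sandwich closed. Your only cosmetic deviation is squeezing every class against the single universal bound $\Tilde{C}(\tau,\mathcal{C})$ rather than citing (\ref{eq:geb_jc}) and (\ref{eq:gc_jc}) separately as the paper does, which changes nothing of substance.
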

\begin{proof}
It is straightforward to verify that (\ref{eq:c_jc_sp}) holds for $\mathcal{R}$ and $\mathcal{UE}$ because (\ref{eq:unital_jc}) and (\ref{eq:ueb_jc}) are equality.

From (\ref{eq:geb_jc}) and (\ref{eq:gc_jc}), it follows that
\begin{equation}
\max_{\Psi\in\mathcal{X}}\left\{\frac{1}{2}p_{m_{0}}(\Psi)+\frac{1}{2}q_{n_{0}}(\Psi)\right\}
\leq\frac{1}{2}\left(1+\sqrt{1-\left|\braket{\psi|\phi}\right|^{2}}\right) \label{eq:c_jc_ineq}
\end{equation}
for all $\mathcal{X}\in\{\mathcal{GE},\mathcal{C}\}$.
From $\mathcal{UE}\subseteq\mathcal{X}$, the inequality
\begin{equation}
\max_{\Psi\in\mathcal{UE}}\left\{\frac{1}{2}p_{m_{0}}(\Psi)+\frac{1}{2}q_{n_{0}}(\Psi)\right\}
\leq\max_{\Psi\in\mathcal{X}}\left\{\frac{1}{2}p_{m_{0}}(\Psi)+\frac{1}{2}q_{n_{0}}(\Psi)\right\}
\end{equation}
also holds, and the left-hand side is equal to the right-hand side of (\ref{eq:c_jc_ineq}).
Thus, (\ref{eq:c_jc_ineq}) must be equality.
\end{proof}
This proposition states that if $\left|\braket{e|f}\right|=0$, then the notion of joint convertibility for $\mathcal{R},\mathcal{UE},\mathcal{GE},$ and $\mathcal{C}$ are all equivalent.
Specificlly, when $\left|\braket{e|f}\right|=0$, $(\ket{\psi},\ket{\phi})$ is jointly convertible within $\mathcal{X}$ if and only if $\left|\braket{\psi|\phi}\right|=0$ for all $\mathcal{X}\in\{\mathcal{R},\mathcal{UE},\mathcal{GE},\mathcal{C}\}$.
The right-hand side of (\ref{eq:c_jc_sp}) is achieved by (\ref{eq:unital_max}) and (\ref{eq:ueb_max}) for $\{\mathcal{R},\mathcal{C}\}$ and $\{\mathcal{R},\mathcal{UE},\mathcal{GE},\mathcal{C}\}$, respectively.

\section{Detection of a quantum channel} \label{sec:id}
Let us consider the unital quantum channel
\begin{align}
\Psi_{w}=w\id+(1-w)\Psi_{I/2}, \label{eq:id_sample}
\end{align}
where $0\leq w\leq1$ and $\Psi_{I/2}$ represents the depolarizing channel that outputs $I/2$.
It is evident that while $\Psi_{w=1}$ is a unitary channel, $\Psi_{w=0}$ is an entanglement breaking channel.
Hence, a threshold value may be exist that divides the set $\Set{\Psi_{w}|0\leq w\leq1}$ into entanglement breaking channels and not entanglement breaking ones.
To determine this value, we consider two contrasting measurements.

\subsection{Measurement with the maximally entangled state}
The first measurement is given by the PPOVM
\begin{equation}
\left\{\frac{1}{2}P_{+},\frac{1}{2}\left(I-P_{+}\right)\right\}, \label{eq:test_1_ppovm}
\end{equation}
which is executed using the maximally entangled state $P_{+}$ as the input state and the binary measurement $\left\{P_{+},I-P_{+}\right\}$ as the POVM measurement on the output.
Consequently, this measurement actively employs entanglement as a resource.

Let us assume that $\Psi_{w}$ is measured by (\ref{eq:test_1_ppovm}).
The probability of obtaining the outcome associated with $P_{+}/2$ is expressed as
\begin{equation}
\frac{1}{2}\tr\left[P_{+}J_{\Psi_{w}}\right]
=\frac{1}{4}\left(1+3w\right).
\end{equation}
From Proposition \ref{pro:ueb}, if $\Psi_{w}$ is an entanglement breaking channel, this value is bounded from above as
\begin{equation}
\frac{1}{4}\left(1+3w\right)
\leq\frac{1}{4}\left(1+\left\|N(P_{+})\right\|\right).
\end{equation}
It is easy evident that $\left\|N(P_{+})\right\|=1$ from (\ref{eq:P+_mat}).
Hence, this inequality becomes
\begin{equation}
\frac{1}{4}\left(1+3w\right)
\leq\frac{1}{2},
\end{equation}
and it is equivalent to
\begin{equation}
w\leq\frac{1}{3},
\end{equation}
which indicates that $\Psi_{w}$ is not entanglement breaking when $w>1/3$.
Note that a necessary and sufficient condition for $\Psi_{w}$ to be entanglement breaking is that $(\id\otimes\Psi_{w})P_{+}$ becomes a separable state\cite{doi:10.1142/S0129055X03001709}.
Consequently, $\Psi_{w}$ is not entanglement breaking if and only if $(\id\otimes\Psi_{w})P_{+}$ is entangled.
The operator $(\id\otimes\Psi_{w})P_{+}$ is known as the Werner state, and it has been already shown that the Werner state is entangled if and only if $w>1/3$\cite{PITTENGER2000447}.
Thus, the criterion provided by (\ref{eq:bound_ue}) is so strong as to detect all $\Psi_{w}$ that are not entanglement breaking channels.

The purpose of this measurement is to determine whether $\Psi_{w}$ is not entanglement breaking.
Therefore, this result can be considered reasonable, as the maximally entangled state $P_{+}$ is likely to be highly sensitive to the action of entanglement breaking channels.

\subsection{Ancilla-free measurements}
The second measurement is given by the three PPOVMs
\begin{align}
&\{(\ket{0}\bra{0})^{T}\otimes\ket{0}\bra{0},(\ket{0}\bra{0})^{T}\otimes\ket{1}\bra{1}\}, \label{eq:test_2_ppovm_1} \\
&\{(\ket{+}\bra{+})^{T}\otimes\ket{+}\bra{+},(\ket{+}\bra{+})^{T}\otimes\ket{-}\bra{-}\}, \label{eq:test_2_ppovm_2} \\
&\{(\ket{+_{y}}\bra{+_{y}})^{T}\otimes\ket{+_{y}}\bra{+_{y}},(\ket{+_{y}}\bra{+_{y}})^{T}\otimes\ket{-_{y}}\bra{-_{y}}\}, \label{eq:test_2_ppovm_3}
\end{align}
where $\ket{\pm}$ and $\ket{\pm_{y}}$ are the eigenvectors of $\sigma_{1}$ and $\sigma_{2}$ defined by
\begin{equation}
\ket{\pm}=(1/\sqrt{2})(\ket{0}\pm\ket{1})
\end{equation}
and
\begin{equation}
\ket{\pm_{y}}=(1/\sqrt{2})(\ket{0}\pm i\ket{1}),
\end{equation}
respectively.
As (\ref{eq:test_2_ppovm_1}), (\ref{eq:test_2_ppovm_2}), and (\ref{eq:test_2_ppovm_3}) are ancilla-free measurements, unlike (\ref{eq:test_1_ppovm}), these PPOVMs do not use entanglement.

If $\Psi_{w}$ is randomly measured using these PPOVMs, the probability of obtaining one of the outcomes $0$, $+$, and $+_{y}$ is given by
\begin{align}
\tr\left[\tau J_{\Psi_{w}}\right]
=\frac{1+w}{2}, \label{eq:prob_werner}
\end{align}
where $\tau$ is defined by
\begin{align}
\tau=\frac{1}{3}\left((\ket{0}\bra{0})^{T}\otimes\ket{0}\bra{0}+(\ket{+}\bra{+})^{T}\otimes\ket{+}\bra{+}+(\ket{+_{y}}\bra{+_{y}})^{T}\otimes\ket{+_{y}}\bra{+_{y}}\right).
\end{align}
From (\ref{eq:bound_ue}), if $\Psi_{w}$ is an entanglement breaking channel, (\ref{eq:prob_werner}) is bounded from above as follows:
\begin{align}
\frac{1+w}{2}
\leq\frac{1}{2}\left(1+\|N(\tau)\|\right). \label{eq:cri_werner}
\end{align}
An arbitrary element of $N(\tau)$ can be expressed as
\begin{align}
N(\tau)_{ij}
=\frac{1}{3}(\delta_{i1}\delta_{j1}-\delta_{i2}\delta_{j2}+\delta_{i3}\delta_{j3}).
\end{align}
From $N(\tau)N(\tau)^{T}=I/9$ it holds that $\|N(\tau)\|=1/3$.
Thus, (\ref{eq:cri_werner}) becomes
\begin{align}
\frac{1+w}{2}\leq\frac{1}{2}\left(1+\frac{1}{3}\right)
\end{align}
which is equivalent to
\begin{align}
w\leq\frac{1}{3}.
\end{align}
That is, if $w>1/3$, $\Psi_{w}$ is not an entanglement breaking channel.
Consequently, the criterion provided by (\ref{eq:test_2_ppovm_1}), (\ref{eq:test_2_ppovm_2}), and (\ref{eq:test_2_ppovm_3}) coincides with that offered by (\ref{eq:test_1_ppovm}), detecting all $\Psi_{w}$ that are not entanglement breaking channels.

It is important to note that although this measurement requires three different PPOVMs, it does not rely on entanglement to confirm that $\Psi_{w}$ is not entanglement breaking.
In general, experimentally generating an entangled state is not easy.
Therefore, our detection method may serve as a valuable criterion for determining whether a given unital quantum channel is not entanglement breaking.

\section{Conclusions} \label{sec:conc}
In this study, we derived the several upper bounds on the probabilities in measurements of qubit channels.
These bounds are expressed without requiring an optimization problem, simplifying their calculations.

As an application, we used the derived upper bounds to quantify the notion of convertibility, which we call joint convertibility, and obtained various nontrivial results.
Consequently, the conditions for joint convertibility were examined in detail.
In addition, it was illustrated that the joint convertibility for different classes of quantum channels can be compared using these upper bounds.

As another application, we considered the problem of detecting a property of a simple unital channel with a single parameter.
Specifically, it was demonstrated that one can confirm whether the unital quantum channel is not entanglement breaking by utilizing one of the derived upper bounds.
To apply this upper bound, we considered two contrasting measurements.
The criteria provided by these measurements are strong enough to give the necessary and sufficient conditions that the unital channel is not entanglement breaking.

\section*{Acknowledgment}
TM acknowledges financial support from JSPS (KAKENHI Grant No. JP20K03732).

\section*{Appendix: The proof of Lemma \ref{lem:sing_inv}}
For simplicity, let us define $\rho'=(V_{1}\otimes V_{2})\rho(V^{\dagger}_{1}\otimes V^{\dagger}_{2})$.
From (\ref{eq:def_corr_mat}), it follows that
\begin{align}
N(\rho')_{ij}
&=\tr\left[(V_{1}\otimes V_{2})\rho(V^{\dagger}_{1}\otimes V^{\dagger}_{2})(\sigma_{i}\otimes\sigma_{j})\right] \nonumber \\
&=\tr\left[\rho(V^{\dagger}_{1}\sigma_{i}V_{1}\otimes V^{\dagger}_{2}\sigma_{j}V_{2})\right]. \label{eq:loc_tra_rho}
\end{align}
We can expand $V^{\dagger}_{1}\sigma_{i}V_{1}$ and $V^{\dagger}_{2}\sigma_{j}V_{2}$ as
\begin{align}
V^{\dagger}_{1}\sigma_{i}V_{1}&=\sum^{3}_{k=1}O(V^{\dagger}_{1})_{ki}\sigma_{k}, \\
V^{\dagger}_{2}\sigma_{j}V_{2}&=\sum^{3}_{l=1}O(V^{\dagger}_{2})_{lj}\sigma_{l},
\end{align}
where $O(V^{\dagger}_{1})_{ki}=(1/2)\tr[\sigma_{k}V^{\dagger}_{1}\sigma_{i}V_{1}]$ and $O(V^{\dagger}_{2})_{lj}=(1/2)\tr[\sigma_{l}V^{\dagger}_{2}\sigma_{j}V_{2}]$ are orthonormal matrices.
By substituting these expressions into (\ref{eq:loc_tra_rho}), we obtain
\begin{align}
N(\rho')_{ij}
&=(O(V^{\dagger}_{1})^{T}N(\rho)O(V^{\dagger}_{2}))_{ij},
\end{align}
which means
\begin{align}
N(\rho')=O(V^{\dagger}_{1})^{T}N(\rho)O(V^{\dagger}_{2}).
\end{align}
Let $\lambda$ be an arbitrary eigenvalue of $N(\rho)^{T}N(\rho)$, and $\mathbf{x}$ be the corresponding eigenvector.
From
\begin{align}
N(\rho')^{T}N(\rho')O(V^{\dagger}_{2})^{T}\mathbf{x}
&=\lambda O(V^{\dagger}_{2})^{T}\mathbf{x},
\end{align}
it is evident that $\lambda$ is also an eigenvalue of $N(\rho')^{T}N(\rho')$, and $O(V^{\dagger}_{2})^{T}\mathbf{x}$ is the corresponding eigenvector.
This demonstrates that $N(\rho')^{T}N(\rho')$ and $N(\rho)^{T}N(\rho)$ share the same eigenvalues, as $\lambda$ is chosen as an arbitrary eigenvalue.
Therefore, it is proven that $N(\rho')$ and $N(\rho)$ have the same singular values.


\begin{thebibliography}{10}

\bibitem{nielsen_chuang_2010}
M.~A. Nielsen and I.~L. Chuang, {\em Quantum Computation and Quantum Information: 10th Anniversary Edition} (Cambridge University Press, 2010).

\bibitem{PhysRevA.77.062112}
M. Ziman, {\em Phys. Rev. A} {\bf 77}, 062112 (2008).



\bibitem{Heinosaari_Ziman_2011}
T. Heinosaari and M. Ziman, {\em The Mathematical Language of Quantum Theory From Uncertainty to Entanglement} (Cambridge University Press, New York, 2011).

\bibitem{PhysRevA.54.3824}
C.~H. Bennett, D.~P. DiVincenzo, J.~A. Smolin, and W.~K. Wootters, {\em Phys. Rev. A} {\bf 54}, 3824 (1996).

\bibitem{CHOI1975285}
M. D. Choi, {\em Linear Algebra Appl.} {\bf 10}, 285 (1975).

\bibitem{JAMIOLKOWSKI1972275}
A.~Jamiołkowski, {\em Rep. Math. Phys.} {\bf 3}, 275 (1972).

\bibitem{https://doi.org/10.48550/arxiv.2205.10108}
T. Kimoto and T. Miyadera, {\em J. Phys. Soc. Jpn.} {\bf 92}, 034005 (2023).

\bibitem{ALBERTI1980163}
P. M. Alberti and A.~Uhlmann, {\em Rep. Math. Phys.} {\bf 18}, 163 (1980).

\bibitem{LANDAU1993107}
L. J. Landau and R. F. Streater, {\em Linear Algebra Appl.} {\bf 193}, 107 (1993).

\bibitem{PhysRevA.78.032332}
M. Li, S. M. Fei, and Z. X. Wang, {\em Phys. Rev. A} {\bf 78}, 032332 (2008).

\bibitem{doi:10.1142/S0129055X03001709}
M. Horodecki, P. ~W. Shor, and M. ~B. Ruskai, {\em Rev. Math. Phys.} {\bf 15}, 629 (2003).

\bibitem{doi:10.1142/S0129055X03001710}
M.~B. Ruskai, {\em Rev. Math. Phys.} {\bf 15}, 643 (2003).

\bibitem{904522}
C.~King and M. B. Ruskai, {\em IEEE Trans. Inf. Theory} {\bf 47}, 192 (2001).

\bibitem{BETHRUSKAI2002159}
M. B. Ruskai, S. Szarek, and E. Werner. {\em Linear Algebra Appl.} {\bf 347}, 159 (2002).

\bibitem{Mirsky1975}
L.~Mirsky, {\em Monatsh. Math.} {\bf 79}, 303 (1975).

\bibitem{PITTENGER2000447}
A.~O. Pittenger and M.~H. Rubin, {\em Opt. Commun.} {\bf 179}, 447 (2000).

\end{thebibliography}
\end{document}